\newtheoremstyle{mystyle}%
{}{}%
{\itshape}%
{}%
{\scshape}
{.}%
{.5em}%
{\indent\thmname{#1}\thmnumber{ #2}\thmnote{ (#3)}}
\declaretheorem[style=mystyle]{lemma}
\declaretheorem[style=mystyle]{theorem}
\declaretheorem[style=mystyle]{example}
\declaretheorem[style=mystyle]{remark}
\declaretheorem[style=mystyle]{proposition}
\declaretheorem[style=mystyle]{definition}
\declaretheorem[style=mystyle]{corollary}
\newcommand{\EXPTIME}{\texttt{EXPTIME}}
\newcommand{\EXPSPACE}{\texttt{EXPSPACE}}
\newcommand{\ATLS}{ATL$^*$}
\newcommand{\CTLS}{CTL$^*$}
\newcommand{\SL}{SL}
\newcommand{\LTL}{LTL}
\newcommand{\HyperSL}{HyperSL}
\newcommand{\HyperSLP}{HyperSL[SPE]}
\newcommand{\HyperATLS}{HyperATL$^*$}
\newcommand{\HyperATLSS}{HyperATL$^*_S$}
\newcommand{\HyperLTL}{HyperLTL}
\newcommand{\HyperCTLS}{HyperCTL$^*$}
\newcommand{\SLI}{SL$_{\mathit{ii}}$}
\newcommand{\tool}{\texttt{HyMASMC}}
\newcommand{\mcmas}{\texttt{MCMAS}}
\newcommand{\mcmassl}{\texttt{MCMAS-SL[1G]}}
\newcommand{\bE}{{\boldsymbol{e}}}
\newcommand{\bO}{{\boldsymbol{o}}}
\newcommand{\bF}{{\vec{f}}}
\newcommand{\bA}{{\vec{a}}}
\newcommand{\bS}{{\boldsymbol{s}}}
\newcommand{\bT}{{\vec{t}}}
\newcommand{\bL}{{\boldsymbol{l}}}
\newcommand{\bQ}{{\boldsymbol{q}}}
\newcommand{\bX}{{\vec{x}}}
\newcommand{\stratVars}{\mathcal{X}}
\DeclareMathOperator{\ltlN}{\normalfont\textsf{X}}
\DeclareMathOperator{\ltlG}{\normalfont\textsf{G}}
\DeclareMathOperator{\ltlF}{\normalfont\textsf{F}}
\DeclareMathOperator{\ltlU}{\normalfont\textsf{U}}
\DeclareMathOperator{\ltlW}{\normalfont\textsf{W}}
\newcommand{\ap}{\mathit{AP}}
\newcommand{\pathVars}{\mathcal{V}}
\newcommand{\ldot}{\mathpunct{.}}
\newcommand{\quant}{\mathds{Q}}
\newcommand{\nat}{\mathbb{N}}
\newcommand{\bool}{\mathbb{B}}
\newcommand{\calG}{\mathcal{G}}
\newcommand{\calA}{\mathcal{A}}
\newcommand{\calB}{\mathcal{B}}
\newcommand{\calL}{\mathcal{L}}
\newcommand{\calO}{\mathcal{O}}
\newcommand{\calP}{\mathcal{P}}
\newcommand{\obs}{\mathit{Obs}}
\newcommand{\agents}{\mathit{{Agts}}}
\newcommand{\moves}{\mathbb{A}}
\newcommand{\play}[0]{\mathit{Play}}
\newcommand{\strats}[1]{\mathit{Str}(#1)}
\newcommand{\posController}[1]{\mathit{PosStr}(#1)}
\newcommand{\skoController}[1]{\mathit{SkoStr}(#1)}
\DeclareFontFamily{OMX}{MnSymbolE}{}
\DeclareSymbolFont{MnLargeSymbols}{OMX}{MnSymbolE}{m}{n}
\DeclareFontShape{OMX}{MnSymbolE}{m}{n}{
	<-6>  MnSymbolE5
	<6-7>  MnSymbolE6
	<7-8>  MnSymbolE7
	<8-9>  MnSymbolE8
	<9-10> MnSymbolE9
	<10-12> MnSymbolE10
	<12->   MnSymbolE12
}{}
\DeclareFontShape{OMX}{MnSymbolE}{b}{n}{
	<-6>  MnSymbolE-Bold5
	<6-7>  MnSymbolE-Bold6
	<7-8>  MnSymbolE-Bold7
	<8-9>  MnSymbolE-Bold8
	<9-10> MnSymbolE-Bold9
	<10-12> MnSymbolE-Bold10
	<12->   MnSymbolE-Bold12
}{}
\let\llangle\@undefined
\let\rrangle\@undefined
\DeclareMathDelimiter{\llangle}{\mathopen}%
{MnLargeSymbols}{'164}{MnLargeSymbols}{'164}
\DeclareMathDelimiter{\rrangle}{\mathclose}%
{MnLargeSymbols}{'171}{MnLargeSymbols}{'171}
\DeclareRobustCommand\bigop[1]{%
	\mathop{\vphantom{\sum}\mathpalette\bigop@{#1}}\slimits@
}
\newcommand{\bigop@}[2]{%
	\vcenter{%
		\sbox\z@{$#1\sum$}%
		\hbox{\resizebox{\ifx#1\displaystyle.9\fi\dimexpr\ht\z@+\dp\z@}{!}{$\m@th#2$}}%
	}%
}
\newcommand{\superimpose}[2]{{%
		\ooalign{%
			\hfil$\m@th#1\@firstoftwo#2$\hfil\cr
			\hfil$\m@th#1\@secondoftwo#2$\hfil\cr
		}%
}}
\newcommand{\landlor}{\mathrlap{\vee}\wedge}
\newcommand{\biglandlor}{\DOTSB\bigop{\landlor}}
\newcommand\ScaleExists[1]{\vcenter{\hbox{\scalefont{#1}$\exists$}}}
\newcommand\ScaleForall[1]{\vcenter{\hbox{\scalefont{#1}$\forall$}}}
\DeclareMathOperator*\bigexists{%
	\vphantom\sum
	\mathchoice{\ScaleExists{1.7}}{\ScaleExists{1.4}}{\ScaleExists{1}}{\ScaleExists{0.75}}}
\DeclareMathOperator*\bigforall{%
	\vphantom\sum
	\mathchoice{\ScaleForall{1.7}}{\ScaleForall{1.4}}{\ScaleForall{1}}{\ScaleForall{0.75}}}
\definecolor{mydarkblue}{HTML}{254E94}
\definecolor{mydarkblued}{HTML}{142B52}
\definecolor{mydarkred}{HTML}{7D3511}
\definecolor{mydarkredd}{HTML}{59260C}
\definecolor{mydarkviolet}{HTML}{4A3078}
\definecolor{mydarkgreen}{HTML}{104A19}
\definecolor{mydarkgreend}{HTML}{104A19}
\definecolor{mydarkyellow}{HTML}{87660C}
\definecolor{mydarkb}{HTML}{027d6d}
\newcommand{\agent}[1]{{\color{mydarkviolet}#1}}
\newcommand{\psiSLI}{{\psi}}
\newcommand{\varphiSLI}{{\varphi}}
\newcommand{\psiH}{{\psi}}
\newcommand{\varphiH}{{\varphi}}
\newcommand{\psiSL}{{\psi}}
\newcommand{\varphiSL}{{\varphi}}
\newcommand{\psiATL}{{\psi}}
\newcommand{\varphiATL}{{\varphi}}
\newcommand{\slToHyper}[1]{{\llparenthesis} #1 {\rrparenthesis}}
\newcommand{\slIToHyper}[1]{{\llparenthesis} #1 {\rrparenthesis}}
\newcommand{\AtlToHyper}[1]{{\llparenthesis} #1 {\rrparenthesis}}
\definecolor{dkcyan}{rgb}{0.1, 0.3, 0.3}
\definecolor{dkgreen}{rgb}{0,0.3,0}
\definecolor{olive}{rgb}{0.5, 0.5, 0.0}
\definecolor{dkblue}{rgb}{0,0.1,0.5}
\definecolor{col:ln}{rgb}  {0.1, 0.1, 0.7}
\definecolor{col:str}{rgb} {0.8, 0.0, 0.0}
\definecolor{col:db}{rgb}  {0.9, 0.5, 0.0}
\definecolor{col:ours}{rgb}{0.0, 0.7, 0.0}
\definecolor{lightgreen}{RGB}{170, 255, 220}
\definecolor{darkbrown}{RGB}{121,37,0}
\colorlet{listing-comment}{gray}
\colorlet{operator-color}{darkbrown}
\colorlet{comment-color}{black!50}
\lstdefinelanguage{custom-lang}{
	keywords={let, rec, in, match, with, when, if, then, else, elif, for, to, do, return, from, where, and, def},
	keywordstyle=[1]\bfseries,
	morekeywords=[2]{modelCheck},
	keywordstyle=[2]\color{dkgreen},
	morekeywords=[3]{simulate,nestedStateFormulas,LTLtoAPA,constructParityGame,solveParityGame,modelCheckO, toDPA},
	keywordstyle=[3]\color{dkcyan},
	comment=[l][\color{comment-color}]{//},
	literate=%
	{=}{$\boldsymbol{=}$}1
	{<-}{{{\color{operator-color}$\leftarrow$}}}1
	{|}{{{\color{dkblue}|}}}1
	{:}{$\boldsymbol{:}$}1
	{@}{ }1
}
\lstdefinestyle{default}{
	escapeinside={(*}{*)},
	basicstyle=\ttfamily\fontsize{8}{9.6}\selectfont,
	columns=fullflexible,
	commentstyle=\sffamily\color{black!50!white},
	framexleftmargin=1em,
	framexrightmargin=1ex,
	keepspaces=true,
	keywordstyle=\color{dkblue},
	mathescape,
	numbers=left,
	numberblanklines=false,
	numbersep=0.5em,
	numberstyle=\relscale{0.75}\color{gray}\ttfamily,
	showstringspaces=true,
	stepnumber=1,
	xleftmargin=1.2em,
}
\newcommand\xqed[1]{%
	\leavevmode\unskip\penalty9999 \hbox{}\nobreak\hfill
	\quad\hbox{#1}}
\newcommand\demo{\xqed{$\triangle$}}
\newif\iffullversion
\newcommand{\ifFull}[2]{\iffullversion#1\else#2\fi}
\title[Hyper Strategy Logic]{Hyper Strategy Logic}
\author{Raven Beutner}
\affiliation{
  \institution{CISPA Helmholtz Center for\\Information Security}
  \country{Germany}
}
\author{Bernd Finkbeiner}
\affiliation{
  \institution{CISPA Helmholtz Center for\\Information Security}
  \country{Germany}}
\begin{abstract}
	Strategy logic (\SL) is a powerful temporal logic that enables strategic reasoning in multi-agent systems.
	\SL{} supports explicit (first-order) quantification over strategies and provides a logical framework to express many important properties such as Nash equilibria, dominant strategies, etc.
	While in \SL{} the same strategy can be used in multiple strategy profiles, each such profile is evaluated w.r.t.~a path-property, i.e., a property that considers the \emph{single} path resulting from a particular strategic interaction. 
	In this paper, we present Hyper Strategy Logic (\HyperSL{}), a strategy logic where the outcome of multiple strategy profiles can be compared w.r.t.~a \emph{hyperproperty}, i.e., a property that relates \emph{multiple} paths. 
	We show that \HyperSL{} can capture important properties that cannot be expressed in \SL{}, including non-interference, quantitative Nash equilibria, optimal adversarial planning, and reasoning under imperfect information.
	On the algorithmic side, we identify an expressive fragment of \HyperSL{} with decidable model checking and present a model-checking algorithm.
	We contribute a prototype implementation of our algorithm and report on encouraging experimental results.
\end{abstract}
\keywords{Strategy Logic, Hyperproperties, Model Checking, Imperfect Information, Nash Equilibrium, Information-Flow Cotrol}
\newcommand{\BibTeX}{\rm B\kern-.05em{\sc i\kern-.025em b}\kern-.08em\TeX}
\gdef\@copyrightpermission{
	\begin{minipage}{0.3\columnwidth}
		\href{https://creativecommons.org/licenses/by/4.0/}{\includegraphics[width=0.90\textwidth]{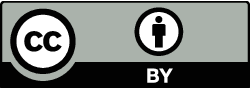}}
	\end{minipage}\hfill
	\begin{minipage}{0.7\columnwidth}
		\href{https://creativecommons.org/licenses/by/4.0/}{This work is licensed under a Creative Commons Attribution International 4.0 License.}
	\end{minipage}
	\vspace{5pt}
}
\begin{document}

\pagestyle{fancy}
\fancyhead{}

\maketitle

\section{Introduction}\label{sec:intro}

Two important developments in the area of reactive systems concern the study of strategic properties in multi-agent systems (MAS) and the study of hyperproperties. 
\emph{Strategic properties} analyze the ability of agents to achieve a goal against (or in cooperation) with other agents.
Logics such as alternating-time temporal logic (\ATLS{}) \cite{AlurHK02} and strategy logic (\SL{}) \cite{ChatterjeeHP10,MogaveroMPV14} reason about the temporal interaction of such agents and allow for rigorous correctness guarantees using techniques such as model-checking.
\emph{Hyperproperties} \cite{ClarksonS08} are properties that relate \emph{multiple} executions within a system. 
Hyperproperties occur in many situations in computer science where traditional path properties (that refer to \emph{individual} system execution) are not sufficient. 
Typical examples include \textbf{(1)} \emph{optimality}, e.g., one path reaching a goal faster than all other paths; \textbf{(2)} \emph{information-flow policies}, e.g., requiring that any two paths with identical low-security input should produce the same low-security output \cite{McCullough88}; and \textbf{(3)} \emph{robustness}, i.e., stating that similar inputs should lead to similar outputs \cite{ChaudhuriGL12}.

Such hyperproperties are also of vital importance in MASs. 
For example, we might ask if some agent has a strategy to achieve a goal without leaking information (an information-flow property) or can achieve a goal faster than some other agent (an optimality requirement).
Yet existing logics for strategic reasoning (such as variants of \SL{} \cite{MogaveroMPV14,ChatterjeeHP10}) cannot express such hyper-requirements (we discuss related approaches in \Cref{sec:relatedWork}).
We illustrate this on the example of Nash equilibria:

Assume we are given a MAS with agents $\{\agent{1}, \ldots, \agent{n}\}$ and LTL properties $\psi_\agent{1}, \ldots, \psi_\agent{n}$ that describe the objectives of the agents.
Agent $\agent{i}$ wants to make sure that $\ltlF \psi_\agent{i}$ holds, i.e., formula $\psi_\agent{i}$ \emph{eventually} holds.
We want to check whether the system admits a Nash equilibrium, i.e., there exists a strategy for each agent such that no agent has an incentive to deviate in order to fulfill her objective \cite{nash1950equilibrium}.
In \SL{}, we can express the existence of a Nash equilibrium as follows:
	\begin{align*}
		\exists x_1, \ldots, x_n\ldot \forall y \ldot \bigwedge_{\agent{i} = 1}^n  &\Big((\ltlF \psi_\agent{i})(\vec{x}[\agent{i} \mapsto y]) \rightarrow (\ltlF \psi_\agent{i})(\vec{x}) \Big)
	\end{align*}
where we abbreviate the strategy profiles $\vec{x} = (x_1, \ldots, x_n)$ and $\vec{x}[\agent{i} \mapsto y] = (x_1, \ldots, x_{i-1}, y, x_{i+1}, \ldots,  x_n)$.
In the variant \SL{} we consider here (similar to the \SL{} by Chatterjee et al.~\cite{ChatterjeeHP10}), atomic formulas have the form $\psi(\bX)$ where $\psi$ is an LTL formula, and $\bX$ is a strategy profile that assigns a strategy to each agent.
Formula $\psi(\vec{x})$ holds if the unique path that results from the interaction of the strategies in $\vec{x}$ satisfies $\psi$.
The above formula thus states that if some agent $\agent{i}$ can achieve $\ltlF \psi_i$  by playing some deviating strategy $y$ instead of $x_i$, i.e., the unique play that results from strategy profile $\vec{x}[\agent{i} \mapsto y]$ satisfies $\ltlF \psi_i$, then $\agent{i}$ could stick with strategy $x_i$, i.e., $\ltlF \psi_i$ also holds under strategy profile $\vec{x}$.

In the formula, we effectively compare two plays under strategy profiles $\vec{x}$ and $\vec{x}[\agent{i} \mapsto y]$.
However, \SL{} limits the comparison between multiple interactions to a boolean combination of LTL properties on their outcomes (paths).
In  game-theoretic terms, the above formula assumes that the reward for each agent is binary; the reward of agent $\agent{i}$ is maximal if $\ltlF \psi_\agent{i}$ holds and minimal if it does not. 
This fails to capture quantitative reward, for example, in a setting where agent $\agent{i}$ receives a higher reward (and thus deviates) by fulfilling $\psi_\agent{i}$ \emph{sooner}.
To express the existence of such a quantitative equilibrium, a  boolean formula over individual temporal properties on strategy profiles $\vec{x}$ and $\vec{x}[\agent{i} \mapsto y]$ is not sufficient. 
We need a more powerful mechanism that can compare the temporal behavior of \emph{multiple} paths: a \emph{hyperproperty}.

\paragraph{\HyperSL{}}

In this paper, we propose \HyperSL{} -- a new temporal logic that combines first-order strategic reasoning (as in \SL{}) with the ability to compare \emph{multiple} paths w.r.t.~a hyperproperty. 
Syntactically, we use path variables to refer to multiple paths at the same time (similar to existing hyperlogics such as \HyperCTLS{} \cite{ClarksonFKMRS14} and \HyperATLS{} \cite{BeutnerF21,BeutnerF23}).
In \HyperSL{}, atomic formulas have the form $\psi[\pi_1 : \bX_1, \ldots, \pi_m : \bX_m]$ where $\pi_1, \ldots, \pi_m$ are path variables, $\bX_1, \ldots, \bX_m$ are strategy profiles (assigning a strategy to each agent), and $\psi$ is an LTL formula where atomic propositions are indexed by path variables from $\pi_1, \ldots, \pi_m$.
The formula states that the plays resulting from strategy profiles $\bX_1, \ldots, \bX_m$, when bound to $\pi_1, \ldots, \pi_m$, (together) satisfy the hyperproperty expressed by $\psi$. 

Coming back to the Nash equilibrium example from before, we can use \HyperSL{} to express the existence of a Nash equilibrium in a quantitative reward setting as follows:
		\begin{align*}
			\exists x_1, \ldots, x_n\ldot \forall y\ldot \bigwedge_{\agent{i} = 1}^n &\Big(   \big( \neg {\psi_\agent{i}}_{\pi_1} \ltlW {\psi_\agent{i}}_{\pi_2} \big) \left[\begin{aligned}
				\pi_1 &: \bX[\agent{i} \mapsto y]\\[-1mm]
				\pi_2 &: \bX
			\end{aligned}\right]\Big)
		\end{align*}
Here, we write ${\psi_\agent{i}}_{\pi_1}$ (resp.~${\psi_\agent{i}}_{\pi_2}$) to state that $\psi_\agent{i}$ holds on path $\pi_1$ (resp.~$\pi_2$).
In the formula, we again quantify over a deviating strategy $y$, but can compare the two paths resulting from strategy profiles $\bX[\agent{i} \mapsto y]$ and $\bX$ within the \emph{same} temporal formula. 
This formula states that path $\pi_1$ (constructed using strategy profile $\bX[\agent{i} \mapsto y]$) does not satisfy $\psi_\agent{i}$ strictly before $\psi_\agent{i}$ holds on path $\pi_2$ (constructed using strategy profile $\bX$).\footnote{We make use of LTL's weak until operator $\ltlW$. Formula $\psi_1 \ltlW \psi_2$ holds if $\psi_1$ holds until $\psi_2$ holds eventually \emph{or} $\psi_1$ holds at all times.  }
If the above formula holds, $\bX$ thus constitutes a strategy profile such that no agent could achieve its goal strictly sooner (if at all). %

Note that we can express any Nash equilibrium as long as ``agent $\agent{i}$ (strictly) prefers the outcome on path $\pi_1$ over that on path $\pi_2$'' is expressible using an LTL formula over $\pi_1, \pi_2$.
Likewise, \HyperSL{} can, e.g., express that some strategy \textbf{(1)} reaches a goal without leaking information, \textbf{(2)} is at least as  fast as any other strategy, or \textbf{(3)} is robust w.r.t. the behavior of other agents.

\paragraph{Expressiveness of \HyperSL{}}

After we introduce \HyperSL{} (in \Cref{sec:hypersl}), we study its relation to existing logics (in \Cref{sec:relationToLogics}). 
We show that \HyperSL{} subsumes many non-hyper strategy logics as well as hyperlogics such as \HyperCTLS{} \cite{ClarksonFKMRS14}, \HyperATLS{} \cite{BeutnerF21,BeutnerF23}, and \HyperATLSS{} \cite{BeutnerF24} (see \Cref{sec:relatedWork}). 
Moreover, \HyperSL{} also admits reasoning under imperfect information despite having a semantics defined under complete information. 
The key observation here is that ``acting under imperfect information'' \emph{is a hyperproperty}: a strategy acts under imperfect information if, on any \emph{pair} of paths with the same observation, the strategy chooses the same action.
Formally, we show that \HyperSL{} subsumes \SLI{} \cite{BerthonMMRV17,BerthonMMRV21}, a strategy logic centered around imperfect information.

\paragraph{Model Checking}

\HyperSL{}'s ability to compare multiple strategic interactions renders model-checking (MC) undecidable. 
In \Cref{sec:decideable}, we identify a fragment of our logic -- called \HyperSLP{} -- for which MC is possible.
Intuitively, in \HyperSLP{}, the quantifier prefix should be such that we can group it into individual ``blocks'' where the strategy variables from each block are used on independent path variables. 
\HyperSLP{} subsumes \SL[1G] (the single-goal fragment of \SL{}) \cite{MogaveroMS13}, \HyperLTL{} \cite{ClarksonFKMRS14}, \HyperATLS{} \cite{BeutnerF21,BeutnerF23}, and \HyperATLSS{} \cite{BeutnerF24}, but also captures properties that cannot be expressed in existing logics.
We argue that \HyperSLP{} is the largest fragment with a decidable model-checking problem that is defined purely in terms of the quantification structure.

\paragraph{Implementation and Experiments}

We implement our MC algorithm for \HyperSLP{} in the \tool{} tool \cite{BeutnerF24} and experiment with various MAS models (in \Cref{sec:imp}).
Our experiments show that \tool{} performs well on many \emph{non}-hyper strategy logic specifications and can verify complex hyperproperties that cannot be expressed in any existing logic.

\section{Related Work}\label{sec:relatedWork}

\SL{} has been extended along multiple dimensions, including agent-unbinding \cite{LaroussinieM15}, reasoning about probabilities \cite{AminofKMMR19}, epistemic properties \cite{BelardinelliKLM21,MaubertM18,BelardinelliLMR17}, and quantitative properties \cite{BouyerKMMMP19}.
We refer to \cite{MogaveroMPV14,PaulyP03a} for a more in-depth discussion. 
The common thread in all the previous extensions is a focus on the temporal behavior on \emph{individual} paths.
\HyperSL{} generalizes \SL{} and is the first to compare \emph{multiple} paths.
Even quantitative extensions like SL[$\mathcal{F}$] \cite{BouyerKMMMP19} evaluate an \LTL[$\mathcal{F}$]-formula on a \emph{per-path} basis.
In contrast, \HyperSL{} can express complex relationships \emph{between} paths.

Studying logics that can express strategic properties under \emph{imperfect information} has attracted much attention and led to various extensions of \ATLS{} \cite{DimaT11,BullingJ14,LaroussinieMS15,BelardinelliLMR17,BerthonMM17} and \SL{}  \cite{BerthonMMRV17,abs-1908-02488}.
Berthon et al.~\cite{BerthonMMRV17} showed that their logic, \SLI{}, subsumes most existing approaches. 
We show that \HyperSL{} can also reason about imperfect information (and subsumes \SLI{}) despite having a semantics that is defined under full information.

Logics for expressing hyperproperties in non-agent-based systems (e.g., labeled transition systems) have been obtained by extending existing temporal or first-order logics with explicit path quantification over path/trace variables or an equal-level predicate \cite{ClarksonFKMRS14,Finkbeiner017,GutsfeldMO20,CoenenFHH19,BeutnerF22}.
As strategic reasoning is significantly more powerful than pure path quantification, \HyperSL{} subsumes \HyperCTLS{} (when interpreting transition systems as single-agent MASs).
\HyperATLS{} \cite{BeutnerF21,BeutnerF23} and \HyperATLSS{} \cite{BeutnerF24} extend alternating-time temporal logic (\ATLS{}) \cite{AlurHK02} with path variables and strategy-sharing constraints, leading to a strategic hyperlogic that can express important security properties such as non-deducibility of strategies \cite{WittboldJ90} and simulation security \cite{Sabelfeld03}.
Similar to \ATLS{}, the strategic reasoning in \HyperATLS{} and \HyperATLSS{} is limited to \emph{implicit} reasoning about the strategic ability of coalitions of agents and cannot explicitly reason about strategies as, e.g., needed to express the existence of a Nash equilibrium.

Our model-checking algorithm for \HyperSLP{} is based on an iterative elimination of path (variables) in an automaton, similar to existing algorithms for \HyperCTLS{} \cite{FinkbeinerRS15} and \HyperATLS{} \cite{BeutnerF23,BeutnerF24}. 
Compared to \HyperATLS{}, we need to eliminate paths by simulating an \emph{arbitrary} prefix of strategy quantifiers, leading to a more involved construction and more complex correctness proof.

\section{Preliminaries}\label{sec:prelim}
We let $\ap$ be a fixed finite set of atomic propositions and fix a fixed finite set of agents $\agents = \{\agent{1}, \ldots, \agent{n}\}$.
Given a set $X$, we write $X^+$ (resp.~$X^\omega$) for the set of non-empty finite (resp.~ infinite) sequences over $X$.
For $u \in X^\omega$ and $j \in \nat$, we write $x(j)$ for the $i$th element, $u[0,j]$ for the finite prefix up to position $j$ (of length $j + 1$), and $u[j, \infty]$ for the infinite suffix starting at position $j$.

\paragraph{Concurrent Game Structures}

As the underlying model of MASs, we use concurrent game structures (CGS) \cite{AlurHK02}.
A CGS is a tuple $\calG = (S, s_0, \moves, \kappa, L)$ where $S$ is a finite set of states, $s_0 \in S$ is an initial state, $\moves$ is a finite set of actions, $\kappa : S \times (\agents \to \moves)\to S$ is a transition function, and $L : S \to 2^\ap$ is a labeling function.
The transition function takes a state $s$ and an \emph{action profile} $\vec{a} : \agents \to \moves$ (mapping each agent an action) and returns a unique successor state $\kappa(s, \vec{a})$.
We write $\prod_{\agent{i} \in \agents} a_\agent{i}$ for the action profile where each agent $\agent{i}$ is assigned action $a_\agent{i}$.

A strategy in $\calG$ is a function $f : S^+ \to \moves$, mapping finite plays to actions.
We denote the set of all strategies in $\calG$ with $\strats{\calG}$.
A \emph{strategy profile} $\prod_{\agent{i} \in \agents} f_\agent{i}$ assigns each agent $\agent{i}$ a strategy $f_\agent{i} \in \strats{\calG}$.
Given strategy profile $\prod_{\agent{i} \in \agents} f_\agent{i}$ and state $s \in S$, we can define the unique path resulting from the interaction between the agents:
We define $\play_\calG(s, \prod_{\agent{i} \in \agents} f_\agent{i})$ as the unique path $p \in S^\omega$ such that $p(0) = s$ and for every $j \in \nat$ we have $p(j+1) = \kappa\big(p(j), \prod_{\agent{i} \in \agents} f_\agent{i}(p[0,j]) \big)$.
That is, in every step, we construct the action profile $\prod_{\agent{i} \in \agents} f_\agent{i}(p[0,j])$ in which each agent $\agent{i}$ plays the action determined by $f_\agent{i}$ on the current prefix $p[0,j]$.

\paragraph{Alternating Automata}

Our model-checking algorithm is based on alternating automata over infinite words.
These automata generalize nondeterministic automata  by alternating between nondeterministic and universal transitions \cite{Vardi95}.
For transitions of the former kind, we can choose \emph{some} successor state; for transitions of the latter type, we need to consider \emph{all} possible successor states. 
Formally, an \emph{alternating parity automaton (APA)} over alphabet $\Sigma$ is a tuple $\calA = (Q, q_0, \delta, c)$ where $Q$ is a finite set of states, $q_0 \in Q$ is an initial state, $c : Q \to \nat$ is a color assignment, and $\delta : Q \times \Sigma \to \bool^+(Q)$ is a transition function that maps each state-letter pair to a positive boolean formula over $Q$ (denoted with $\bool^+(Q)$).
For example, if $\delta(q, l) = q_1 \lor (q_2 \land q_3)$, we can -- from state $q \in Q$ and upon reading letter $l \in \Sigma$ -- either move to state $q_1$ or move to \emph{both} $q_2$ and $q_3$ (i.e., we spawn two copies of our automaton, one starting in state $q_2$ and one in $q_3$).
We write $\calL(\calA) \subseteq \Sigma^\omega$ for the set of all infinite words for which we can construct a run tree that respects the transition formulas such that the \emph{minimal} color that occurs infinitely many times (as given by $c$) is \emph{even}. 
For space reasons, we cannot give a formal semantics of APA runs and instead refer the reader to \ifFull{\Cref{app:prelim}}{the full version \cite{fullVersion}}. 
No specific knowledge about APAs is required to understand the high-level idea of our algorithm.

 A special kind of APAs are \emph{deterministic parity automata (DPA)} in which $\delta$ is a function $Q \times \Sigma \to Q$ assigning a unique successor state to each state-letter pair.
We can always  determinize APAs:
 
 \begin{proposition}[\cite{MiyanoH84,Piterman07}]\label{prop:toDPA}
 	For any APA $\calA$ with $n$ states, we can effectively compute a DPA $\calA'$ with at most $2^{2^{\calO(n)}}$ states such that $\calL(\calA) = \calL(\calA')$.
 \end{proposition}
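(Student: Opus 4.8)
The plan is to factor the conversion into two single-exponential steps: first remove alternation, turning $\calA$ into an equivalent \emph{nondeterministic} parity automaton, and then determinize the latter. Since each step incurs at most a single-exponential blowup, their composition yields the claimed $2^{2^{\calO(n)}}$ bound. Both cited constructions are effective, so the overall procedure will be effective as well.

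For the first step I would transform the APA $\calA$ (with state set $Q$, $|Q| = n$) into a nondeterministic parity automaton $\calB$ with $2^{\calO(n)}$ states and $\calL(\calB) = \calL(\calA)$. The natural vehicle is the breakpoint (subset) construction of Miyano and Hayashi~\cite{MiyanoH84}, whose states are built from subsets of $Q$ and therefore number at most $2^{\calO(n)}$: a run of $\calB$ guesses, letter by letter, a choice that resolves each disjunction in the transition formulas of $\calA$, while universal branching is captured by carrying along the whole set of currently active states. The subtle point --- and what I expect to be the main obstacle --- is that Miyano--Hayashi is tailored to the B\"uchi (co-B\"uchi) acceptance condition, whereas $\calA$ carries a general parity condition. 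To handle the full parity condition one has to track, in addition to the active state set, enough information about the colors seen along the universal branches. Concretely, I would either reduce acceptance of a word to a parity game played on $\calA$ and exploit memoryless determinacy to read off a positional, hence nondeterministic, strategy automaton, or augment the breakpoint construction with an index-appearance record. Either refinement keeps the state count at $2^{\calO(n)}$ (up to a factor polynomial in the number of colors, which is $\le n$ without loss of generality), so the first step remains single-exponential.

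For the second step I would apply Piterman's determinization~\cite{Piterman07} to the nondeterministic parity automaton $\calB$, obtaining a DPA $\calA'$ with $\calL(\calA') = \calL(\calB)$ and at most $2^{\calO(m \log m)}$ states, where $m = 2^{\calO(n)}$ is the size of $\calB$. A direct computation gives $2^{\calO(m \log m)} = 2^{\calO(2^{\calO(n)} \cdot n)} = 2^{2^{\calO(n)}}$, using $n \cdot 2^{\calO(n)} = 2^{\calO(n)}$. Chaining the two language equalities yields $\calL(\calA') = \calL(\calB) = \calL(\calA)$, and the state count matches exactly the doubly-exponential bound claimed, which completes the argument.
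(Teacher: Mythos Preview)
Your proposal is correct and follows exactly the route signaled by the paper's two citations: the paper does not give its own proof but simply invokes Miyano--Hayashi for the alternation-removal step and Piterman for the subsequent determinization, which is precisely the two-step single-exponential composition you outline. Your discussion of how to adapt the breakpoint construction from B\"uchi to parity (via memoryless determinacy of parity games or an index-appearance record) is a standard refinement and keeps the bound intact.
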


\section{Hyper Strategy Logic}\label{sec:hypersl}

Our new logic \HyperSL{} is centered around the idea of combining strategic reasoning (as possible in strategy logic \cite{ChatterjeeHP10,MogaveroMPV14}) with the ability to express hyperproperties (as possible in logics such as 
\HyperCTLS{} \cite{ClarksonFKMRS14}).
To accomplish this, we combine the ideas from both disciples. 
On the strategy-logic-side, we use strategy variables to quantify over strategies.
On the hyper-side, we use path variables to compare multiple paths within a temporal formula. 

Let $\stratVars$ be a set of \emph{strategy variables} and $\pathVars$ a set of \emph{path variables}.
We typically use lowercase letters ($x, y, z, x_1,  \ldots$) for strategy variables and variations of $\pi$ ($\pi, \pi', \pi_1, \ldots$) for path variables.
Path and state formulas in \HyperSL{} are generated by the following grammar:
\begin{align*}
	\psi &:= a_\pi \mid \varphi_\pi \mid \psi \land \psi \mid \neg \psi \mid \ltlN \psi \mid \psi \ltlU \psi \\
	\varphi &:= \forall x. \varphi \mid \exists x. \varphi \mid  \psi\big[\pi_1 : \bX_1, \ldots, \pi_m : \bX_m \big]
\end{align*}
where $a \in \ap$ is an atomic proposition, $\pi, \pi_1, \ldots, \pi_m \in \pathVars$ are path variables, $x \in \stratVars$ is a strategy variable, and $\bX_1, \ldots, \bX_m : \agents \to \stratVars$ are \emph{strategy profiles} that assign a strategy variable to each agent.
We often write $\psi[\pi_k : \bX_k ]_{k=1}^m$ as a shorthand for $\psi[\pi_1 : \bX_1, \ldots, \pi_m : \bX_m]$. 
We use $\quant$ as a placeholder for either $\forall$ or $\exists$. 
We use the standard Boolean connectives $\lor, \to, \leftrightarrow$, and constants $\top, \bot$, as well as the derived LTL operators \emph{eventually} $\ltlF \psi := \top \ltlU \psi$ and \emph{globally} $\ltlG \psi :=\neg \ltlF \neg \psi$.
For each formula $\psi[\pi_k : \bX_k ]_{k=1}^m$, we assume that all path variables that are free in $\psi$ belong to $\{\pi_1, \ldots, \pi_m\}$, i.e., all used path variables are bound to some strategy profile.
We further assume that all nested state formulas are closed.

Note that our syntax does not support boolean combinations of state formulas as is usual in \SL{} \cite{MogaveroMPV14}. 
As we can evaluate a path formula on multiple paths, we can move boolean combinations within the path formulas.

\begin{example}\label{ex:state-conjunctions}
	Consider the \SL{} formula $\exists x\ldot (\exists y. (\ltlF a)(x, y)) \land  (\forall z. \allowbreak (\ltlG b)(z, x))$, which can be expressed in \HyperSL{} as follows: $\exists x\ldot \exists y\ldot \forall y\ldot \allowbreak (\ltlF a_{\pi_1} \land \ltlG b_{\pi_2})[\pi_1 : (x, y), \pi_2 : (z, x)]$.\demo
\end{example}

\subsubsection*{Semantics}

We fix a game structure $\calG = (S, s_0, \moves, \kappa, L)$.
A \emph{strategy assignment} is a partial mapping $\Delta : \stratVars \rightharpoonup \strats{\calG}$.
We write $\{\}$ for the unique strategy assignment with an empty domain.
In \HyperSL{}, a path formula $\psi$ refers to propositions on multiple path variables. 
We evaluate it in the context of a \emph{path assignment} $\Pi : \pathVars \rightharpoonup S^\omega$ mapping path variables to paths (similar to the semantics of \HyperCTLS{} \cite{ClarksonFKMRS14}).
Given $j \in \nat$, we define $\Pi[j, \infty]$ as the shifted assignment defined by $\Pi[j, \infty](\pi) := \Pi(\pi)[j, \infty]$.
For a path formula $\psi$, we then define the semantics in the context of path assignment $\Pi$:
\begin{align*}
	\Pi &\models_\calG a_\pi &\text{iff } \quad&a \in L\big(\Pi(\pi)(0)\big)\\
	\Pi &\models_\calG\varphi_\pi &\text{iff } \quad&\Pi(\pi)(0), \{\} \models_\calG\varphi\\
	\Pi &\models_\calG \psi_1 \land \psi_2 &\text{iff } \quad &\Pi \models_\calG \psi_1 \text{ and } \Pi \models_\calG \psi_2\\
	\Pi &\models_\calG \neg \psi &\text{iff } \quad &\Pi \not\models_\calG \psi\\
	\Pi &\models_\calG \ltlN \psi &\text{iff } \quad &\Pi[1, \infty] \models_\calG \psi\\
	\Pi &\models_\calG \psi_1 \ltlU \psi_2 &\text{iff } \quad &\exists j \in \nat\ldot \Pi[j, \infty] \models_\calG \psi_2 \text{ and } \\
	&  \quad\quad\quad\quad\quad\quad\quad\quad\quad\quad \forall 0 \leq k < j\ldot \Pi[k, \infty] \models_\calG \psi_1 \span \span
\end{align*}
The semantics for path formulas synchronously steps through all paths in $\Pi$ and evaluate $a_\pi$ on the path bound to $\pi$.
State formulas are evaluated in a state $s \in S$ and strategy assignment $\Delta$ as follows:
\begin{align*}
	&s, \Delta \models_\calG \forall x \ldot\varphi  &&\text{iff}&&\forall f \in\strats{\calG} \ldot s, \Delta[x \mapsto f] \models_\calG \varphi\\
	&s, \Delta \models_\calG \exists x \ldot\varphi  &&\text{iff} &&\exists f \in\strats{\calG} \ldot s, \Delta[x \mapsto f] \models_\calG \varphi\\
	&s, \Delta \models_\calG \psi\big[\pi_k : \bX_k \big]_{k=1}^m &&\text{iff} &&\\
	& \quad && \hspace{-5mm}\Big[\pi_k\! \mapsto \! \play_\calG\Big(s,\! \prod_{\agent{i} \in \agents} \!\!\Delta(\bX_k(\agent{i})) \Big) \Big]_{k=1}^m \models_\calG \psi \span \span \span
\end{align*}
To resolve a formula $\psi\big[\pi_k : \bX_k \big]_{k=1}^m$, we construct $m$ paths (bound to $\pi_1, \ldots, \pi_m$), and evaluate $\psi$ in the resulting path assignment. 
The $k$th path (bound to $\pi_k$) is the play where each agent $\agent{i}$ plays strategy $\Delta(\bX_k(\agent{i}))$, i.e., the strategy currently bound to the strategy variable $\bX_k(\agent{i})$.
We write $\calG \models \varphi$ if $s_0, \{\} \models_\calG \varphi$, i.e., the initial state satisfies state formula $\varphi$.

\section{Expressiveness of \HyperSL{}}\label{sec:relationToLogics}

The ability to compare multiple paths within a temporal formula makes \HyperSL{} a powerful formalism that subsumes many existing logics.
We only briefly mention some connections to existing logics.
More details can be found in \ifFull{\Cref{app:relationToLogics}}{the full version \cite{fullVersion}}.

\subsection{\SL{} and \HyperSL{}}\label{sec:translation_sl_into_hypersl}

\HyperSL{} naturally subsumes many (non-hyper) strategy logics \cite{MogaveroMPV14,ChatterjeeHP10}, which evaluate temporal properties on \emph{individual} paths.  
We consider \SL{} formulas defined by the following grammar:
\begin{align*}
	\psiSL &:= a \mid \varphi \mid \neg \psiSL \mid \psiSL \land \psiSL\mid \ltlN \psiSL \mid \psiSL \ltlU \psiSL  \\
	\varphiSL &:= \psiSL \mid \varphiSL \land \varphiSL  \mid \varphiSL \lor \varphiSL \mid \forall x\ldot \varphiSL \mid \exists x\ldot \varphiSL \mid (\agent{i}, x) \varphiSL
\end{align*}
where $a \in \ap$, $x \in \stratVars$, and $\agent{i} \in \agents$.
We assume that nested state formulas are closed.
In this \SL{}, we can quantify over strategies and \emph{bind} a strategy $x$ to agent $\agent{i}$ using $(\agent{i}, x)$; see \ifFull{\Cref{app:sl}}{the full version \cite{fullVersion}} for the full semantics. We can show the following:

\begin{restatable}{lemma}{reSlToHyper}\label{lem:sl-in-hypersl}
		For any \SL{} formula $\varphi$ there exists a \HyperSL{} formula $\varphi'$ such that for any CGS $\calG$, $\calG \models_{\text{SL}} \varphi$ iff $\calG \models \varphi'$.
\end{restatable}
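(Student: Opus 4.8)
The plan is to define, by structural induction on the \SL{} formula $\varphi$, a translation $\trans{\varphi}_\beta$ into \HyperSL{}, where $\beta : \agents \rightharpoonup \stratVars$ is a \emph{binding context} recording which strategy variable is currently bound to each agent -- exactly the information that \SL{}'s binding operator $(\agent{i}, x)$ accumulates as one descends through a formula. I would maintain the invariant that $\trans{\varphi}_\beta$ is always in prenex form, i.e.\ a block of strategy quantifiers followed by a single atomic \HyperSL{} formula $\psi[\pi_k : \bX_k]_{k=1}^m$. Before starting, I alpha-rename all bound strategy variables of $\varphi$ to be pairwise distinct, and I draw a globally fresh path variable at every atomic step, so that neither strategy- nor path-variable capture can occur during the prenexing below.

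The quantifier and binding cases are immediate: $\trans{\exists x \ldot \varphi}_\beta := \exists x \ldot \trans{\varphi}_\beta$ (likewise for $\forall$), and $\trans{(\agent{i}, x)\varphi}_\beta := \trans{\varphi}_{\beta[\agent{i} \mapsto x]}$, which simply threads the new binding into the context without emitting anything. The base case is an \LTL{} formula $\psi$; here $\beta$ is total (the standard sentence/well-formedness condition of \SL{}, ensuring every agent has a bound strategy before a temporal formula is reached), and I set $\trans{\psi}_\beta := \widehat\psi[\pi : \bX_\beta]$, where $\pi$ is fresh, $\bX_\beta$ is $\beta$ viewed as a profile $\agents \to \stratVars$, and $\widehat\psi$ is $\psi$ with every atom $a$ replaced by $a_\pi$ and every nested (closed) state subformula $\varphi_0$ replaced by $(\trans{\varphi_0}_\emptyset)_\pi$. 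The crucial case is the boolean one, which resolves the mismatch that \HyperSL{} has no boolean combinations at the state level (cf.\ \Cref{ex:state-conjunctions}): given $\trans{\varphi_1}_\beta = \vec{Q}_1 \ldot \psi_1[\ldots]$ and $\trans{\varphi_2}_\beta = \vec{Q}_2 \ldot \psi_2[\ldots]$ in prenex form, I merge them into $\vec{Q}_1 \vec{Q}_2 \ldot (\psi_1 \land \psi_2)[\ldots]$ for $\land$ (and with $\lor$ for $\lor$), where the atomic body combines the two path formulas over the disjoint union of their path-variable-to-profile lists. Because all strategy variables are distinct and the path variables of the two conjuncts are disjoint, lifting $\vec{Q}_1, \vec{Q}_2$ to the front and interleaving them is sound by the usual first-order prenex laws, since each prefix's variables occur only in its own atomic body.

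Correctness is then a structural induction proving, for every state $s$, binding context $\beta$, and strategy assignment $\Delta$ interpreting the free strategy variables, that $s, \chi \models_{\text{SL}} \varphi$ iff $s, \Delta \models_\calG \trans{\varphi}_\beta$, where the \SL{} assignment $\chi$ agrees with $\Delta$ on variables and sets each agent's strategy to $\chi(\agent{i}) := \Delta(\beta(\agent{i}))$. The quantifier and binding cases follow from the definitions and the update of $\beta$; the boolean case follows from the inductive hypothesis together with the soundness of the prenex/merge step. In the base case, the profile $\prod_{\agent{i} \in \agents} \chi(\agent{i}) = \prod_{\agent{i} \in \agents} \Delta(\beta(\agent{i})) = \prod_{\agent{i} \in \agents} \Delta(\bX_\beta(\agent{i}))$ is exactly the profile \HyperSL{} uses to build the play bound to $\pi$, so the two $\play_\calG(s, \cdot)$ coincide; since $\widehat\psi$ indexes every atom by $\pi$, the \LTL{} evaluations agree, and each nested closed subformula $\varphi_0$, appearing as $(\trans{\varphi_0}_\emptyset)_\pi$, is evaluated at the matching state along the path, where the inner hypothesis applies -- the empty context being correct precisely because nested state formulas are closed, matching \HyperSL{}'s reset of the assignment to $\{\}$ when evaluating $\varphi_\pi$.

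I expect the main obstacle to be the boolean case: \SL{} freely nests $\land$ and $\lor$ over state formulas that share the binding context and outer-quantified variables, whereas \HyperSL{} admits neither state-level boolean connectives nor open nested state formulas. The real work is to show that, after the initial alpha-renaming and with fresh path variables, the prenexing that lifts all quantifiers to the front and fuses the several single-path bodies into one multi-path atomic formula is semantically faithful, i.e.\ that reordering and merging independent strategy quantifiers preserves truth. Everything else -- the quantifier, binding, and temporal-base cases -- is routine bookkeeping once the binding-context-to-assignment correspondence $\chi(\agent{i}) = \Delta(\beta(\agent{i}))$ is fixed.
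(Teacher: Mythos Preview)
Your proposal is correct and takes essentially the same approach as the paper: both translations maintain a binding context $\beta$ (the paper writes it $\bX$) that records the current strategy variable for each agent, thread it through quantifiers unchanged, update it at $(\agent{i},x)$, and emit a path-construction $[\pi:\bX_\beta]$ at each \LTL{} leaf. The only cosmetic difference is that the paper first produces an intermediate formula that still has boolean combinations of state formulas (using a single fixed path variable $\dot\pi$ everywhere) and then appeals to \Cref{ex:state-conjunctions} to push those booleans inside by introducing distinct path variables and prenexing --- exactly the step you carry out explicitly with fresh path variables and the merge $\vec Q_1\vec Q_2\ldot(\psi_1\land\psi_2)[\ldots]$. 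Your identification of this prenex/merge step as the only nontrivial point, and your justification via disjointness of strategy and path variables, matches the paper's (implicit) reasoning.
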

\begin{proof}[Proof Sketch]
	We use a unique path variable $\dot{\pi}$.
	 During translation, we track the current strategy (variable) for each agent and construct $\dot{\pi}$ using the resulting strategy profile.
\end{proof}

\begin{example}
	Consider the formula $\exists x\ldot \forall y\ldot (\agent{1}, x) (\agent{2}, y) (\agent{3}, y) \ltlG \ltlF a$.
	We can express this formula in \HyperSL{} as $\exists x\ldot \exists y\ldot \big(\ltlG \ltlF a_{\dot{\pi}}\big)[\dot{\pi} : (x, y, y)]$ where $(x, y, y)$ denotes the strategy profile mapping agent $\agent{1}$ to $x$, and agents $\agent{2}$ and $\agent{3}$ to $y$.	\demo
\end{example}

\subsection{\HyperATLS{} and \HyperSL{}}\label{sec:translation_hyperatl_into_hypersl}

Compared to \SL{}, \ATLS{} \cite{AlurHK02} offers a weaker (implicit) form of strategic reasoning.
The \ATLS{} formula $\llangle A \rrangle \psi$ expresses that the agents in $A \subseteq \agents$ have a joint strategy to ensure path formula $\psi$ \cite{AlurHK02}. 
\HyperATLS{} \cite{BeutnerF21,BeutnerF23} is an extension of \ATLS{} that can express hyperproperties, generated by the following grammar:
\begin{align*}
	\psiATL &:= a_\pi \mid \neg \psiATL \mid \psiATL \land \psiATL \mid \ltlN \psiATL \mid \psiATL \ltlU \psiATL \\
	\varphiATL &:=\llangle A \rrangle \pi\ldot \varphiATL \mid \llbracket A \rrbracket \pi\ldot \varphiATL \mid \psiATL
\end{align*}
where $a \in \ap$, $\pi \in \pathVars$, and $A \subseteq \agents$. 
Formula $\llangle A \rrangle \pi\ldot \varphiATL$ states that the agents in $A$ have a strategy such that any path under that strategy, when bound to path variable $\pi$, satisfies the remaining formula $\varphiATL$.
Likewise, $\llbracket A \rrbracket \pi\ldot \varphiATL$ states that, no matter what strategy the agents in $A$ play, some compatible path, when bound to $\pi$, satisfies $\varphiATL$. 
See \ifFull{\Cref{app:hyperatl}}{the full version \cite{fullVersion}} for the full \HyperATLS{} semantics.
We can show the following:

\begin{restatable}{lemma}{hyperatlsToHyper}\label{lem:hyperatl-in-hypersl}
	For any \HyperATLS{} formula $\varphi$ there exists a \HyperSL{} formula $\varphi'$ such that for any CGS $\calG$, $\calG \models_\text{\HyperATLS{}} \varphi$ iff $\calG \models \varphi'$.
\end{restatable}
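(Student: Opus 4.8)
The plan is to replace \HyperATLS{}'s \emph{implicit} coalition quantifiers by the \emph{explicit} strategy quantifiers of \HyperSL{}. The guiding observation is that a path bound to $\pi$ that is compatible with a fixed strategy of a coalition $A$ is exactly a play generated by a full strategy profile in which the agents of $A$ follow the fixed strategy and the remaining agents play arbitrarily. Hence ``the agents in $A$ have a strategy such that \emph{all} compatible paths satisfy $\varphi$'' becomes ``there exist strategies for the agents in $A$ such that \emph{for all} strategies of the agents in $\agents \setminus A$ the resulting play satisfies $\varphi$.''

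For the translation $\AtlToHyper{\cdot}$ I would, for each bound path variable $\pi$, introduce a fresh block of strategy variables $\{x^\pi_\agent{i} : \agent{i} \in \agents\}$ -- one per agent, since even the agents outside the coalition must be assigned a strategy to pin down a concrete play. I translate $\llangle A \rrangle \pi\ldot \varphi$ to the prefix $\exists (x^\pi_\agent{i})_{\agent{i} \in A}\ldot \forall (x^\pi_\agent{j})_{\agent{j} \notin A}\ldot \AtlToHyper{\varphi}$ and, dually, $\llbracket A \rrbracket \pi\ldot \varphi$ to $\forall (x^\pi_\agent{i})_{\agent{i} \in A}\ldot \exists (x^\pi_\agent{j})_{\agent{j} \notin A}\ldot \AtlToHyper{\varphi}$. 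Since the \HyperATLS{} grammar places all coalition quantifiers in a single prefix in front of an \LTL{} path formula $\psi$ over path variables $\pi_1, \ldots, \pi_m$, the base case translates $\psi$ to the single \HyperSL{} atom $\psi[\pi_k : \bX_k]_{k=1}^m$, where $\bX_k := (x^{\pi_k}_\agent{i})_{\agent{i} \in \agents}$ is the profile built from the fresh variables of $\pi_k$. The two adjacent blocks produced for one path variable preserve the order of the original quantifier, so the alternation of the \HyperATLS{} prefix is faithfully mirrored.

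Correctness follows by induction on the quantifier prefix, carrying the invariant that a \HyperATLS{} path assignment $\Pi$ and a \HyperSL{} strategy assignment $\Delta$ are \emph{aligned}: for every already-bound $\pi_k$ we have $\Pi(\pi_k) = \play_\calG\big(s_0, \prod_{\agent{i} \in \agents} \Delta(x^{\pi_k}_\agent{i})\big)$. The top-level call uses empty assignments, which are vacuously aligned. In the base case both logics evaluate the same \LTL{} formula on the same synchronous path assignment (the \HyperSL{} atom reconstructs exactly the paths recorded in $\Pi$), so they agree. In the inductive step for $\llangle A \rrangle \pi$, the existential \HyperSL{} block chooses the coalition strategy, and ranging the subsequent universal block over all strategies of $\agents \setminus A$ extends $\Pi$ by exactly the set of $A$-compatible paths, re-establishing the invariant; the case $\llbracket A \rrbracket \pi$ is dual.

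The crux is the coverage lemma underlying the inductive step: quantifying universally over the complement agents' strategies realizes \emph{precisely} the set of paths compatible with the fixed coalition strategy -- neither fewer nor more. One direction is immediate (any completion yields a compatible play); for the other, given a compatible path $p$ one defines, for each $\agent{j} \notin A$, a strategy that replays $p$'s action along the prefixes of $p$ (and is arbitrary elsewhere), so that $p$ itself is generated. Getting this correspondence between ``compatible paths'' and ``completing strategy profiles'' exactly right -- together with checking that the per-path blocks are inserted in the correct order so the explicit quantifier alternation matches the implicit one -- is where the real care is needed; the remaining temporal and Boolean cases are routine since the \LTL{} layer is shared verbatim.
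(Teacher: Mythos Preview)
Your proposal is correct and follows essentially the same approach as the paper: translate each $\llangle A\rrangle\pi$ (resp.\ $\llbracket A\rrbracket\pi$) into an $\exists^{|A|}\forall^{|\agents\setminus A|}$ (resp.\ $\forall^{|A|}\exists^{|\agents\setminus A|}$) block of fresh strategy variables, record the resulting profile for~$\pi$, and at the leaf emit the single atom $\psi[\pi_k:\bX_k]_{k=1}^m$. The paper threads an auxiliary map $\bL:\pathVars\rightharpoonup\stratVars^\agents$ through the recursion where you index the fresh variables by~$\pi$, but this is purely notational; your alignment invariant and coverage lemma spell out in more detail what the paper dismisses as ``an easy induction.''
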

\begin{proof}[Proof Sketch]
	Similar to the translation of \ATLS{} to \SL{} \cite{MogaveroMPV14,ChatterjeeHP10}, we translate each \HyperATLS{} quantifier $\llangle A \rrangle \pi$ (resp.~$\llbracket A \rrbracket \pi$) using existential (resp.~universal) quantification over fresh strategies for all agents in $A$, followed by universal (resp.~existential) quantification over strategies for agents in $\agents \setminus A$ and use these strategies to construct path $\pi$.
\end{proof}

\begin{example}
	Consider the \HyperATLS{} formula $\llangle \{\agent{1}, \agent{2}\} \rrangle \pi_1\ldot \allowbreak\llangle \{\agent{3}\} \rrangle \allowbreak \pi_2\ldot (a_{\pi_1} \ltlU b_{\pi_2})$.
	We can express this in \HyperSL{} as $\exists x_\agent{1}, x_{\agent{2}}\ldot \forall x_{\agent{3}}\ldot \exists y_{\agent{3}} \ldot \allowbreak \forall y_{\agent{1}}, y_{\agent{2}}\ldot \allowbreak(a_{\pi_1} \ltlU b_{\pi_2})\big[\pi_1: (x_{\agent{1}}, x_{\agent{2}}, x_{\agent{3}}), \pi_2: (y_{\agent{1}}, y_{\agent{2}}, y_{\agent{3}})\big]$. \demo
\end{example}

By \Cref{lem:hyperatl-in-hypersl}, \HyperSL{} thus captures the various security hyperproperties (such as non-deducibility of strategies \cite{WittboldJ90} and simulation security \cite{Sabelfeld03}) that can be expressed in \HyperATLS{} \cite{BeutnerF21}. 
We can extend \Cref{lem:hyperatl-in-hypersl} further to also capture the strategy sharing constraints found in \HyperATLSS{} \cite{BeutnerF24}. 

\begin{restatable}{lemma}{hyperatlssToHyper}\label{lem:hyperatls-in-hypersl}
	For any \HyperATLSS{} formula $\varphi$ there exists a \HyperSL{} formula $\varphi'$ such that for any CGS $\calG$, $\calG \models_\text{\HyperATLSS{}} \varphi$ iff $\calG \models \varphi'$.
\end{restatable}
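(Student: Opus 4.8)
The plan is to build directly on the translation from \Cref{lem:hyperatl-in-hypersl}. The key conceptual point is that \HyperSL{} provides \emph{explicit} first-order quantification over named strategy variables, whereas \HyperATLSS{} realizes strategy sharing through a more restricted syntactic mechanism that ties the strategy committed for (some of) the agents in one coalition quantifier to the strategy used by agents in another, differently-bound quantifier. In the non-sharing translation, each coalition quantifier $\llangle A \rrangle \pi$ is assembled from \emph{fresh} variables, so the profiles used to build different path variables are independent. Since a named strategy variable in \HyperSL{} may simply be referenced again inside a later strategy profile, any sharing constraint is captured by \emph{reusing} the corresponding variable across the relevant profiles rather than introducing a fresh one.

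Concretely, I would proceed by structural induction on the \HyperATLSS{} formula, carrying an environment that maps each live \HyperATLSS{} strategy to the \HyperSL{} strategy variable currently representing it. The path-formula layer ($a_\pi$, Boolean connectives, and the temporal operators $\ltlN, \ltlU$) is translated verbatim, exactly as in \Cref{lem:hyperatl-in-hypersl}. For a coalition quantifier $\llangle A \rrangle \pi$, I would introduce existential quantification over fresh variables for those agents in $A$ whose strategies are \emph{newly} committed, followed by universal quantification over fresh variables for the agents in $\agents \setminus A$ (dually for $\llbracket A \rrbracket \pi$), and then assemble the profile for $\pi$. The single modification needed is that, for any agent whose strategy is \emph{shared} (already committed by an ancestor quantifier), we do not allocate a fresh variable but instead retrieve the existing variable from the environment and place it directly into the profile for $\pi$.

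For correctness, I would extend the invariant of the base lemma: at every node of the induction the environment faithfully maps each live \HyperATLSS{} strategy to an in-scope \HyperSL{} variable bound to the same element of $\strats{\calG}$, and the profile used to construct $\pi$ coincides with the one dictated by the \HyperATLSS{} semantics. Given this invariant, the equivalence $\calG \models_\text{\HyperATLSS{}} \varphi$ iff $\calG \models \varphi'$ follows by the same path-assignment argument as before, since both logics ultimately evaluate the identical \LTL{} matrix over the identical collection of constructed paths.

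The main obstacle is ensuring that sharing interacts correctly with quantifier alternation and scoping. A shared strategy must already be in scope at the point of reuse, so the translation is sound only if \HyperATLSS{}'s sharing is \emph{well-scoped}, i.e., a strategy is referenced strictly after it is committed; I would confirm this follows from the syntactic restrictions of \HyperATLSS{}. In addition, the fresh universal variables introduced for $\agents \setminus A$ in each translated block must not shadow a shared variable, which is guaranteed by always choosing genuinely fresh names and by the environment tracking exactly which variables are reused. Once these bookkeeping conditions are settled, the remainder is a routine adaptation of the correctness proof of \Cref{lem:hyperatl-in-hypersl}.
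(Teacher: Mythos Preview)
Your proposal is correct and matches the paper's approach: the paper's proof is a one-sentence remark that, since the \HyperATLS{} translation already quantifies explicitly over a strategy variable for every agent, sharing constraints are handled by simply reusing the same variable for the agents that must share, which is exactly the mechanism you describe (your environment-based bookkeeping and scoping discussion just spell out details the paper leaves implicit).
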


Moreover, \HyperSL{} can express properties that go well beyond the strict $\exists\forall$ and $\forall\exists$ quantifier alternation found in \HyperATLS{} and \HyperATLSS{} (as, e.g., needed for Nash equilibria).

\subsection{Imperfect Information and \HyperSL{}}\label{sec:hyperslAndSLI}

In recent years, much effort has been made to study strategic behavior under \emph{imperfect information} \cite{BelardinelliLM19,BerthonMMRV17,abs-1908-02488,BelardinelliLMR17,BerthonMM17,DimaT11}. 
In such a setting, an agent acts strategically (i.e., decides on an action based on its past experience) but only observes parts of the overall system. 
Perhaps surprisingly, \HyperSL{} is expressive enough to allow reasoning under imperfect information despite having a semantics with complete information (cf.~\Cref{sec:hypersl}).
Concretely, we consider strategy logic under imperfect information (\SLI{}), an extension of SL with imperfect information \cite{BerthonMMRV17,BerthonMMRV21} defined as follows:
\begin{align*}
	\psiSLI &:= a \mid \varphiSLI \mid \neg \psiSLI \mid \psiSLI \land \psiSLI \mid \ltlN \psiSLI \mid \psiSLI \ltlU \psiSLI \\
	 \varphiSLI &:= \psiSLI \mid \varphiSLI \land \varphiSLI \mid \varphiSLI \lor \varphiSLI \mid \forall x^o\ldot \varphiSLI \mid \exists x^o\ldot \varphiSLI  \mid (\agent{i}, x) \varphiSLI
\end{align*}
where $a \in \ap$, $x \in \stratVars$, $\agent{i} \in \agents$, and $o \in \obs$ is an \emph{observation} that gets attached to each strategy.
\SLI{} is evaluated on \emph{CGSs under partial observation}, which are pairs $(\calG, \{\sim_o\}_{o \in \obs})$ consisting of a CGS $\calG = (S, s_0, \moves, \kappa, L)$ and an observation relation $\sim_o \subseteq S \times S$ for each observation $o \in \obs$.
If $s\sim_os'$, then $s$ and $s'$ appear indistinguishable for a strategy with observation $o$. 
See \ifFull{\Cref{app:hypersl-and-sliii}}{the full version \cite{fullVersion}} for the full semantics.

We can effectively encode each MC instance of \SLI{} into an equisatisfiable \HyperSL{} instance (Note that the MAS models of \SLI{} and \HyperSL{} are different, so we cannot translate the formula directly but translate both the formula and the model).

\begin{restatable}{theorem}{sliToHyper}\label{prop:sliToHyper}
	For any \SLI{} MC instance $\big((\calG, \{\sim_o\}_{o \in \obs}), \varphi\big)$, we can effectively compute a \HyperSL{} MC instance $\big(\calG', \varphi'\big)$, such that $(\calG, \{\sim_o\}_{o \in \obs}) \models_{\text{SL}_{\text{ii}}} \varphi$ iff $\calG' \models \varphi'$.
\end{restatable}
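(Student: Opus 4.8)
The plan is to build on the paper's guiding observation that ``$x$ is an $o$-uniform strategy'' is itself a hyperproperty: a full-information strategy $f \in \strats{\calG}$ behaves like a strategy that only observes through $o$ exactly if, on \emph{any two} histories that are pointwise $\sim_o$-indistinguishable, $f$ prescribes the same action. I would therefore translate both the model and the formula so that (i)~the $\sim_o$-observation and the action last chosen by each agent become readable from atomic propositions, and (ii)~every strategy quantifier $\quant x^o$ of the \SLI{} formula is replaced by an ordinary \HyperSL{} quantifier $\quant x$ \emph{guarded} by a uniformity hyperformula. Throughout I assume the standard \SLI{} setting where each $\sim_o$ is an equivalence relation and observations are synchronous with perfect recall, so that two histories are $o$-indistinguishable iff they agree position-wise under $\sim_o$.

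\textbf{Model translation.} I would take $\calG'$ to have states $(s,\vec{a})$ that additionally record the action profile $\vec{a}$ last played, with $\kappa'((s,\vec{b}),\vec{a}) = (\kappa(s,\vec{a}),\vec{a})$, and add one ``spectator'' agent $\agent{p}$ whose action is recorded but ignored by $\kappa$. The labelling of $\calG'$ keeps $L(s)$ but also exposes, through fresh propositions, (a)~the $\sim_o$-class of $s$ for every $o \in \obs$ (possible precisely because $\sim_o$ is an equivalence relation, using $O(\log|S|)$ propositions per $o$), and (b)~an encoding of each agent's recorded action. With this labelling, ``the current $\pi_1$- and $\pi_2$-states are $\sim_o$-equivalent'' and ``the action recorded for $\agent{p}$ on $\pi_1$ and $\pi_2$ coincides'' each become a Boolean combination of path-indexed propositions; call them $\mathit{sameObs}_o$ and $\mathit{sameAct}_{\agent{p}}$.

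\textbf{Uniformity hyperformula and formula translation.} To test whether a fixed $x$ is $o$-uniform I would spawn two probe paths $\pi_1,\pi_2$ on which $\agent{p}$ plays $x$ while \emph{all} original agents play freely (independent profiles $\vec{z},\vec{z}'$); because $\agent{p}$ does not influence $\kappa$, these paths range over \emph{all} pairs of $\calG$-reachable histories, which is what makes the test faithful. Since the action $x$ chooses at step $j$ is recorded one position later, uniformity is captured by the single \HyperSL{} path formula $\Phi_x := \big(\mathit{sameObs}_o \land \ltlN\,\mathit{sameAct}_{\agent{p}}\big)\, \ltlW\, \neg\,\mathit{sameObs}_o$ on $\pi_1,\pi_2$. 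The translation $\slIToHyper{\cdot}$ then proceeds by induction while tracking the current agent-to-strategy binding (as in \Cref{lem:sl-in-hypersl}): a path formula is sent to the unique outcome path built from the current profile (with $\agent{p}$ bound to a dummy), Boolean combinations of state formulas are merged into one path formula over disjoint path variables by pulling their quantifiers to the front (as in \Cref{ex:state-conjunctions}), and the quantifier cases become (bindings on the $\pi_i$ suppressed, and to be flattened into a single path formula afterwards)
\[
\slIToHyper{\exists x^o\ldot\varphi} = \exists x\ldot \forall \vec{z},\vec{z}'\ldot \big(\Phi_x \land \slIToHyper{\varphi}\big), \quad \slIToHyper{\forall x^o\ldot\varphi} = \forall x\ldot \exists \vec{z},\vec{z}'\ldot \big(\neg\Phi_x \lor \slIToHyper{\varphi}\big),
\]
after which the inner quantifier prefixes of $\slIToHyper{\varphi}$ are likewise pulled out.

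The main obstacle, and where the correctness argument concentrates, is reconciling the \emph{universal} nature of uniformity with \HyperSL{}'s restricted grammar, which forbids Boolean combinations of state formulas and forces everything into one quantifier-prefixed path formula. The two cases work out only because the polarities align: for $\exists x^o$ we need ``$x$ uniform \emph{and} $\varphi$'', and the universal guard $\forall\vec{z},\vec{z}'.\,\Phi_x$ commutes \emph{into} the conjunction (since $\vec{z},\vec{z}'$ are fresh for $\slIToHyper{\varphi}$), whereas for $\forall x^o$ we need ``$x$ uniform \emph{implies} $\varphi$'', turning the guard into its existential dual $\exists\vec{z},\vec{z}'.\,\neg\Phi_x$, which commutes into the disjunction. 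I would then prove one key lemma: a full-information $x\in\strats{\calG}$ satisfies $\forall\vec{z},\vec{z}'.\,\Phi_x$ iff it is constant on $\sim_o$-indistinguishable reachable histories, hence corresponds to an $o$-observation-based \SLI{} strategy, and under this correspondence the outcome play in $\calG'$ projects on its $s$-component exactly to the \SLI{} outcome in $\calG$, so $\slIToHyper{\varphi}$ and $\varphi$ agree. A secondary point to get right is that the spectator agent $\agent{p}$ is essential: probing through an ordinary agent would only certify uniformity on the histories reachable while that agent plays $x$, which need not cover the histories queried when $x$ is later bound to a different agent, whereas letting $\agent{p}$ drive nothing while all real agents stay free repairs this.
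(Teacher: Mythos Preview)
Your overall architecture matches the paper's: encode ``$x$ is $o$-uniform'' as a hyperproperty and guard each $\quant x^o$ by that hyperproperty, with the model made action-recording so the chosen action can be read from a proposition one step later. The divergence is in \emph{how} you instrument the model to expose $x$'s action, and that divergence breaks the correctness argument.

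The paper does \emph{not} introduce a spectator. It makes $\calG'$ action-recording for the existing agents and tests uniformity \emph{per agent}: for each $\agent{i}$ it spawns two probe paths on which $\agent{i}$ plays $x$ while all \emph{other} agents range freely, and checks that $\agent{i}$'s recorded action agrees as long as observations do; the conjunction over all $\agent{i}$ is $\mathit{indStrat}_o(x)$. The crucial invariant is that the probe histories for $\agent{i}$ are exactly the $\calG'$-histories on which $x$ will ever be queried when later bound to $\agent{i}$, so uniformity on probes transfers verbatim to the outcome paths.

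Your spectator construction loses this invariant. Because you record $\agent{p}$'s action in the state, a $\calG'$-strategy $x$ can read the spectator's past moves and condition on them. On probe paths the spectator plays $x$, so the recorded spectator action at step $j$ is $x$'s own choice at step $j{-}1$; on the outcome path the spectator plays your fixed dummy. These are \emph{different} $\calG'$-histories, and $x$ can exploit the difference. Concretely: let agent~$\agent{2}$ choose at $s_0$ between $s_1$ and $s_2$ with $s_1\sim_o s_2$, and from $s_1,s_2$ let agent~$\agent{1}$'s action decide a good/bad sink so that an $o$-uniform agent~$\agent{1}$ can always be countered by $\agent{2}$ but a strategy that plays $b$ at $s_1$ and $a$ at $s_2$ cannot. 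Take $x$ that plays $a$ whenever the spectator's last action is $a$ (which it always is on probes, since $x$ then feeds itself $a$'s), but plays $b$ at $s_1$ and $a$ at $s_2$ whenever the spectator's last action equals the dummy's move $b$. Then $\Phi_x$ holds on every probe pair, yet on the outcome path $x$ behaves like a full-information strategy that no $y$ can counter: the \SLI{} formula $\forall x^o.\exists y.(\agent{1},x)(\agent{2},y)\,\neg\ltlF g$ is true while your translation is false. Binding $\agent{p}$ to $x$ instead of a dummy on outcome paths would repair the single-quantifier case, but does not scale once several strategy variables are in scope (the spectator can only replay one of them), which is why the paper's per-agent conjunction is the right fix.
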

\begin{proof}[Proof Sketch]
	The key observation is that a strategy acting under imperfect information \emph{is} a hyperproperty \cite{BozzelliMP15,BeutnerFFM23}:
	A strategy $f$ acts under observation $o \in \obs$ iff on any two finite paths under $f$ the action chosen by $f$ is the same, provided the two paths are indistinguishable w.r.t.~$\sim_o$. 
	We can extend the CGS $\calG$ so that the above is easily expressible in \HyperSL{}.
	We can then restrict quantification to strategies under an arbitrary observation and use a similar translation to the one used in \Cref{lem:sl-in-hypersl}.
\end{proof}

As model checking of \SLI{} is undecidable \cite{BerthonMMRV17}, we get:

\begin{corollary}\label{cor:undec}
	Model checking of \HyperSL{} is undecidable. 
\end{corollary}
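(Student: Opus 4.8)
The plan is to obtain undecidability by a straightforward many-one reduction, leveraging the translation from \SLI{} established in \Cref{prop:sliToHyper} together with the known undecidability of \SLI{} model checking \cite{BerthonMMRV17}. The idea is that \Cref{prop:sliToHyper} turns the (harder) imperfect-information problem into a \HyperSL{} instance without changing its truth value, so any decision procedure for \HyperSL{} would immediately yield one for \SLI{}.

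Concretely, I would argue by contradiction. Suppose \HyperSL{} model checking were decidable, i.e., there is an algorithm that, given a CGS $\calG'$ and a \HyperSL{} formula $\varphi'$, decides whether $\calG' \models \varphi'$. Given an arbitrary \SLI{} model-checking instance $\big((\calG, \{\sim_o\}_{o \in \obs}), \varphi\big)$, I would first invoke the \emph{effective} translation of \Cref{prop:sliToHyper} to compute the corresponding \HyperSL{} instance $(\calG', \varphi')$, and then run the assumed \HyperSL{} decision procedure on $(\calG', \varphi')$. By the equivalence guaranteed in \Cref{prop:sliToHyper}, namely that $(\calG, \{\sim_o\}_{o \in \obs}) \models_{\text{SL}_{\text{ii}}} \varphi$ iff $\calG' \models \varphi'$, the output correctly answers the original \SLI{} instance. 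Composing the translation with the decision procedure would thus give a decision procedure for \SLI{} model checking, contradicting \cite{BerthonMMRV17}. Hence \HyperSL{} model checking is undecidable.

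The only point requiring care — and the sole reason the reduction is legitimate — is that the translation in \Cref{prop:sliToHyper} is \emph{computable}: both the model transformation $\calG \mapsto \calG'$ and the formula transformation $\varphi \mapsto \varphi'$ must be effective, so that the composition with the hypothetical decision procedure is itself an algorithm. This is exactly what the phrase ``we can effectively compute'' in \Cref{prop:sliToHyper} asserts, so no further work is needed and there is no genuine obstacle beyond appealing to that effectiveness. Equivalently, one may phrase the whole argument directly as a many-one reduction from the \SLI{} model-checking problem to the \HyperSL{} model-checking problem, from which undecidability of the latter follows at once.
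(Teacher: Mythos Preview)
Your proposal is correct and matches the paper's approach exactly: the corollary is stated immediately after \Cref{prop:sliToHyper} with the one-line justification that \SLI{} model checking is undecidable \cite{BerthonMMRV17}, so the effective reduction of \Cref{prop:sliToHyper} transfers undecidability to \HyperSL{}. You have simply spelled out this many-one reduction in more detail than the paper does.
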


\section{Model Checking of \HyperSL{}}\label{sec:decideable}

While \HyperSL{} MC is undecidable in general (cf.~\Cref{cor:undec}), we can identify fragments for which MC is possible.
For this, we cannot follow the approach of existing MC algorithms for (variants of) non-hyper \SL{}, which use tree automata to summarize strategies \cite{ChatterjeeHP10,MogaveroMPV14}.
For example, given an atomic state formula $\psi[\pi_k : \bX_k]_{k=1}^m$, we cannot construct a tree automaton that accepts all strategies that fulfill $\psi$. 
This automaton would need to \emph{compare} (and thus traverse) multiple paths in a tree at the same time.
Instead -- given the ``hyper'' origins of our logic -- we approach the MC problem by focusing on the interactions of its path variables and use \emph{word} automata to summarize satisfying path assignments. 

Throughout this section, we assume that all strategy variables are $\alpha$-renamed such that no variable is quantified more than once.

\subsection{\HyperSLP{}}

We call the fragment of \HyperSL{} we study in this section \HyperSLP{} -- short for \HyperSL{} with \textbf{S}ingle \textbf{P}ath \textbf{E}limination.

\begin{definition}\label{def:spe}
	A \HyperSLP{} formula has the form 
	\begin{align*}
		\varphi = \flat_1 \ldots \flat_m \ldot \psi\big[\pi_k :  \bX_k \big]_{k=1}^m,
	\end{align*}
	where $\flat_1, \ldots, \flat_m$ are blocks of strategy quantifiers and for each $1 \leq k \leq m$ and $\agent{i} \in \agents$, strategy variable $\bX_k(\agent{i})$ is quantified in $\flat_k$.
	We refer to $m$ as the \emph{block-rank} of $\varphi$.
\end{definition}

Intuitively, the definition states that we can partition the quantifier prefix into smaller blocks where the variables quantified in each block $\flat_k$ can be used to eliminate (construct) the (unique) path variable $\pi_k$.  
We will exploit this restriction during model-checking: we can eliminate each block of quantifiers incrementally:
as all strategies quantified in block $\flat_k$ are only needed for path $\pi_k$, we can ``construct'' $\pi_k$, and afterward forget about the strategies we have used.
Note that the definition of \HyperSLP{} only depends on the quantifier prefix and the path each strategy variable is used on; it does not make any assumption on the structure of $\psi$.

\begin{example}\label{ex:slp}
	Consider the following (abstract) \HyperSL{} formula, where $\psi$ is an arbitrary LTL formula over $\pi_1, \pi_2$.
	\begin{align*}
		\underbrace{\exists c\ldot}_{\flat_1} \underbrace{\exists z\ldot \forall w\ldot \exists v\ldot}_{\flat_2} \psi \left[\begin{aligned}
			\pi_1 &: (c, c, c, c)\\
			\pi_2 &: (w, z, v, v)
		\end{aligned}\right]
	\end{align*}
	This formula is a \HyperSLP{} formula:
	The first block $\flat_1$ consists of strategy variable $c$ and constructs $\pi_1$, and the second block $\flat_2$ constructs $\pi_2$.
	The block-rank of this formula is $2$.
	\demo
\end{example}

\subsection{Expressiveness Of \HyperSLP{}}\label{sec:sub:spe-exp}

Before we outline our model-checking algorithm for \HyperSLP{} formulas, we point to some (fragments of) other logics that fall within \HyperSLP{}.

\paragraph{\HyperATLS{} and \HyperSLP{}}
When translating \HyperATLS{} (or \HyperATLSS{}) formulas into \HyperSL{} (cf.~\Cref{lem:hyperatl-in-hypersl,lem:hyperatls-in-hypersl}), each quantifier $\llangle A \rrangle \pi$ (resp.~$\llbracket A \rrbracket \pi$) is replaced by a $\exists^*\forall^*$ (resp.~$\forall^*\exists^*$) block of strategy quantifiers that are used to construct $\pi$ (and only $\pi$).
The resulting formula is thus a \HyperSLP{} formula.

\paragraph{\SL[1G] and \HyperSLP{}}

\SL[1G] is a fragment of \SL{} that allows a prefix of strategy quantifier and agent bindings followed by a single path formula (with no nested agent binding) \cite{MogaveroMPV14,MogaveroMS13,MogaveroMS14,CermakLM15}.
When translating \SL[1G] into \HyperSL{}, we obtain a formula of the form $\quant_1 x_1\ldots \quant_m x_m\ldot \psi[\pi : \bX]$ (cf.~\Cref{lem:sl-in-hypersl}), which is trivially \HyperSLP{} as there is a single path variable (with block-rank $1$).

\paragraph{Beyond \HyperATLSS{} and \SL[1G]}
Additionally, \HyperSLP{} captures interesting hyperproperties that could not be captured in existing logics:

	\begin{example}\label{ex:planning}
	Assume a MAS with $\agents = \{\agent{r}, \agent{a}, \agent{\mathit{ndet}}\}$ describing a planning task between a robot $\agent{r}$ that wants to reach a state where AP $\mathit{goal} \in \ap$ holds, and an adversary $\agent{a}$ that wants to prevent the robot from reaching the goal. 
	In each step, agent $\agent{r}$ can select a direction to move in, and $\agent{a}$ can choose a direction it wants to push the robot to. 
	Each combination of actions of $\agent{r}$ and $\agent{a}$ results in a set of potential successor locations, and the nondeterminism agent $\agent{\mathit{ndet}}$ decides which of those locations the robot actually moves to. 
	We want to check if agent $\agent{r}$ has a winning strategy that \emph{can} reach the goal against all possible behaviors of agent $\agent{a}$, i.e., $\agent{r}$ needs to reach the goal under favorable non-deterministic outcomes. 
	We can express this (non-hyper) property in \HyperSLP{} as
	\begin{align*}
		\exists x\ldot \forall y \ldot \exists z\ldot \big(\ltlF \mathit{goal}_\pi\big) \big[ \pi : (x, y, z) \big],
	\end{align*}
where we write $(x, y, z)$ for the strategy profile that assigns agent $\agent{r}$ to $x$, agent $\agent{a}$ to $y$, and agent $\agent{\mathit{ndet}}$ to $z$.
	In \HyperSLP{}, we can additionally state that $\agent{r}$ should reach the goal \emph{as fast as possible}, i.e., at least as fast as any path in the MAS:
	\begin{align*}
		\exists x\ldot \forall y \ldot \exists z\ldot \forall a\ldot \forall b\ldot \forall c\ldot (\neg \mathit{goal}_{\pi'}) \ltlU \mathit{goal}_\pi \left[\begin{aligned}
			&\pi : (x, y, z)\\
			&\pi' : (a, b, c)
		\end{aligned}\right]
	\end{align*}
	Here, we quantify over any potential different path $\pi'$ and state that $\pi$ is at least as fast as $\pi'$.
	Such requirements cannot be expressed in \SL{} (even in quantitative versions like \SL[$\mathcal{F}$]), nor can they be expressed in \HyperATLS{} or \HyperATLSS{}. \demo
\end{example}

\subsection{Summarizing Path Assignments}

In the remainder of this section, we prove the following:

\begin{theorem}\label{theo:dec}
	Model checking for \HyperSL[SPE] is decidable.
\end{theorem}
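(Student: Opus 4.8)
The plan is to decide model checking by an automata-theoretic elimination of path variables, processing the quantifier blocks $\flat_1, \ldots, \flat_m$ from the innermost block $\flat_m$ outwards, exploiting precisely the single-path-elimination structure of \Cref{def:spe}: since the strategies of block $\flat_k$ are used only to build the single path $\pi_k$, we can construct $\pi_k$ from those strategies and afterwards discard them. As a preprocessing step, I would handle nested state formulas: since all nested state formulas are closed, I model check them recursively (bottom-up) and replace each by a fresh atomic proposition, so that I may assume $\psi$ is a pure LTL formula over atoms $a_{\pi_k}$. I represent the paths $\pi_k$ synchronously as sequences of states, i.e.\ work over the alphabet $S^j$ (one $S$-component per still-present path variable) and apply $L$ to evaluate atoms. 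For the base case I translate $\psi$ (over $m$ indexed copies of $\ap$) into a deterministic parity automaton $\calA_m$ over $S^m$ that accepts a tuple $(p_1, \ldots, p_m) \in (S^\omega)^m$ iff the induced label sequences satisfy $\psi$; this is standard (LTL-to-APA followed by determinization via \Cref{prop:toDPA}).

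I then maintain a deterministic parity automaton $\calA_k$ over $S^k$ with the invariant that $\calA_k$ accepts $(p_1, \ldots, p_k)$ iff $\flat_{k+1} \ldots \flat_m \ldot \psi[\pi_j : \bX_j]_{j=1}^m$ holds when $\pi_1, \ldots, \pi_k$ are pinned to $p_1, \ldots, p_k$. This is well-defined because, by the \HyperSLP{} restriction, the residual profiles $\bX_{k+1}, \ldots, \bX_m$ mention only variables of the inner blocks, so the truth of the residual formula depends on $p_1, \ldots, p_k$ and not on which strategies produced them. Given $\calA_k$, I build an alternating parity automaton $\calA'_{k-1}$ over $S^{k-1}$ whose states pair a state of $\calA_k$ with the current $\calG$-state of the path $\pi_k$ (initialized to $s_0$). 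On reading a letter $(s_1, \ldots, s_{k-1})$, the automaton resolves the quantifiers of the block $\flat_k = \quant_1 x_1 \ldots \quant_l x_l$ in their nested order — an existentially quantified $x_j$ contributes a disjunction over the actions assigned to it (shared by all agents $\agent{i}$ with $\bX_k(\agent{i}) = x_j$), a universally quantified $x_j$ a conjunction — thereby fixing an action profile $\bA$, moves the simulated $\pi_k$ to $\kappa(s, \bA)$, and lets $\calA_k$ read $(s_1, \ldots, s_{k-1}, s)$, inheriting its colors. I determinize $\calA'_{k-1}$ via \Cref{prop:toDPA} to obtain the next deterministic automaton $\calA_{k-1}$. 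After $m$ steps, $\calA_0$ is an alternating automaton over a one-letter alphabet and $\calG \models \varphi$ iff $\calA_0$ is non-empty (equivalently, the associated parity game is won by the existential player), which is decidable. Each step is effective (with an exponential-tower blow-up from the iterated determinizations, but finite), so decidability follows.

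The delicate part is the correctness of a single elimination step: that the causal, step-wise resolution inside the automaton faithfully captures the strategy-quantifier semantics, in which each $x_j$ is a global function $S^+ \to \moves$ of the history of $\pi_k$ and the prefix is resolved at the level of whole strategies. Two subtleties must be reconciled. First, within a block a later-quantified strategy may depend on earlier ones; because all strategies read the \emph{same} evolving history of the single path $\pi_k$, this corresponds exactly to resolving the per-step action quantifiers in nested order, with later actions seeing earlier ones and all actions seeing the shared past through the automaton state. Second, a strategy of block $\flat_k$ is chosen knowing the outer paths $p_1, \ldots, p_{k-1}$ in full (including their future), whereas the automaton reads them causally; I would show this causes no loss, since the strategies building $\pi_k$ read only the history of $\pi_k$ (which evolves independently of $p_1, \ldots, p_{k-1}$), so for any fixed input the causal existential choices realize precisely the strategies the logic permits. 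Hence ``there is an accepting run on $(p_1,\ldots,p_{k-1})$'' coincides with the existential player winning the game with objective $\calA_k$ against the pinned outer paths. The determinization after each block is essential here: it guarantees that, in the next step, acceptance of $\calA_{k-1}$ is a \emph{deterministic} function of its input, so that the universal strategies of the following block cannot illegitimately control the internal nondeterminism of the objective automaton. This step-wise simulation of an arbitrary $\exists/\forall$ block over strategy functions — rather than a single path quantifier as in \HyperATLS{} — is the source of the more involved construction and correctness proof, and is where I expect the main effort to lie.
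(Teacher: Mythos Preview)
Your proposal is correct and follows essentially the same approach as the paper: iterative elimination of path variables via a product of a determinized summary automaton with the game structure, where each block $\flat_k$ is resolved by a nested $\bigwedge/\bigvee$ over actions in the transition formula, followed by determinization before the next step. The paper formalizes the ``delicate part'' you identify---that per-step nested action quantification coincides with global strategy quantification---by reducing it to the positional determinacy of \emph{concurrent parity games} (\cite{MalvoneMS16}): for each fixed input $\Pi$ it builds an infinite-state CPG $\calP_\Pi$ whose winning condition encodes \Cref{eq:simStatement}, and then uses positional Skolem strategies for the existential team to bridge the gap between the CPG semantics and the accepting run DAG of the alternating automaton; this is the precise tool that justifies your first subtlety and is where the main technical effort lies.
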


We fix a CGS $\calG = (S, s_0, \moves, \kappa, L)$ and state $\dot{s} \in S$, and let $\varphi = \flat_1 \ldots \flat_m \ldot \psi\big[\pi_k :  \bX_k \big]_{k=1}^m$ be a \HyperSLP{} formula.
We want to check if $\dot{s}, \{\} \models_\calG \varphi$, i.e., $\varphi$ holds in state $\dot{s}$.

\paragraph{Zipping Path Assignments}

The main idea of our algorithm is to summarize path assignments that satisfy subformulas of $\varphi$, similar to MC algorithms for \HyperLTL{}, \HyperCTLS{}, and \HyperATLS{} \cite{FinkbeinerRS15,BeutnerF23b,BeutnerF23,BeutnerF24}.
To enable automata-based reasoning about path assignments,  i.e., mappings $\Pi : V \to S^\omega$ for some $V \subseteq \pathVars$, we \emph{zip} such an assignment into an infinite word.
Concretely, given $\Pi : V \to S^\omega$ we define $\mathit{zip}(\Pi) \in (V \to S)^\omega$ as the infinite word of functions where $\mathit{zip}(\Pi)(j)(\pi) := \Pi(\pi)(j)$ for every $j \in \nat$, i.e., the function in the $j$th step maps each path variable $\pi \in V$ to the $j$th state on the path bound to $\pi$.

\paragraph{Summary Automaton}

Given a quantifier block $\flat = \quant_1 x_1\ldots \quant_n x_n$ over strategy variables $x_1, \ldots, x_n$, we define $\widetilde{\flat}$ as the analogous block of quantification of strategies $f_{x_1}, \ldots, f_{x_n}$, i.e., $\widetilde{\flat} := \quant_1 f_{x_1} \in \strats{\calG} \ldots \quant_n f_{x_n} \in \strats{\calG}$.
At the core of our model-checking algorithm, we construct automata that accept (zippings of) partial satisfying path assignments.
Formally:

\begin{definition}\label{def:equiv}
	For $1 \leq k \leq m+1$, we say an automaton $\calA$ over alphabet $(\{\pi_{1}, \ldots, \pi_{k-1}\} \to S)$ is a \emph{$(\calG, \dot{s}, k)$-summary} if for every path assignment $\Pi : \{\pi_{1}, \ldots, \pi_{k-1}\} \to S^\omega$ we have $\mathit{zip}(\Pi) \in \calL(\calA)$ if and only if
	\begin{align*}
		&\widetilde{\flat_k}\cdots \widetilde{\flat_m}\ldot \Pi\Big[\pi_j \mapsto \play_\calG(\dot{s}, \prod_{\agent{i} \in \agents} f_{\bX_j(\agent{i})})\Big]_{j = k}^m \models_\calG \psi.
	\end{align*}
\end{definition}

That is, a $(\calG, \dot{s}, k)$-summary accepts (the zipping of) a path assignment $\Pi$ over paths $ \pi_{1}, \ldots, \pi_{k-1}$ if -- when simulating the quantification over strategies needed to construct paths $\pi_{k}, \ldots, \pi_{m}$ and adding them to $\Pi$ -- the body $\psi$ of the formula is satisfied.

\begin{example}\label{ex:equiv}
	We illustrate the concept using the abstract formula from \Cref{ex:slp}. 
	A $(\calG, \dot{s}, 3)$-summary is an automaton $\calA_3$ over alphabet $(\{\pi_{1}, \pi_{2}\} \to S)$ such that for every $\Pi : \{\pi_1, \pi_2\} \to S^\omega$ we have $\mathit{zip}(\Pi) \in \calL(\calA_3)$ iff $\Pi \models_\calG \psi$.
	A $(\calG, \dot{s}, 2)$-summary is an automaton $\calA_2$ over alphabet $(\{\pi_{1}\} \to S)$ such that  for every $\Pi : \{\pi_1\} \to S^\omega$ we have $\mathit{zip}(\Pi) \in \calL(\calA_2)$ iff
	\begin{align*}
		\exists f_z\ldot \forall f_w \ldot \exists f_v\ldot \Pi\big[\pi_2 \mapsto \play_\calG(\dot{s}, (f_w, f_z, f_v, f_v))\big] \models_\calG \psi,
	\end{align*}
	i.e., we mimic the quantification of block $\flat_2$ to construct path $\pi_2$ (using the quantified strategies $f_z, f_w, f_v \in \strats{\calG}$) and add this path to $\Pi$ (which already contains $\pi_1$).\demo
\end{example}

\subsection{Constructing $(\calG, \dot{s}, k)$-Summaries}

\begin{algorithm}[!t]
	\caption{Simulation construction for block elimination. 
	}\label{alg:product}
\begin{code}
def simulate($\calG = (S, s_0, \moves, \kappa, L)$,$\dot{s}$,$\pi$,$\vec{x}$,$\flat = \quant_1 x_1 \ldots \quant_n x_n$,$\calA$):
@@$\calA_\mathit{det} = (Q, q_0, \delta, c)$ = toDPA($\calA$) // (*\color{comment-color}Using \Cref{prop:toDPA} \label{line:toDPA}*)
@@$\calB$  = $(Q \times S, (q_0, \dot{s}), \delta', c'\big)$ where 
@@@@$c'(q, s) := c(q)$ (*\label{line:color}*)
@@@@$\begin{aligned}
\delta'\big( (q, s) , \bT\big) := \biglandlor^{\quant_1}_{a_{x_1} \in \moves }\!\! \cdots \!\! \biglandlor^{\quant_n}_{a_{x_n} \in \moves} \bigg(\delta\big(q, \bT[\pi \mapsto s] \big), \kappa\Big(s, \prod_{\agent{i} \in \agents} a_{\bX(\agent{i})}   \Big)\bigg)
\end{aligned}$(*\label{line:tran}*)
@@return $\calB$
\end{code}
	\vspace{-1mm}
\end{algorithm}

We write $\biglandlor^\quant$ for a conjunction ($\bigwedge$) if $\quant = \forall$ and a disjunction ($\bigvee$) if $\quant = \exists$.
The backbone of our model-checking algorithm (which we present in \Cref{sec:sub:mc}) is an effective construction of a $(\calG, \dot{s}, k)$-summary $\calA_k$ for each $1 \leq k \leq m + 1$.
To construct these summaries, we simulate quantification over strategies.
We describe this simulation construction in \Cref{alg:product}.
Before explaining the construction, we state the result of \Cref{alg:product} as follows:

\begin{restatable}{proposition}{simProp}\label{prop:sim}
	Given $\dot{s} \in S$, $\pi \in \pathVars$, a strategy profile $\vec{x} : \agents \to \stratVars$, a quantifier block $\flat$ such that for every $\agent{i} \in \agents$, $\vec{x}(\agent{i})$ is quantified in $\flat$, and an APA $\calA$ over alphabet $(V \uplus \{\pi\} \to S)$.
	Let $\calB$ be the results of \lstinline[style=default, language=custom-lang]|simulate($\calG$,$\dot{s}$,$\pi$,$\bX$,$\flat$,$\calA$)|.
	Then for any path assignment $\Pi : V \to S^\omega$, we have $\mathit{zip}(\Pi) \in \calL(\calB)$  iff
	\begin{align}\label{eq:simStatement}
		&\widetilde{\flat}\ldot  \mathit{zip}\Big(\Pi\big[\pi \mapsto \play_\calG\big(\dot{s}, \prod_{\agent{i} \in \agents} f_{\bX(\agent{i})}\big)\big] \Big)\in \calL(\calA).
	\end{align}
\end{restatable}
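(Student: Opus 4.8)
The plan is to decompose $\calB$ into its two components: the $S$-component, which incrementally builds the play on $\pi$, and the $Q$-component, which tracks the determinized automaton $\calA_{\mathit{det}} = (Q,q_0,\delta,c)$ computed in \Cref{alg:product}. Since $\calL(\calA_{\mathit{det}}) = \calL(\calA)$ by \Cref{prop:toDPA}, it suffices to prove \Cref{eq:simStatement} with $\calA_{\mathit{det}}$ in place of $\calA$. I would then characterize $\mathit{zip}(\Pi) \in \calL(\calB)$ through the acceptance game of the alternating automaton $\calB$ (using the formal run-tree semantics deferred to the appendix): Player~$0$ resolves the disjunctions, i.e.\ the $\exists$-quantified $\biglandlor$'s in $\delta'$, Player~$1$ resolves the conjunctions, and Player~$0$ wins a branch iff its minimal recurring color is even. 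The whole proof aims to show that this game has the same winner as the strategy quantification on the right-hand side of \Cref{eq:simStatement}.

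The first and most important step exploits that $\calA_{\mathit{det}}$ is \emph{deterministic}. Along any branch of a run, the $\calB$-states form a sequence $(q_0,s_0),(q_1,s_1),\dots$ with $s_0 = \dot{s}$ and $s_{j+1} = \kappa\big(s_j, \prod_{\agent{i} \in \agents} a^{(j)}_{\bX(\agent{i})}\big)$, where $a^{(j)}$ are the actions chosen by the players in round $j$; hence $s_0 s_1 \cdots$ is exactly a play $p$ from $\dot{s}$ consistent with these actions. Crucially, in $\delta'$ the $Q$-successor $\delta(q_j, \bT_j[\pi \mapsto s_j])$ is \emph{independent} of the chosen actions, so the $q$-sequence is the unique run of $\calA_{\mathit{det}}$ on the word $\mathit{zip}\big(\Pi[\pi \mapsto p]\big)$ (note that $\bT_j[\pi\mapsto s_j]$ is precisely the $j$-th letter of that word). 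As $c'(q,s) = c(q)$, the branch satisfies the parity condition iff $\mathit{zip}(\Pi[\pi\mapsto p]) \in \calL(\calA_{\mathit{det}})$. Thus the $Q$-component is mere bookkeeping, and Player~$0$ wins the acceptance game of $\calB$ iff Player~$0$ wins the \emph{play game} $\Gamma$ on the unravelling of $\calG$ from $\dot{s}$: in each round, at the current prefix $p[0,j]$, the players choose $a_{x_1},\dots,a_{x_n}$ in this order (Player~$0$ for the $\exists$-variables, Player~$1$ for the $\forall$-variables), the play advances through $\kappa$, and Player~$0$ wins branch $p$ iff $\mathit{zip}(\Pi[\pi\mapsto p]) \in \calL(\calA)$.

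It remains to match $\Gamma$ against the quantifier statement $\Gamma'$ given by $\widetilde{\flat}.\,\mathit{zip}(\Pi[\pi\mapsto \play_\calG(\dot{s},\prod_{\agent{i}} f_{\bX(\agent{i})})]) \in \calL(\calA)$, whose value is, by the game semantics of quantifiers, exactly ``Player~$0$ wins the game in which the players commit to whole strategies $f_{x_1},\dots,f_{x_n}$ in order''. The bridge is a currying argument that respects the quantifier order. Given a winning strategy $\sigma$ for a player in $\Gamma$, I build winning strategies in $\Gamma'$ by $f_{x_l}(p[0,j]) := \sigma\big(p[0,j], f_{x_1}(p[0,j]), \dots, f_{x_{l-1}}(p[0,j])\big)$, i.e.\ feeding $\sigma$ the actions that the already-fixed earlier strategies prescribe at $p[0,j]$; this is well defined precisely because the order of $\biglandlor^{\quant_1} \cdots \biglandlor^{\quant_n}$ in $\delta'$ matches the order of the quantifiers in $\widetilde{\flat} = \quant_1 f_{x_1} \dots \quant_n f_{x_n}$, so that all of $f_{x_1},\dots,f_{x_{l-1}}$ are available when $f_{x_l}$ is chosen, and it produces the identical play. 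Carrying this out for Player~$0$ shows ``Player~$0$ wins $\Gamma$ $\Rightarrow$ Player~$0$ wins $\Gamma'$''; the symmetric construction for Player~$1$ shows ``Player~$1$ wins $\Gamma$ $\Rightarrow$ Player~$1$ wins $\Gamma'$''. Since $\Gamma$ (a parity game) is determined and $\Gamma'$ is bivalent, these two implications combine to ``Player~$0$ wins $\Gamma$ $\Leftrightarrow$ Player~$0$ wins $\Gamma'$'', and chaining the equivalences establishes \Cref{eq:simStatement}.

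The main obstacle is this last alignment between the alternation structure of $\calB$ and the nesting of the strategy quantifiers. Two points require care. First, one must check that collapsing the acceptance game to the play game $\Gamma$ loses no strategic power: the $Q$-component contributes no choices (by determinism), and, since the payoff depends only on the generated play $p$, the game history at any node is already captured by the current prefix $p[0,j]$. Second, one must reconcile the \emph{incremental} moves of $\Gamma$ with the \emph{up-front} strategy commitments quantified in $\widetilde{\flat}$; this is exactly what the currying substitution plus determinacy resolves, and it breaks down if the order of the $\biglandlor$'s and of the quantifiers is not kept in lock-step. Everything else — the round-by-round simulation and the equality of color sequences — is a careful but routine induction, provided the formal APA run-tree semantics and the determinacy of parity games are in place.
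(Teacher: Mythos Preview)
Your approach is correct and essentially the same as the paper's: both reduce the equivalence to the positional determinacy of a parity game that simultaneously tracks the $\calG$-state and the $\calA_{\mathit{det}}$-state, and both bridge the step-wise action choices in $\calB$ with the global strategy quantification in $\widetilde{\flat}$ via a Skolem/currying construction. The paper packages this by explicitly introducing an infinite-state concurrent parity game $\calP_\Pi$ (with one player per strategy variable) and invoking a cited determinacy result for such games, splitting the argument into $\calP_\Pi \Leftrightarrow \text{Eq.~(\ref{eq:simStatement})}$ (easy strategy translation) and $\calP_\Pi \Leftrightarrow \mathit{zip}(\Pi)\in\calL(\calB)$ (Skolem functions from determinacy), whereas you inline the same reasoning through the acceptance game and your play game $\Gamma$; the underlying mechanics are identical.
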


That is, the automaton $\calB$ accepts the zipping of an assignment $\Pi : V \to S^\omega$ iff by simulating the quantifier prefix in $\flat$, we construct a path for $\pi$ that, when added to $\Pi$, is accepted by $\calA$.
Note the similarity to \Cref{def:equiv}: In \Cref{def:equiv} we simulate multiple quantifier blocks to construct paths $\pi_k, \ldots, \pi_m$ that, when added to $\Pi$, should satisfy the body $\psi$. 
In \Cref{prop:sim}, we simulate a single path that, when added to $\Pi$, should be accepted by automaton $\calA$. 
We will later use \Cref{prop:sim} to simulate one quantifier block at a time, eventually reaching an automaton required by \Cref{def:equiv}.

Before proving \Cref{prop:sim}, let us explain the automaton construction in \lstinline[style=default, language=custom-lang]|simulate| (\Cref{alg:product}).
In Eq.~(\ref{eq:simStatement}), $\widetilde{\flat}$ quantifies over strategies in $\calG$, which are infinite objects (function $S^+ \to \moves$).
The crucial point that we will exploit is that the underlying game the strategies operate on is \emph{positionally determined}. 
The automaton we construct can, therefore, \emph{simulate} the path $\pi$ in $\calG$ and select fresh actions in each step (instead of fixing strategies globally) \cite{BeutnerF21,BeutnerF23,BeutnerF24}.
To do this, we first translate the APA $\calA$ to a DPA $\calA_\mathit{det} = (Q, q_0, \delta, c)$ (in line \ref{line:toDPA}). 
The new automaton $\calB$ then simulates path $\pi$ by tracking its current state in $\calG$ and simultaneously tracks the current state of $\calA_\mathit{det}$, thus operating on states in $Q \times S$.
We start in state $(q_0, \dot{s})$, i.e., the initial state of $\calA_\mathit{det}$ and the designed state $\dot{s}$ from which we want to start the simulation of $\pi$.
The color of each state is simply the color of the automaton we are tracking, i.e., $c'(q, s) = c(q)$ (line \ref{line:color}).
During each transition, we then update the current state of $\calA_\mathit{det}$ and the state of the simulation (defined in line \ref{line:tran}).
Concretely, when in state $(q, s)$, we read a letter $\bT : V \to S$ that assigns states to all path variables in $V$ (recall that the alphabet of $\calA$ is $V \cup \{\pi\} \to S$ and the alphabet of $\calB$ is $V \to S$).
We update the state of $\calA_\mathit{det}$ to $\delta(q, \bT[\pi\mapsto s])$, i.e., we extend the input letter $\bT$ with the current state $s$ of the simulation of path $\pi$ (note that $\bT[\pi\mapsto s] : V \cup \{\pi\} \to S$). 
To update the simulation state $s$, we make use of the positional determinacy of the game:
Instead of quantifying over strategies (as in Eq.~(\ref{eq:simStatement})), we can quantify over actions in each step of the automaton.
Concretely, for each universally quantified strategy variable in $\flat$, we pick an action conjunctively, and for each existentially quantified variable, we pick an action disjunctively.  
After we have picked actions $a_{x_1}, \ldots, a_{x_n}$ for all strategies quantified in $\flat$, we can update the state of the $\pi$-simulation by constructing the action assignment $\prod_{\agent{i} \in \agents} a_{\vec{x}(\agent{i})}$, i.e., assign each agent the corresponding action, and obtain the next state using $\calG$'s transition function $\kappa$.

\begin{example}\label{ex:construction}
	Let us use \Cref{ex:slp} to illustrate the construction in \Cref{alg:product}.
	Assume we are given an $(\calG, \dot{s}, 3)$-summary $\calA_3$ over alphabet $(\{\pi_{1}, \pi_{2}\} \to S)$, i.e., for every $\Pi : \{\pi_1, \pi_2\} \to S^\omega$, we have $\mathit{zip}(\Pi) \in \calL(\calA_3)$ iff $\Pi \models_\calG \psi$ (cf.~\Cref{ex:equiv}).
	We invoke \lstinline[style=default, language=custom-lang]|simulate($\calG$,$\dot{s}$,$\pi_2$,$\bX$,$\flat_2$,$\calA_3$)| where $\bX = (w, z, v, v)$ and $\flat_2 = \exists z \forall w\exists v$, and let $(Q, q_0, \delta, c)$ be the DPA equivalent to $\calA_3$ (computed in line \ref{line:toDPA}). 
	In this case, \lstinline[style=default, language=custom-lang]|simulate| computes the APA $\calB = (Q \times S, (q_0, \dot{s}), \delta', c')$ over alphabet $\{\pi_1\} \to S$ where $\delta'\big((q, s), \bT\big)$ is defined as
	\begin{align*}
		\bigvee_{a_z \in \moves} \; \bigwedge_{a_w \in \moves} \; \bigvee_{a_v \in \moves}  \Big(\delta(q, \bT[\pi_2 \mapsto s]), \kappa\big(s, (a_w, a_z, a_v, a_v) \big) \Big).
	\end{align*}
	That is, in each step, we disjunctively choose an action $a_z$ (corresponding to the action selected by existentially quantified strategy $z$), conjunctively pick an action $a_w$ (corresponding to the action selected by universally quantified strategy $w$), and finally disjunctively select action $a_v$.
	After we have fixed actions $a_z$, $a_w$ and $a_v$, we take a step in $\calG$ by letting each agent $\agent{i}$ play action $a_{ \bX(\agent{i})}$, i.e., agent $\agent{1}$ chooses action $a_w$, agent $\agent{2}$ chooses $a_z$, and agents $\agent{3}$ and $\agent{4}$ pick $a_v$.
	By \Cref{prop:sim}, every $\Pi : \{\pi_1\} \to S^\omega$ satisfies $\mathit{zip}(\Pi) \in \calL(\calB)$ iff 
	\begin{align*}
		\exists f_z\ldot \forall f_w \ldot \exists f_v\ldot \mathit{zip}\big(\Pi[\pi_2 \mapsto \play_\calG(\dot{s}, (f_w, f_z, f_v, f_v))]\big) \in \calL(\calA_3)
	\end{align*}
	which (by assumption on $\calA_3$) holds iff 
	\begin{align*}
		\exists f_z\ldot \forall f_w \ldot \exists f_v\ldot \Pi[\pi_2 \mapsto \play_\calG(\dot{s}, (f_w, f_z, f_v, f_v))] \models_\calG \psi.
	\end{align*}
	We have thus used \lstinline[style=default, language=custom-lang]|simulate| (\Cref{alg:product}) to compute a $(\calG, \dot{s}, 2)$-summary from a $(\calG, \dot{s}, 3)$-summary (cf.~\Cref{ex:equiv}). \demo
\end{example}

We can now formally prove \Cref{prop:sim}:

\begin{proof}[Proof Sketch of \Cref{prop:sim}]
	The idea of automaton $\calB$  constructed in \Cref{alg:product} is to simulate the path that corresponds to path variable $\pi$.
	To argue that $\calB$ expresses the desired language, we make use of the positional determinacy of concurrent parity games (CPG) \cite{MalvoneMS16}.
	A CPG is a simple multi-player game model where we can quantify over strategies for each of the players.
	For any fixed $\Pi$, we design an (infinite-state) CPG, that is won iff Eq.~(\ref{eq:simStatement}) holds.
	We then exploit the fact that CPGs are determined (cf.~\cite[Thm.~4.1]{MalvoneMS16}), i.e., instead of quantifying over entire strategies in the CPG, we can quantify over Skolem functions for actions in each step.
	This allows us to show that the CPG is won iff $\calB$ has an accepting run (on the fixed $\Pi$), giving us the desired result. 
	We refer the interested reader to \ifFull{\Cref{app:simProof}}{the full version \cite{fullVersion}} for details.
\end{proof}

\subsection{Model-Checking Algorithm}\label{sec:sub:mc}

\begin{algorithm}[!t]
	\caption{Model-checking algorithm for \HyperSLP{}.}\label{alg:main-alg}
\begin{code}
def modelCheck($\calG$,$\dot{s}$,$\varphi =  \flat_1 \cdots \flat_m \ldot \psi \big[\pi_k : \bX_k\big]_{k=1}^m $):
@@// Assume (*\color{comment-color}$\psi$*) contains no nested state formulas
@@$\calA_{m+1}$ = LTLtoAPA($\psi$) 
@@// (*\color{black!50!white}$\calA_{m+1}$ is a $(\calG, \dot{s}, m+1)$-summary*)(*\label{line:ltlToApa}*)
@@for $k$ from $m$ to $1$:
@@@@$\calA_k$ = simulate($\calG$,$\dot{s}$,$\pi_k$,$\bX_k$,$\flat_k$,$\calA_{k+1}$) (*\label{line:product}*)
@@@@// (*\color{black!50!white}$\calA_k$ is a $(\calG, \dot{s}, k)$-summary*) (*\label{line:inv}*)
@@if $\calL(\calA_1) \neq \emptyset$ then (*\label{line:emptinessCheck}*)
@@@@return SAT // (*\color{black!50!white}$\dot{s}, \{\} \models_\calG \varphi$*)
@@else 
@@@@return UNSAT  // (*\color{black!50!white}$\dot{s}, \{\} \not\models_\calG \varphi$ *)
\end{code}
\end{algorithm}

Equipped with the concept of $(\calG, \dot{s}, k)$-summary and the simulation construction, we can now present our MC algorithm for \HyperSLP{} in \Cref{alg:main-alg}.
The \lstinline[style=default, language=custom-lang]|modelCheck| procedure is given a CGS $\calG$, a state $\dot{s}$, and a \HyperSLP{} formula $\varphi$, and checks if $\dot{s}, \{\} \models_\calG \varphi$. 
Our algorithm assumes, w.l.o.g., that the path formula $\psi$ contains no nested state formulas.
In case there are nested state formulas, we can eliminate them iteratively:
We recursively check each nested state formula on all states of the CGS, and label all states where the state formula holds with a fresh atomic proposition. 
In the path formula, we can then replace each state formula with a reference to the fresh atomic proposition. 
See, e.g., \cite{EmersonH86,BeutnerF24} for details. 

The main idea of our MC algorithm is to iteratively construct a $(\calG, \dot{s}, k)$-summary $\calA_k$ for each $1 \leq k \leq m + 1$.
Initially, in line \ref{line:ltlToApa}, we construct a $(\calG, \dot{s}, m+1)$-summary $\calA_{m+1}$ using a standard construction to translate the LTL formula $\psi$ to an APA over alphabet $(\{\pi_1, \ldots, \pi_m\} \to S)$, as is, e.g., standard for \HyperCTLS{} \cite{FinkbeinerRS15}.
For each $k$ from $m$ to $1$, we then use the $(\calG, \dot{s}, k+1)$-summary $\calA_{k+1}$ to compute a $(\calG, \dot{s}, k)$-summary $\calA_k$ using the \lstinline[style=default, language=custom-lang]|simulate| construction from \Cref{alg:product} (similar to what we illustrated in \Cref{ex:construction}).
From \Cref{prop:sim}, we can conclude the following invariant:

\begin{restatable}{lemma}{lemmaInd}\label{lem:inv}
	In line \ref{line:inv}, $\calA_k$ is a $(\calG, \dot{s}, k)$-summary.
\end{restatable}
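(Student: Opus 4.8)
The plan is to prove \Cref{lem:inv} by \emph{downward} induction on $k$, running from $k = m+1$ down to $k = 1$, exactly mirroring the loop in \Cref{alg:main-alg}. The two ingredients are the base case supplied by the \texttt{LTLtoAPA} call (line \ref{line:ltlToApa}) and the inductive step, which is a direct application of \Cref{prop:sim} combined with the induction hypothesis. The whole argument is a clean composition of the two automaton characterizations; the only place where the structure of the fragment enters is in checking the side condition of \Cref{prop:sim}.

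For the base case $k = m+1$, I would simply unfold \Cref{def:equiv} at $k = m+1$. Both the simulated quantifier prefix $\widetilde{\flat_{m+1}} \cdots \widetilde{\flat_m}$ and the play-substitution $[\pi_j \mapsto \cdots]_{j=m+1}^m$ have vacuous index ranges, so the defining condition collapses to plain $\Pi \models_\calG \psi$. Since $\calA_{m+1}$ is obtained from the standard LTL-to-APA translation of the body $\psi$ over the alphabet $(\{\pi_1, \ldots, \pi_m\} \to S)$, which by construction accepts $\mathit{zip}(\Pi)$ exactly when $\Pi \models_\calG \psi$, this is precisely the $(\calG, \dot{s}, m+1)$-summary condition.

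For the inductive step, assume $\calA_{k+1}$ is a $(\calG, \dot{s}, k+1)$-summary and let $\calA_k$ be the output of \texttt{simulate} on line \ref{line:product}. I would invoke \Cref{prop:sim} with $V = \{\pi_1, \ldots, \pi_{k-1}\}$, $\pi = \pi_k$, $\vec{x} = \bX_k$, $\flat = \flat_k$, and $\calA = \calA_{k+1}$; its side condition that $\bX_k(\agent{i})$ be quantified in $\flat_k$ for each $\agent{i}$ is exactly guaranteed by the \HyperSLP{} shape of \Cref{def:spe}, which is the one point relying on the fragment's structural restriction. \Cref{prop:sim} then yields, for every $\Pi : \{\pi_1, \ldots, \pi_{k-1}\} \to S^\omega$, that $\mathit{zip}(\Pi) \in \calL(\calA_k)$ holds iff $\widetilde{\flat_k}\ldot \mathit{zip}\big(\Pi[\pi_k \mapsto \play_\calG(\dot{s}, \prod_{\agent{i} \in \agents} f_{\bX_k(\agent{i})})]\big) \in \calL(\calA_{k+1})$. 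The remaining work is to rewrite the inner membership condition, inside the scope of $\widetilde{\flat_k}$, using the induction hypothesis: for each fixed choice of the strategies quantified in $\flat_k$, the extended assignment $\Pi' := \Pi[\pi_k \mapsto \play_\calG(\dot{s}, \prod_{\agent{i} \in \agents} f_{\bX_k(\agent{i})})]$ is a concrete assignment over $\{\pi_1, \ldots, \pi_k\}$, so the hypothesis characterizes $\mathit{zip}(\Pi') \in \calL(\calA_{k+1})$ as $\widetilde{\flat_{k+1}} \cdots \widetilde{\flat_m}\ldot \Pi'[\pi_j \mapsto \cdots]_{j=k+1}^m \models_\calG \psi$. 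Folding $\Pi'$ back in (note $\Pi'[\pi_j \mapsto \cdots]_{j=k+1}^m$ equals $\Pi[\pi_j \mapsto \cdots]_{j=k}^m$, since $\Pi'$ already fixes $\pi_k$) gives exactly the $(\calG, \dot{s}, k)$-summary condition of \Cref{def:equiv}.

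I expect the only genuine subtlety, rather than a real obstacle, to be justifying this substitution under the quantifier prefix cleanly. Because the hypothesis equivalence is \emph{uniform} in the strategies bound by $\flat_k$, it may be substituted verbatim beneath $\widetilde{\flat_k}$; to see that the concatenated prefix $\widetilde{\flat_k}\,\widetilde{\flat_{k+1}} \cdots \widetilde{\flat_m}$ is well-formed, I would appeal to the global $\alpha$-renaming assumption, which makes the bound strategy variables of distinct blocks disjoint so they cannot interfere. With that observation the nested quantifiers combine associatively and the equivalence propagates through the prefix; everything else is routine bookkeeping of the (sometimes vacuous) index ranges.
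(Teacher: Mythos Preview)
Your proposal is correct and follows essentially the same approach as the paper: downward induction on $k$, with the base case handled by the LTL-to-APA translation and the inductive step obtained by chaining \Cref{prop:sim} with the induction hypothesis via the intermediate assignment $\Pi' = \Pi[\pi_k \mapsto \play_\calG(\dot{s}, \prod_{\agent{i}} f_{\bX_k(\agent{i})})]$. Your write-up is in fact slightly more explicit than the paper's in checking the side condition of \Cref{prop:sim} against \Cref{def:spe} and in justifying the substitution under $\widetilde{\flat_k}$ via the $\alpha$-renaming assumption.
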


After the loop, we are thus left with a $(\calG, \dot{s}, 1)$-summary $\calA_1$ (over the simpleton alphabet $(\emptyset \to S)$) and can check if $\dot{s}, \{\} \models_\calG \varphi$ by testing $\calA_1$ for emptiness (line \ref{line:emptinessCheck}):

\begin{restatable}{lemma}{lemmaInit}\label{lem:1summary}
	For any $(\calG, \dot{s}, 1)$-summary $\calA$, we have that $\calL(\calA) \neq \emptyset$ if and only if $\dot{s}, \{\} \models_\calG \varphi$.
\end{restatable}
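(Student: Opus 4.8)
The plan is to unwind both sides of the claimed equivalence and to observe that, for $k=1$, Definition~\ref{def:equiv} specializes \emph{verbatim} to the top-level satisfaction of $\varphi$. First I would exploit the degenerate shape of the alphabet: a $(\calG,\dot s,1)$-summary runs over $(\{\pi_1,\dots,\pi_0\}\to S)=(\emptyset\to S)$, which is a singleton whose only element is the empty function. Hence $(\emptyset\to S)^\omega$ is itself a singleton, containing only the constant word all of whose letters are the empty function. There is likewise a unique path assignment $\Pi_\emptyset:\emptyset\to S^\omega$ (the empty map), and by the definition of zipping $\mathit{zip}(\Pi_\emptyset)$ is exactly this constant word. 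Consequently $\calL(\calA)$ is either empty or all of $(\emptyset\to S)^\omega$, so
\[
\calL(\calA)\neq\emptyset \quad\text{iff}\quad \mathit{zip}(\Pi_\emptyset)\in\calL(\calA).
\]

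Next I would apply Definition~\ref{def:equiv} with $k=1$ to the word $\mathit{zip}(\Pi_\emptyset)$, which yields that $\mathit{zip}(\Pi_\emptyset)\in\calL(\calA)$ holds if and only if
\[
\widetilde{\flat_1}\cdots\widetilde{\flat_m}\ldot \Pi_\emptyset\Big[\pi_j\mapsto\play_\calG\big(\dot s,\prod_{\agent{i}\in\agents}f_{\bX_j(\agent{i})}\big)\Big]_{j=1}^m\models_\calG\psi.
\]
Since $\Pi_\emptyset$ has empty domain, the bracketed assignment is just $\big[\pi_j\mapsto\play_\calG(\dot s,\prod_{\agent{i}\in\agents}f_{\bX_j(\agent{i})})\big]_{j=1}^m$, so it remains to show that this display is literally the unfolding of $\dot s,\{\}\models_\calG\varphi$.

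For the last step I would prove, by induction on the length of the quantifier prefix, that the meta-level quantification $\widetilde{\flat_1}\cdots\widetilde{\flat_m}$ coincides with the Tarskian semantics of the state-formula quantifiers of $\varphi$. The operator $\widetilde{\cdot}$ is defined so that a block $\flat=\quant_1 x_1\cdots\quant_n x_n$ becomes $\quant_1 f_{x_1}\in\strats{\calG}\cdots\quant_n f_{x_n}\in\strats{\calG}$, which is precisely the clause $s,\Delta\models_\calG\quant x\ldot\varphi$ iff $\quant f\in\strats{\calG}\ldot s,\Delta[x\mapsto f]\models_\calG\varphi$ iterated over all variables. The induction maintains a strategy assignment $\Delta$ with $\Delta(x)=f_x$ for every already-quantified $x$; since every variable $\bX_j(\agent{i})$ is quantified in block $\flat_j$ (by the \HyperSLP{} shape of $\varphi$), after the full prefix $\Delta$ is total on the relevant variables and $\Delta(\bX_j(\agent{i}))=f_{\bX_j(\agent{i})}$. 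At the leaf, the atomic clause gives
\[
\dot s,\Delta\models_\calG\psi\big[\pi_k:\bX_k\big]_{k=1}^m \quad\text{iff}\quad \big[\pi_k\mapsto\play_\calG(\dot s,\prod_{\agent{i}\in\agents}\Delta(\bX_k(\agent{i})))\big]_{k=1}^m\models_\calG\psi,
\]
which matches the display above. Chaining the three equivalences gives $\calL(\calA)\neq\emptyset$ iff $\dot s,\{\}\models_\calG\varphi$.

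The whole argument is definition chasing, and the only points needing care are exactly the two that do not appear in the general-$k$ intuition: the degenerate singleton alphabet (so that emptiness of $\calA$ is genuinely equivalent to rejecting the single candidate word $\mathit{zip}(\Pi_\emptyset)$) and the bookkeeping that identifies the abbreviation $\widetilde{\flat_1}\cdots\widetilde{\flat_m}$ with the official quantifier semantics. Neither is a real obstacle, since Definition~\ref{def:equiv} was tailored precisely so that its $k=1$ instance reduces to the satisfaction of $\varphi$ at $\dot s$.
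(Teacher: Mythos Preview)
Your proposal is correct and follows essentially the same route as the paper's proof: both observe that the alphabet $(\emptyset\to S)$ is a singleton so that non-emptiness reduces to acceptance of $\mathit{zip}(\{\})$, then instantiate Definition~\ref{def:equiv} at $k=1$, and finally identify the resulting display with the unfolded semantics of $\dot s,\{\}\models_\calG\varphi$. The only difference is that you spell out the last identification via an explicit induction on the quantifier prefix, whereas the paper simply asserts that the display ``exactly expresses $\dot s,\{\}\models_\calG\varphi$ in the \HyperSL{} semantics''; your extra care is sound but not a genuinely different argument.
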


From \Cref{lem:inv,lem:1summary}, it follows that \lstinline[style=default, language=custom-lang]|modelCheck($\calG$,$\dot{s}$,$\varphi$)| returns \texttt{SAT} iff $\dot{s}, \{\} \models_\calG \varphi$, proving \Cref{theo:dec}.

\subsection{Model-Checking Complexity}

The determinization in line \ref{line:toDPA} of \Cref{alg:product} results in a DPA $\calA_\mathit{det}$ of doubly exponential size (cf.~\Cref{prop:toDPA}).
The size of $\calB$ is then linear in the size of $\calA_\mathit{det}$ and $\calG$.
In the worst case, each call of \lstinline[style=default, language=custom-lang]|simulate| thus increases the size of the automaton by two exponents.
For a \HyperSLP{} formula with block-rank $m$, \lstinline[style=default, language=custom-lang]|simulate|  is called $m$ times, so the final automaton $\calA_1$ has, in the worst case, $2m$-exponential many states (in the size of $\psi$ and $\calG$).
As we can check emptiness of APAs over the singleton alphabet $(\emptyset \to S)$ in polynomial time, we get:

\begin{restatable}{theorem}{compTheo}\label{theo:comp}
	Model checking for a \HyperSLP{} formula with block-rank $m$ is in $2m$-\EXPTIME{}.
\end{restatable}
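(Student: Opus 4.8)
The plan is to carry out the complexity analysis of \lstinline[style=default, language=custom-lang]|modelCheck| (\Cref{alg:main-alg}) sketched above, making the bookkeeping of the composed automaton blowups precise. By \Cref{lem:inv} and \Cref{lem:1summary} the algorithm is correct, so it remains only to bound its running time, which is governed by the sizes of the automata $\calA_{m+1}, \ldots, \calA_1$ it constructs. Since these sizes are nondecreasing as $k$ decreases, the total cost is dominated by the construction of $\calA_1$ together with the final emptiness test.

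First I would fix the base case: the standard \lstinline[style=default, language=custom-lang]|LTLtoAPA| construction in line \ref{line:ltlToApa} yields an APA $\calA_{m+1}$ with $\calO(|\psi|)$ states. I would stress that although its alphabet $(\{\pi_1,\dots,\pi_m\} \to S)$ has single-exponential size $|S|^m$, this influences only the representation of the transition function and not the state count, which is what drives the subsequent determinizations.

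Next I would analyze a single iteration. For an input APA with $n$ states, line \ref{line:toDPA} invokes \Cref{prop:toDPA} to obtain a DPA with $2^{2^{\calO(n)}}$ states, and the automaton returned by \lstinline[style=default, language=custom-lang]|simulate| has state set $Q \times S$, hence $\calO(|S|\cdot 2^{2^{\calO(n)}}) = 2^{2^{\calO(n)}}$ states (the $|S|$ factor, and the $|S|^m$ alphabet, are each at most single-exponential and are swallowed by the double exponent). Setting $n_{m+1} = \calO(|\psi|)$ and $n_k = 2^{2^{\calO(n_{k+1})}}$, the recurrence shows that after the $m$ calls to \lstinline[style=default, language=custom-lang]|simulate| the automaton $\calA_1$ has a number of states bounded by a tower of exponentials of height $2m$ in $|\psi|$ and $|\calG|$. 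The time to build the whole chain is polynomial in this final size and hence itself $2m$-fold exponential.

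Finally, the emptiness test in line \ref{line:emptinessCheck} operates on $\calA_1$ over the singleton alphabet $(\emptyset \to S)$, which reduces to a parity game on $\calA_1$'s state graph and runs in time polynomial in $n_1$; a polynomial (or even quasi-polynomial) function of a $2m$-fold exponential quantity remains $2m$-fold exponential, so this step does not raise the tower height. The main obstacle is precisely this absorption argument: I must confirm that the two persistent polynomial-size contaminants---the factor $|S|$ introduced at every product step and the single-exponential alphabet $|S|^m$---never compound across iterations, and that the emptiness check adds no further exponent. The one point needing genuine care is that the blowup in \Cref{prop:toDPA} depends only on the \emph{number of states} of the APA and not on its alphabet, which is what prevents the alphabet growth from stacking; once this is pinned down, the height-$2m$ tower, and thus membership in $2m$-\EXPTIME{}, follows.
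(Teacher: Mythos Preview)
Your proposal is correct and follows essentially the same approach as the paper: track the automaton size through the $m$ calls to \lstinline[style=default, language=custom-lang]|simulate|, each of which contributes two exponents via the determinization in \Cref{prop:toDPA}, and observe that the final emptiness check over the singleton alphabet adds no further exponent. The paper's own proof is just a terser version of your argument and does not spell out the absorption of the $|S|$ factor or the alphabet size, which you handle carefully.
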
 

From \Cref{lem:hyperatl-in-hypersl} and the lower bounds known for \HyperATLS{} \cite{BeutnerF23}, it follows that our algorithm is asymptotically almost optimal:

\begin{restatable}{lemma}{lowerBound}\label{prop:lowerBound}
	Model checking for a \HyperSLP{} formula with block-rank $m$ is $(2m-1)$-\EXPSPACE{}-hard.
\end{restatable}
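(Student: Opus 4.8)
The plan is to obtain the lower bound by reduction from the model-checking problem of \HyperATLS{}, reusing the translation of \Cref{lem:hyperatl-in-hypersl} rather than building a fresh Turing-machine encoding. First I would recall the lower bound already established for \HyperATLS{}: by \cite{BeutnerF23}, model checking a \HyperATLS{} formula with $m$ strategic path quantifiers against a CGS is $(2m-1)$-\EXPSPACE{}-hard. As in the upper bound of \Cref{theo:comp}, the two exponents contributed by each quantifier originate from the determinization step required to eliminate a strategic quantifier, which is why \HyperATLS{} lives at the same level of the tower as \HyperSLP{}.

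The central step is a block-rank bookkeeping argument. The translation underlying \Cref{lem:hyperatl-in-hypersl} replaces each quantifier $\llangle A \rrangle \pi$ (resp.~$\llbracket A \rrbracket \pi$) by exactly one block $\exists^*\forall^*$ (resp.~$\forall^*\exists^*$) of strategy quantifiers whose variables are used to construct precisely the single path variable $\pi$, as already observed in \Cref{sec:sub:spe-exp}. Consequently, a \HyperATLS{} formula with $m$ path quantifiers is mapped to a \HyperSLP{} formula of block-rank exactly $m$. I would verify this correspondence with care, since it is precisely what pins the hardness to the correct height of the tower; in particular I must check that the translation neither merges two \HyperATLS{} quantifiers into one block nor splits one quantifier across several blocks.

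Finally, I would assemble the reduction. The translation leaves the CGS $\calG$ unchanged and enlarges the formula only polynomially, so it constitutes a polynomial-time many-one reduction. Combined with the truth preservation guaranteed by \Cref{lem:hyperatl-in-hypersl} (namely $\calG \models_\text{\HyperATLS{}} \varphi$ iff $\calG \models \varphi'$), this transports the $(2m-1)$-\EXPSPACE{}-hardness of \HyperATLS{} model checking with $m$ path quantifiers to \HyperSLP{} model checking with block-rank $m$, yielding the claim.

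The main obstacle is the bookkeeping between the \HyperATLS{} quantifier count and the resulting block-rank, together with confirming that the hardness instances of \cite{BeutnerF23} are indeed parameterized by $m$ path quantifiers (and not by some other alternation measure), so that the tower heights line up exactly. A secondary, more routine point is to confirm that a polynomial many-one reduction preserves hardness at each fixed level $(2m-1)$ of the \EXPSPACE{} hierarchy, which follows from the standard closure of these classes under polynomial-time reductions.
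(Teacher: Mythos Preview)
Your proposal is correct and follows essentially the same route as the paper: invoke the $(2m-1)$-\EXPSPACE{}-hardness of \HyperATLS{} model checking with $m$ quantifiers from \cite{BeutnerF23}, observe that the translation of \Cref{lem:hyperatl-in-hypersl} is linear and yields a \HyperSLP{} formula of block-rank exactly $m$ (each $\llangle A \rrangle \pi$ or $\llbracket A \rrbracket \pi$ becoming one block), and conclude. The paper's proof is terser but relies on exactly the same two ingredients you identify.
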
 

\begin{table}
	\caption{We compare \tool{} and \mcmassl{} on the scheduler problem from \cite{CermakLM15}. 
		We give the size of the system ($\boldsymbol{|S|}$), the size of the reachable fragment ($\boldsymbol{|S_\mathit{reach}|}$), and the times in seconds ($\boldsymbol{t}$). 
		The timeout (TO) is set to 1 h.\vspace{0mm}
	}\label{tab:results-sl}
	\centering
	\small
		\begin{tabular}{lllll}
			\toprule
			$\boldsymbol{n}$ & $\boldsymbol{|S|}$ & $\boldsymbol{|S_\mathit{reach}|}$ &  $\boldsymbol{t}_{\mcmassl{}}$& $\boldsymbol{t}_\tool{}$  \\
			\midrule
			2 &72 & 9 & \textbf{0.1} & 0.4  \\
			3 &432& 21 & 6.71 & \textbf{1.9}  \\
			4 &2592& 49 & 313.7 & \textbf{24.5}   \\
			5 &15552& 113& TO & \textbf{332.1} \\
			\bottomrule
		\end{tabular}
	
\end{table}

\subsection{Beyond \HyperSLP{}}\label{sec:beyond_spe}

\HyperSLP{} is defined purely in terms of the structure of the quantifier prefix.
As soon as strategy variables are quantified in an order such that they cannot be grouped together, MC  becomes, in general, undecidable:
Already the simplest such property $\quant x. \quant y. \quant z. \quant w\ldot \allowbreak \psi[\pi_1 : (x, z), \pi_2 : (y, w)]$, leads to undecidable MC (see \ifFull{\Cref{app:beyond}}{the full version \cite{fullVersion}}).
The fragment we have identified is thus the largest possible (when only considering the quantifier prefix).
Any further study into decidable fragments of \HyperSL{} needs to impose restrictions beyond the prefix and, e.g., analyze how different path variables are related within an LTL path formula (see also \Cref{sec:conclusion}).

\section{Implementation and Experiments}\label{sec:imp}

We have implemented our \HyperSLP{} model-checking algorithm in the \tool{} tool \cite{BeutnerF24}.

\subsection{Model-Checking For Strategy Logic}\label{sec:sub:atls}

We compare \tool{} against \mcmassl{} \cite{CermakLM15} on (non-hyper) \SL[1G] properties (cf.~\Cref{sec:sub:spe-exp}).
In  \Cref{tab:results-sl}, we depict the verification times for the scheduling problem from \cite{CermakLM15} (which can be expressed in \SL[1G] and \ATLS{}).
As in \cite{BeutnerF24}, we observe that \tool{} performs much faster than \mcmassl{}, which we largely accredit to \tool{}'s efficient automata backend using \texttt{spot} \cite{Duret-LutzRCRAS22}.
Note that we use \mcmassl{} and \tool{} directly on the original model, i.e., we did not perform any prepossessing using, e.g., abstraction techniques \cite{BallK06,BelardinelliL17,Belardinelli0JM23} (which would reduce the system size and make the verification more scaleable for both tools).

\subsection{Model-Checking For Hyperproperties}\label{sec:sub:hyper}

\begin{table}
	\caption{We check random formulas from the (\textbf{Sec}), (\textbf{GE}), and (\textbf{Rnd}) templates on the ISPL models from \cite{LomuscioQR09}.
		For each model and template, we sample 10 formulas and report the average time (in seconds). 
	}\label{tab:results-hyper}

	\centering
	\small
	\begin{tabular}{lccccc}
		\toprule
		\textbf{Model} & \textbf{Sec} & \textbf{GE} & \textbf{Rnd}$_2$ & \textbf{Rnd}$_3$ & \textbf{Rnd}$_4$\\
		\midrule
		\textsc{bit-transmission} & 0.6 & 0.7 & 0.8  & 0.8 & 2.7 \\
		\textsc{book-store}  & 0.4 & 0.4 & 0.4  & 0.5 & 0.5 \\
		\textsc{card-game}   & 0.4 & 0.5 & 0.4  & 0.5 & 0.5 \\
		\textsc{dining-cryptographers}   & 0.6 & 2.7 & 11.4  & 22.6 & 10.3 \\
		\textsc{muddy-children}  & 0.4 & 3.0 & 1.7  & 0.8 & 0.9 \\
		\textsc{simple-card-game}   & 0.3 & 3.4 & 2.9 & 25.3 & 32.6  \\
		\textsc{software-development}  & - & - & -  & - & - \\
		\textsc{strongly-connected}  & 0.6 & 0.8 & 0.8 & 1.7 & 3.2  \\
		\textsc{tianji-horse-racing} & 0.4 & 0.5 & 0.4 & 0.5 & 0.5  \\
		\bottomrule
	\end{tabular}
\end{table}

In a second experiment, we demonstrate that \tool{} can verify hyperproperties on various MASs from the literature. 
We use the ISPL models from the \mcmas{} benchmarks suit \cite{LomuscioQR09}, and generate random \HyperSLP{} formulas from various property templates:
\begin{itemize}[leftmargin=*]
	\item \textbf{(Sec):} We check if some agent $\agent{i}$ can reach some target state without leaking information about some secret AP via some observable AP. 
	Concretely, we check if $\agent{i}$ can play such that on some other path, the same observation sequence is coupled with a different high-security input, a property commonly referred to as \emph{non-inference} \cite{McLean94} or \emph{opacity} \cite{ZhangYZ19,SabooriH13}. 
	\item \textbf{(GE):} We check if a given \SL[1G] formula holds on all input sequences for which \emph{some} winning output sequence exists, as is, e.g., required in \emph{good-enough} synthesis \cite{AlmagorK20,AminofGR21}.
	\item \textbf{(Rnd):} We randomly generate \HyperSLP{} formulas with block-rank $2, 3$, and $4$ (called \textbf{Rnd}$_2$, \textbf{Rnd}$_3$, and \textbf{Rnd}$_4$, respectively).
\end{itemize}

We depict the results in \Cref{tab:results-hyper}, demonstrating that \tool{} can handle most instances.
The only exception is the \textsc{software-development} model, which includes $\approx$15k states and is therefore too large for an automata-based representation. 

We stress that we do not claim that all formulas in each of the templates model realistic properties in each of the systems. 
Rather, our evaluation \textbf{(1)} demonstrates that \HyperSLP{} can express interesting properties, and \textbf{(2)} empirically shows that \tool{} can check such properties in existing ISPL models (confirming this via further real-world scenarios is interesting future work).

\subsection{Model-Checking For Optimal Planning}\label{sec:sub:planning}

\begin{table}
	\caption{We use \tool{} to solve the optimal adversarial planning problem (cf.~\Cref{ex:planning}) for varying sizes. Times are given in seconds, and the TO is set to 120 sec. \vspace{0mm}	}\label{tab:results-planning}

	\centering
	\small
	\begin{tabular}{llllllllll}
		\toprule
		\textbf{Size} & 40 & 50 & 60 & 70 & 80 & 90 & 100 & 110 & 120\\
		\midrule
		\textbf{t} & 14.2 & 22.0 & 31.2 & 42.5& 57.6 & 70.1 & 86.8 & 104.6 & TO\\
		\bottomrule
	\end{tabular}

\end{table}

In our last experiment, we challenge \tool{} with planning examples as those outlined in \Cref{ex:planning}.
We randomly generate planning instances between the robot $\agent{r}$, adversary $\agent{a}$, and $\agent{\mathit{ndet}}$, and check if robot $\agent{r}$ \emph{can} reach the goal following some shortest path in the problem. 
For a varying size $n$, we randomly create $10$ planning instances with $n$ states.
We report the verification times in \Cref{tab:results-planning}.
With increasing size, the running time of \tool{} clearly increases, but the increase seems to be quadratic rather than exponential.

\section{Conclusion and Future Work}\label{sec:conclusion}

We have presented \HyperSL{}, a new temporal logic that extends strategy logic with the ability to reason about hyperproperties. 
\HyperSL{} can express complex properties in MASs that require a combination of strategic reasoning and hyper-requirements (such as optimalilty, GE, non-interference, and quantitative Nash equilibria); many of which were out of reach of existing logics. 
As such, \HyperSL{} can serve as a unifying foundation for an exact exploration of the interaction of strategic behavior with hyperproperties, and provides a formal language to express (un)decidability results. 
Moreover, we have taken a first step towards the ambitious goal of automatically model-checking \HyperSL{}.
Our fragment \HyperSLP{} subsumes many relevant other logics and captures unique properties not expressible in existing frameworks. 
Our implementation in \tool{} shows that our MC approach is practical in small MASs.

A particularly interesting future direction is to search for further fragments of \HyperSL{} with decidable model checking.
As argued in \Cref{sec:beyond_spe}, any such fragment needs to take the structure of the LTL-formula(s) into account. 
For example, Mogavero et al.~\cite{MogaveroMS13} showed that \SL{}[CG] (a fragment of \SL{} that only allows conjunctions of goal formulas) still admits behavioral strategies (i.e., strategies that do not depend on future or counterfactual decisions of other strategies).
When extending this to our hyper setting, it seems likely that if a strategy is used on multiple path variables, but these paths occur in disjoint conjuncts of path formulas, MC remains decidable. 
We leave such extensions as future work.

\begin{acks}
	This work was supported by the European Research Council (ERC) Grant HYPER (101055412), and by the German Research Foundation (DFG) as part of TRR 248 (389792660).
\end{acks}

\bibliographystyle{ACM-Reference-Format} 
\bibliography{references}

\iffullversion

\clearpage

\appendix

\section{Additional Material for Section \ref{sec:prelim}}\label{app:prelim}

In this section we give details on the semantics of APAs. 
To make our later proofs (which use the APA semantics) easier, we use a DAG-based semantics (opposed to the more prominently used -- but equivalent -- tree-based semantics).
We refer the reader to \cite{Vardi95} for more details.

\begin{definition}
	For a set $Q$, we write $\bool^+(Q)$ for the set of all positive boolean formulas over $Q$, i.e., all formulas generated by the grammar
	\begin{align*}
		\theta := q \mid \theta_1 \land \theta_2 \mid \theta_1 \lor \theta_2
	\end{align*}
	where $q \in Q$.
	Given a subset $X \subseteq Q$ and $\theta \in \bool^+(Q)$, we write $X \models \theta$ if the assignment that maps all states in $X$ to $\top$ and those in $Q \setminus X$ to $\bot$ satisfies $\Psi$. 
	For example $\{q_0, q_1\} \models q_0 \land (q_1 \lor q_2)$.
\end{definition}

Let $\calA = (Q, q_0, \delta, c)$ be an APA.
A run DAG of $\calA$ is a pair $\mathbb{D} = (V, E)$ of nodes and edges such that $V \subseteq Q \times \nat$, $E \subseteq \bigcup_{i \in \nat} (Q \times \{i\}) \times (Q \times \{i+1\})$.
That is, each node in $V$ consist of a state in $Q$ and a depth and the edges in $E$ only reach from depth $i$ to depth $i+1$.
For every $(q, i) \in V$, we define $\mathit{sucs}(q, i) := \{q' \mid ((q, i), (q', i+1)) \in E\}$ as the state component of $(q, i)$'s successor nodes. 

A run of $\calA$ on a word $u \in \Sigma^\omega$ is a run DAG $\mathbb{D} = (V, E)$ such that $(q_0, 0) \in V$ and for every $(q, i) \in V$, $\mathit{sucs}(q, i) \models \delta(q, u(i))$. 
That is, the run DAG starts in the initial state $q_0$ at depth $0$, and for each node in the DAG the successors of each node satisfy the transition formula given by the transition function $\delta$.

For example, consider a node $(q, i)$ and assume that $\delta(q, u(i)) = q_1 \lor q_2$.
Then the DAG must include either $q_1$ or $q_2$ (or both) as successors in the next level.
In particular, if the transition function $\delta$ uses only disjunctions (no conjunctions) the automataon is non-deterministic.
In this case, each node in the DAG can have a unique successor; the DAG is a line (a infinite sequence of states).
Conversely, if $\delta(q, u(i)) = q_1 \land q_2$ then both $q_1$ and $q_2$ need to appear in the next level, i.e., we need to construct accepting runs from both of these states.

A run  DAG $\mathbb{D}$ is accepting if for every infinite path in the DAG the minimal color that occurs infinitely many times (as given by $c$) is even.
We define $\calL(\calA) \subseteq \Sigma^\omega$ as all infinite words on which $\calA$ has an accepting run DAG. 

\section{Additional Material for Section \ref{sec:relationToLogics}}\label{app:relationToLogics}

We provide some background on the temporal logics \HyperLTL{}, \SL{}, \HyperATLS{}, and \HyperATLSS{} and give their full semantics.

\subsection{\SL{} and \HyperSL{}}\label{app:sl}

In this subsection, we provide addition details on the relation of \SL{} and \HyperSL{} (cf.~\Cref{sec:translation_sl_into_hypersl}).
We consider a variant of \SL{} that allows multiple agents to share the strategy by using explicit agent binding, but disallows agent bindings under temporal operators.

\begin{remark}
	The strategy logic by \citet{MogaveroMPV14} allows arbitrary nesting of agent binding within temporal operators.
	Our variant is equivalent to \SL{}[BG] \cite{MogaveroMPV14} -- a fragment that follows a strict separation between state and path formulas and thereby forbids agent binding under temporal operators. 
	Note that \SL[BG] strictly subsumes the strategy logic by Chatterjee et al.~\cite{ChatterjeeHP10}.
	We choose to restrict to the \SL{}[BG] fragment as it is syntactically closer to temporal logics like \CTLS{} and \HyperCTLS{} and, thereby, also to \HyperSL{}.
\end{remark}

State and path formulas in \SL{} are defined as follows:
\begin{align*}
	\psiSL &:= a \mid \varphiSL \mid \neg \psiSL \mid \psiSL \land \psiSL \mid \ltlN \psiSL \mid \psiSL \ltlU \psiSL  \\
	\varphiSL &:= \psiSL \mid \varphiSL \land \varphiSL \mid \varphiSL \lor \varphiSL \mid \forall x\ldot \varphiSL \mid \exists x\ldot \varphiSL \mid (\agent{i}, x) \varphiSL 
\end{align*}
where $a \in \ap$, $x \in \stratVars$, and $\agent{i} \in \agents$. 
Importantly, we assume that every state formula occurring in a path formula is closed.

The idea of  \SL{} is to separate the quantification of a strategy and the binding of a strategy to some agent. 
To accomplish the latter, it features an explicit agent-binding construct $(\agent{i}, x) \varphiSL$ which evaluates $\varphiSL$ after binding agent $\agent{i}$ to strategy $x$. 

\paragraph{Semantics}
Assume $\calG = (S, s_0, \moves, \kappa, L)$ is a fixed CGS.
Given a path $p \in S^\omega$ we define the semantics of path formulas as expected:
\begin{align*}
	p &\models_\calG a &\text{iff } \quad&a \in L(p(0))\\
	p &\models_\calG \varphiSL &\text{iff } \quad&p(0), \{\}, \{\} \models_\calG\varphiSL\\
	p&\models_\calG \psiSL_1 \land \psiSL_2 &\text{iff } \quad &p \models_\calG \psiSL_1 \text{ and } p \models_\calG \psiSL_2\\
	p &\models_\calG \neg \psiSL &\text{iff } \quad &p \not\models_\calG \psiSL\\
	p &\models_\calG \ltlN \psiSL &\text{iff } \quad &p[1, \infty] \models_\calG \psiSL\\
	p &\models_\calG \psiSL_1 \ltlU \psiSL_2 &\text{iff } \quad &\exists j \in \nat \ldot p[j, \infty] \models_\calG \psiSL_2 \text{ and } \\
	&  \quad\quad\quad\quad\quad\quad\quad\quad\quad\quad \forall 0 \leq k < j\ldot p[k, \infty] \models_\calG \psiSL_1 \span \span
\end{align*}
In the semantics of state formulas, we keep track of a strategy for each strategy variable via a (partial) strategy assignment $\Delta : \stratVars \rightharpoonup \strats{\calG}$, similar to the \HyperSL{} semantics.
As \SL{} works with explicit agent bindings, we also keep track of a strategy for each agent using a (partial) function $\Theta : \agents \rightharpoonup \strats{\calG}$. 
We can then define:
\begin{align*}
	&s, \Delta, \Theta \models_\calG \forall x \ldot\varphiSL &\text{iff} \quad &\forall f \in\strats{\calG} \ldot s, \Delta[x \mapsto f], \Theta \models_\calG \varphiSL\\
	&s, \Delta, \Theta \models_\calG \exists x \ldot\varphiSL &\text{iff} \quad &\exists f \in\strats{\calG} \ldot s, \Delta[x \mapsto f], \Theta \models_\calG \varphiSL\\
	&s, \Delta, \Theta \models_\calG (\agent{i}, x)\varphiSL &\text{iff} \quad &s, \Delta, \Theta[\agent{i} \mapsto \Delta(x)] \models_\calG \varphiSL\\
	&s, \Delta, \Theta \models_\calG \varphiSL_1 \land  \varphiSL_2 &\text{iff} \quad &s, \Delta, \Theta \models_\calG \varphiSL_1 \text{ and } s, \Delta, \Theta \models_\calG \varphiSL_2\\
	&s, \Delta, \Theta \models_\calG \varphiSL_1 \lor  \varphiSL_2 &\text{iff} \quad &s, \Delta, \Theta \models_\calG \varphiSL_1 \text{ or } s, \Delta, \Theta \models_\calG \varphiSL_2\\
	&s, \Delta, \Theta \models_\calG \psiSL &\text{iff} \quad &\play_\calG\big(s, \prod_{\agent{i} \in \agents} \Theta (\agent{i}) \big) \models_\calG \psiSL
\end{align*}
Strategy quantification updates the binding in $\Delta$ (as in \HyperSL{}), whereas strategy binding updates the assignment of agents in $\Theta$. 
For each path formula we use the strategy profile $\Theta$ (mapping each agent to a strategy) to construct the path on which we evaluate $\psi$.
We write $\calG \models_{\text{\SL}} \varphiSL$ if $s_0, \{\}, \{\} \models_\calG \varphiSL$ in the \SL{} semantics.

\paragraph{Encoding}

We can easily encode \SL{} into \HyperSL{}.
Instead of using explicit agent bindings as in \SL{}, in \HyperSL{}, each path is annotated with an explicit strategy profile that assigns a strategy (variable) to each agent.
We assume that in the \SL{} formula no two quantifiers quantify the same strategy variable (which we can always ensure by $\alpha$-renaming).
Let $\dot{\pi}$ denote some \emph{fixed} path variable.
In our translation, we maintain an auxiliary mapping $\bX : \agents \rightharpoonup \stratVars$ mapping agents to strategy variables.
We then translate path formulas as follows:
\begin{align*}
	\slToHyper{a} &:= a_{\dot{\pi}} &  \slToHyper{\neg \psiSL} &:= \neg \slToHyper{\psiSL} & \slToHyper{\psiSL_1 \land \psiSL_2} &:= \slToHyper{\psiSL_1} \land \slToHyper{\psiSL_2}\\
	\slToHyper{\varphiSL} &:=  \slToHyper{\varphiSL}^{\{\}}  & \slToHyper{\ltlN \psiSL} &:= \ltlN \slToHyper{\psiSL} &  \slToHyper{\psiSL_1 \ltlU \psiSL_2} &:= \slToHyper{\psiSL_1} \ltlU \slToHyper{\psiSL_2}
\end{align*}
For state formulas we define:
\begin{align*}
	\slToHyper{\varphiSL_1 \land \varphiSL_2}^\bX &:= \slToHyper{\varphiSL_1}^\bX \land \slToHyper{\varphiSL_2}^\bX &\slToHyper{\varphiSL_1 \lor \varphiSL_2}^\bX &:= \slToHyper{\varphiSL_1}^\bX \lor \slToHyper{\varphiSL_2}^\bX\\
	 \slToHyper{\forall x\ldot \varphiSL}^\bX &:= \forall x\ldot \slToHyper{\varphiSL}^\bX  & \slToHyper{\exists x\ldot \varphiSL}^\bX &:= \exists x\ldot \slToHyper{\varphiSL}^\bX \\
	\slToHyper{\psiSL}^\bX &:= \slToHyper{\psiSL}[\dot{\pi} : \bX] & \slToHyper{(\agent{i}, x) \varphiSL}^\bX &:= \slToHyper{\varphiSL}^{\bX[\agent{i} \mapsto x]}	
\end{align*}
For path formulas, we resolve all atomic propositions on the fixed path variable $\dot{\pi}$. 
For state formulas, we record the strategy variable played by each agent in the function $\bX$, and whenever we reach a path formula, use $\bX$ to construct the fixed path $\dot{\pi}$. 
Note that the formula resulting from the translation uses boolean combination of state formulas. 
As already argued in \Cref{sec:hypersl}, we can bring such formulas into the \HyperSL{} syntax by introducing multiple paths (see \Cref{ex:state-conjunctions}). 
We can show that our translation is correct and thus prove \Cref{lem:sl-in-hypersl}:

\reSlToHyper*
\begin{proof}
	We can easily show that for any game structure $\calG$ and \SL{} formula $\varphiSL$, we have that $\calG \models_{\text{\SL}} \varphiSL$ if an only if $\calG \models \slToHyper{\varphiSL{}}^{\{\}}$. 
\end{proof}

\subsection{\HyperATLS{} and \HyperSL{}}\label{app:hyperatl}

In this subsection, we provide addition details on the relation of \HyperATLS{} and \HyperSL{} (cf.~\Cref{sec:translation_hyperatl_into_hypersl}).

Our hyper variant of strategy logic considers strategies as first-order objects that can be compared in different strategy profiles. 
A weaker form of strategic reasoning is offered in \ATLS{} \cite{AlurHK02} by only reasoning about the \emph{outcome} of a strategic interaction. 
The \ATLS{} formula $\llangle A \rrangle \psi$ expresses that the agents in $A$ have a joint strategy to enforce that the system follows some path that satisfies $\psi$.
\HyperATLS{} \cite{BeutnerF21,BeutnerF23} is a hyper-variant of ATL that combines strategic reasoning with the ability to express hyperproperties. 
Similar to \ATLS{}, \HyperATLS{} only considers the outcomes of a strategy but binds this outcome to a path variable which allows comparison w.r.t. a hyperproperty. 
Formulas in \HyperATLS{} are generated by the following grammar:
\begin{align*}
	\psiATL &:= a_\pi \mid \neg \psiATL \mid \psiATL \land \psiATL \mid \ltlN \psiATL \mid \psiATL \ltlU \psiATL \\
	\varphiATL &:=\llangle A \rrangle \pi\ldot \varphiATL \mid \llbracket A \rrbracket \pi\ldot \varphiATL \mid \psiATL
\end{align*}
where $a \in \ap$, $\pi \in \pathVars$, and $A \subseteq \agents$.

\paragraph{Semantics}

The semantics of \HyperATLS{} operates on a path assignment $\Pi : \pathVars \rightharpoonup S^\omega$.
For path formulas, we follow a similar semantics as used in \HyperSL{} (cf.~\Cref{sec:hypersl}):
\begin{align*}
	\Pi &\models_\calG a_\pi &\text{iff } \quad&a \in L(\Pi(\pi)(0))\\
	\Pi &\models_\calG \psiATL_1 \land \psiATL_2 &\text{iff } \quad &\Pi \models_\calG \psiATL_1 \text{ and } \Pi \models_\calG \psiATL_2\\
	\Pi &\models_\calG \neg \psiATL &\text{iff } \quad &\Pi \not\models_\calG \psiATL\\
	\Pi &\models_\calG \ltlN \psiATL &\text{iff } \quad &\Pi[1, \infty] \models_\calG \psiATL\\
	\Pi &\models_\calG \psiATL_1 \ltlU \psiATL_2 &\text{iff } \quad &\exists j \in \nat\ldot \Pi[j, \infty] \models_\calG \psiATL_2 \text{ and } \\
	&  \quad\quad\quad\quad\quad\quad\quad\quad\quad\quad \forall 0 \leq k < j\ldot \Pi[k, \infty] \models_\calG \psiATL_1 \span \span
\end{align*}
For state formulas, we need to consider all possible outcomes under strategies for a subset of the agents. 
Given $A \subseteq \agents$ and strategies $\{f_\agent{i} \mid \agent{i} \in A\}$ we define $\mathit{out}_\calG\big(s,\{f_\agent{i} \mid \agent{i} \in A\}\big) \subseteq S^\omega$ as
\begin{align*}
	\mathit{out}_\calG\big(s,&\{f_\agent{i} \mid \agent{i} \in A\}\big) := \\
	&\Big\{\play_\calG(s, \prod_{\agent{i} \in \agents} f_\agent{i})\mid \forall \agent{i} \in \agents \setminus A\ldot  f_\agent{i} \in \strats{\calG} \Big\}.
\end{align*}
That is, $\mathit{out}_\calG\big(s,\{f_\agent{i} \mid \agent{i} \in A\}\big)$ contains all possible paths compatible with the strategies in $\{f_\agent{i} \mid \agent{i} \in A\}$.
We can then evaluate a state formula in the context of a state $s$ and path assignment $\Pi$.
\begin{align*}
	s, \Pi &\models_\calG \psiATL \quad &&\text{iff} \quad \Pi \models_\calG \psiATL\\
	s, \Pi &\models_\calG \llangle A \rrangle \pi\ldot \varphiATL &&\text{iff} \quad \exists \{f_\agent{i} \mid \agent{i} \in A\}\ldot\\
	& \quad\quad \quad \forall p \in \mathit{out}_\calG\big(s, \{f_\agent{i} \mid \agent{i} \in A\}\big) \ldot s, \Pi[\pi \mapsto p] \models_\calG \varphiATL \span\span\\
	s, \Pi &\models_\calG \llbracket A \rrbracket \pi\ldot \varphiATL &&\text{iff} \quad \forall \{f_\agent{i} \mid \agent{i} \in A\}\ldot\\
	& \quad\quad \quad \exists p \in \mathit{out}_\calG\big(s, \{f_\agent{i} \mid \agent{i} \in A\}\big) \ldot s, \Pi[\pi \mapsto p] \models_\calG \varphiATL \span\span
\end{align*}
That is, a formula $\llangle A \rrangle \pi\ldot \varphiATL$ holds when there exist strategies for all agents in $A$ such that all possible outcomes under those fixed strategies, when bound to $\pi$, satisfy $\varphiATL$.
Likewise, $\llbracket A \rrbracket \pi\ldot \varphiATL$ holds when all possible strategies admits some path that, when bound to $\pi$, satisfies $\varphiATL$.
We write $\calG \models_{\text{\HyperATLS}} \varphiATL$ if $s_0,\{\} \models_\calG \varphiATL$ in the \HyperATLS{} semantics. 

\paragraph{Encoding}

Similar to the fact that \ATLS{} can be encoded in \SL{} \cite{ChatterjeeHP10}, we can encode \HyperATLS{} in \HyperSL{}.
The idea is to treat the ATL-quantifier $\llangle A \rrangle$ as a strategy quantifier that existentially quantifies over strategies for agents in $A$ and then universally over strategies for agents outside of $A$.
\HyperATLS{} path formulas can be translated verbatim into \HyperSL{} path formulas.
To translate state formulas we maintain a auxiliary mapping $\bL : \pathVars \rightharpoonup \stratVars^\agents$ from path variables to strategy profiles:
\begin{align*}
	\AtlToHyper{\psiATL}^\bL &:= \psiATL [\pi_1 : \bL(\pi_1), \ldots, \pi_m : \bL(\pi_m)]\\
	\AtlToHyper{\llangle A \rrangle \pi. \varphiATL}^\bL &:= \bigexists_{\agent{i} \in A} x_\agent{i}\ldot \bigforall_{\agent{i} \in \agents \setminus A}\!\!\! x_\agent{i}\ldot \, \AtlToHyper{\varphiATL}^{\bL[\pi \mapsto \prod_{\agent{i} \in \agents} x_\agent{i}  ]}\\
	\AtlToHyper{\llbracket A \rrbracket \pi. \varphiATL}^\bL &:= \bigforall_{\agent{i} \in A} x_\agent{i}\ldot \bigexists_{\agent{i} \in \agents \setminus A}\!\!\! x_\agent{i}\ldot \, \AtlToHyper{\varphiATL}^{\bL[\pi \mapsto \prod_{\agent{i} \in \agents} x_\agent{i}  ]}
\end{align*}
In the first case, we assume that $\pi_1, \ldots, \pi_m$ are the path variables used in $\psiATL$. 
In the second and third case, we assume that $x_\agent{1}, \ldots, x_\agent{n}$ are fresh strategy variables for each agent. 
Intuitively, we replace each $\llangle A \rrangle \pi$ quantifier with existential quantification over fresh strategies for $A$, followed by universal quantification over strategies for $\agents \setminus A$. 
In the auxiliary mapping $\bL$, we record which strategy profile we later want to use to construct $\pi$.
For the path formula $\psi$ we then reconstruct all paths used in the formula using the strategy profiles recorded in $\bL$. 
We can use our translation to prove \Cref{lem:hyperatl-in-hypersl}:

\hyperatlsToHyper*
\begin{proof}
	An easy induction shows that for any game structure $\calG$ and \HyperATLS{} formula $\varphiATL$ it holds that $\calG \models_{\text{\HyperATLS}} \varphiATL$ if and only if $\calG \models \AtlToHyper{\varphiATL}^{\{\}}$. 
\end{proof}

\paragraph{\HyperATLSS{}}

As our translation $\AtlToHyper{\varphiATL}^{\{\}}$ explicitly quantifies over strategies for all agents, we can easily enforce that two agents \emph{share} a strategy by simply using the same strategy (variable) for both.
Using a slight modification of the previous translation, we can thus handle the sharing constraints from \HyperATLSS{} \cite{BeutnerF24}:

\hyperatlssToHyper*

\subsection{\SLI{} and \HyperSL{}}\label{app:hypersl-and-sliii}

In this subsection, we provide addition details on the relation of \SLI{} and \HyperSL{} (cf.~\Cref{sec:hyperslAndSLI}).

\SLI{} \cite{BerthonMMRV17} extends \SL{} (cf.~\Cref{sec:translation_sl_into_hypersl}) by allowing strategies that only observe parts of the system. 
The formal model of \SLI{} are game structures that are endowed with an observations.  
Let $\calG = (S, s_0, \moves, \kappa, L)$ be a fixed game structure and let $\obs$ be a fixed finite set of so-called \emph{observations}.
An \emph{observation family} $\{\sim_o\}_{o \in \obs}$ associates an equivalence relation $\sim_o \subseteq S \times S$ with each $o \in \obs$.

For a strategy with observation $o$, two states $s \sim_o s'$ appear identical. 
This naturally extends to finite plays:
Two finite plays $p, p' \in S^+$ are $o$-indistinguishable, written $p \sim_o p'$, if $|p| = |p'|$ and for each $0 \leq i < |p|$, $p(i) \sim_o p'(i)$.
An \emph{$o$-strategy} is a function $f : S^+ \to \moves$ that cannot distinguish between $o$-indistinguishable plays, i.e., for all $p, p' \in S^+$ with $p \sim_o p'$ we have $f(p) = f(p')$. 
We denote with $\strats{\calG, o}$ the set of all $o$-strategies in $\calG$. 
We consider \SLI{} formulas that are generated by the following grammar:
\begin{align*}
	\psiSLI &:= a \mid \neg \psiSLI \mid \psiSLI \land \psiSLI \mid \ltlN \psiSLI \mid \psiSLI \ltlU \psiSLI \\
	\varphiSLI &:= \psiSLI \mid \varphiSLI \land \varphiSLI \mid \varphiSLI \lor \varphiSLI \mid \forall x^o\ldot \varphiSLI \mid \exists x^o\ldot \varphiSLI \mid (\agent{i}, x) \varphiSLI
\end{align*}
where $a \in \ap$, $x \in \stratVars$, $\agent{i} \in \agents$, and $o \in \obs$ is an observation.
Compared to \SL{}, strategy quantification in \SLI{} does not range over arbitrary strategies but over strategies that respect a given observation. 

\paragraph{Semantics}
The semantics of \SL{} (cf.~\Cref{app:sl}) only needs to be changed in the case of strategy quantification:
\begin{align*}
	&s, \Delta, \Theta \models_\calG \forall x ^o\ldot\varphiSLI  &\text{iff} \quad &\forall f \in\strats{\calG, o} \ldot s, \Delta[x \mapsto f], \Theta \models_\calG \varphiSLI\\
	&s, \Delta, \Theta \models_\calG \exists x ^o\ldot\varphiSLI  &\text{iff} \quad &\exists f \in\strats{\calG, o} \ldot s, \Delta[x \mapsto f], \Theta \models_\calG \varphiSLI
\end{align*}
where we restrict quantification to $o$-strategies. 
Given a game structure $\calG$, a family $\{\sim_o\}_{o \in \obs}$, and an \SLI{} formula $\varphiSLI$ we write $(\calG, \{\sim_o\}_{o \in \obs}) \models_{\text{\SLI}} \varphiSLI$ if $s_0, \{\}, \{\} \models_\calG \varphiSLI$ in the \SLI{} semantics. 
See \cite{BerthonMMRV17,BerthonMMRV21} for concrete examples of \SLI{}.

\paragraph{Injective Labeling and Action Recording}

To prove \Cref{prop:sliToHyper}, we need to translate \SLI{} MC instances into equisatisfiable \HyperSL{} instances.
In order to translate the model-checking instances, we first modify the underlying game structure.
The reason for this is simple:
Strategies are defined as functions $S^+ \to \moves$ and $\sim_o$ is defined as a direct relation on states, i.e., both are defined directly on \emph{components of the game structure}.
In contrast, \emph{within} our logic, we only observe the evaluation of the atomic propositions.
In a first step, we thus modify the game structure provided us with sufficient information within its atomic propositions:

\begin{definition}\label{def:IL_and_MR}
	A game structure $\calG = (S, s_0, \moves, \kappa, L)$ is \emph{injectively labeled} (IL), if $L : S \to 2^\ap$ is injective, i.e., two labels are equal iff the state is equal.
	A game structure is \emph{action recording} (AR) if for each agent $\agent{i} \in \agents$ and every action $a \in \moves$,  there exists an atomic proposition $\langle \agent{i}, a \rangle \in \ap$ that holds in a state exactly when $\agent{i}$ played action $a$ in the last step.
	That is, for all $s \in S$ and all action profile $\prod_{\agent{i} \in \agents} a_\agent{i}$,  we have
	\begin{align*}
		L\big(\kappa(s, \prod_{\agent{i} \in \agents} a_\agent{i})\big) \cap \{ \langle \agent{i}, a \rangle \mid \agent{i} \in \agents, a \in \moves \} = \bigcup_{\agent{i} \in \agents} \{ \langle \agent{i}, a_\agent{i}  \rangle \}.
	\end{align*}
\end{definition}

We can always ensure that $\calG$ is injectively labeled by adding at most $\lceil \log |S|\rceil$ fresh atomic propositions.
The evaluation of all formulas (which do not refer to these fresh propositions) remains unchanged when adding such fresh propositions.
Similarly, we can always ensure that a structure records actions.
This increases the number of states by a factor of $|\moves|^{|\agents|}$ but does not increase the branching in the system, nor does it change the semantics of \SLI{} if we extend $\sim_o$ to the new states in the obvious way:

\begin{lemma}\label{lem:to_move_recording_gsif}
	Given an \SLI{} MC instance $(\calG, \{\sim_o\}_{o \in \obs}, \varphiSLI)$ there exists an effectively computable \SLI{} instance $(\calG', \{\sim'_o\}_{o \in \obs}, \varphiSLI')$ where \textbf{(1)} $\calG'$ is IL and AR, and \textbf{(2)} $(\calG, \{\sim_o\}_{o \in \obs}) \models_{\text{\SLI{}}} \varphiSLI$ iff $(\calG', \{\sim'_o\}_{o \in \obs}) \models_{\text{\SLI{}}} \varphiSLI'$.
\end{lemma}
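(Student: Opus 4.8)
The plan is to carry out two independent modifications of $\calG$ -- one enforcing action recording (AR) and one enforcing injective labeling (IL) -- while leaving the formula essentially untouched, since both modifications only add \emph{fresh} atomic propositions that $\varphiSLI$ does not mention. Concretely, I would take $\varphiSLI' := \varphiSLI$ throughout and show that the combined transformation preserves satisfaction. For action recording, I would take the product of the state space with the set of action profiles: set $S' := S \times (\agents \to \moves)$, fix an arbitrary $\vec a_{\mathit{init}} \in (\agents \to \moves)$, and declare $(s_0, \vec a_{\mathit{init}})$ the new initial state. The transition records the action just taken, $\kappa'\big((s, \vec a), \vec c\big) := \big(\kappa(s, \vec c), \vec c\big)$, and the labeling is $L'(s, \vec a) := L(s) \cup \bigcup_{\agent{i} \in \agents} \{ \langle \agent{i}, a_\agent{i} \rangle \}$, where $\vec a = \prod_{\agent{i} \in \agents} a_\agent{i}$ and the $\langle \agent{i}, a \rangle$ are fresh. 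By construction every non-initial state $\kappa'((s,\vec a),\vec c) = (\kappa(s,\vec c),\vec c)$ carries exactly the propositions recording $\vec c$, so $\calG'$ is AR, and the original propositions label $(s, \vec a)$ exactly as they label $s$. The observation family is lifted through the first component: $(s, \vec a) \sim'_o (t, \vec b)$ iff $s \sim_o t$.

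The heart of the argument -- and the step I expect to be the main obstacle -- is to show that this product preserves the $\mathrm{SL}_{\mathit{ii}}$ semantics, which reduces to a faithful correspondence of both plays and $o$-strategies. The projection $\mathit{proj}$ onto the first component maps every play of $\calG'$ to a play of $\calG$ and commutes with play construction: the play in $\calG'$ under a lifted profile projects to the play in $\calG$ under the original profile, by a straightforward induction on the step index. For strategies, the key observation is that, because each $\sim_o$ is reflexive, any two $\calG'$-prefixes with the same state-projection are $\sim'_o$-equivalent for \emph{every} $o$; hence an $o$-strategy $f' \in \strats{\calG', o}$ can never exploit the recorded actions and factors as $f' = g \circ \mathit{proj}$ for a unique $g$, which in turn respects $\sim_o$ on $S^+$, so $g \in \strats{\calG, o}$. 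Conversely every $g \in \strats{\calG, o}$ lifts to $g \circ \mathit{proj} \in \strats{\calG', o}$, yielding a bijection $\strats{\calG, o} \cong \strats{\calG', o}$. Transporting strategy and agent assignments along this bijection, one then proves by induction on $\varphiSLI$ that $s, \Delta, \Theta \models_\calG \varphiSLI$ iff $(s, \vec a), \Delta', \Theta' \models_{\calG'} \varphiSLI$: the Boolean and binding cases are immediate, the quantifier cases use the $o$-strategy bijection, and the atomic (path-formula) case uses the play correspondence together with the fact that $\mathit{proj}$ preserves the original labels. The subtlety to get right is precisely that the recorded actions enlarge the state space without enlarging the space of $o$-strategies, which is exactly what reflexivity of $\sim_o$ buys.

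Finally, to obtain injective labeling I would add $\lceil \log_2 |S'| \rceil$ further fresh propositions encoding the index of each state of $S'$ in binary, making $L'$ injective; since these are again unused by $\varphiSLI$, adding them changes neither the truth value of $\varphiSLI$ along any play nor the set of $o$-strategies. Combining the two steps gives an IL and AR instance $(\calG', \{\sim'_o\}_{o \in \obs}, \varphiSLI)$ with $(\calG, \{\sim_o\}_{o \in \obs}) \models_{\text{\SLI{}}} \varphiSLI$ iff $(\calG', \{\sim'_o\}_{o \in \obs}) \models_{\text{\SLI{}}} \varphiSLI$, as required; the state count grows by the factor $|\moves|^{|\agents|}$ from the product, and the added labels are linear, matching the bound stated before the lemma.
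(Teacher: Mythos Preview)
Your proposal is correct and takes essentially the same approach as the paper: the paper builds the identical product $S \times (\agents \to \moves)$ with the same transition, labeling, and lifted observation relation, then adds fresh propositions for injective labeling, keeping $\varphiSLI' = \varphiSLI$. If anything, you supply more detail than the paper, which dispatches the semantic equivalence with ``it is easy to see''; your observation that reflexivity of $\sim_o$ forces every $o$-strategy in $\calG'$ to factor through the state projection is exactly the content hidden behind that phrase.
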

\begin{proof}
	Assume $\calG = (S, s_0, \moves, \kappa, L)$.
	Define $\ap' := \ap \uplus \{\langle \agent{i}, a \rangle  \mid \agent{i} \in \agents, a \in \moves\}$.
	We then define
	\begin{align*}
		\textstyle\calG' = (S \times (\agents \to \moves), (s_0, \prod_{\agent{i} \in \agents} a), \moves, \kappa', L')
	\end{align*}
	where $a \in \moves$ is some arbitrary action (in the initial state, we do not need to track the last played action). 
	For an action profile $\prod_{\agent{i} \in \agents} a_\agent{i}$, we define $\kappa'$ and $l'$ by
	\begin{align*}
		\kappa' \big((s, \_), \prod_{\agent{i} \in \agents} a_\agent{i}\big) &:= (\kappa(s, \prod_{\agent{i} \in \agents} a_\agent{i}), \prod_{\agent{i} \in \agents} a_\agent{i})\\
		L'(s, \prod_{\agent{i} \in \agents} a_\agent{i}) &:= L(s) \uplus \{ \langle \agent{i}, a_\agent{i} \rangle \mid \agent{i} \in \agents\}.
	\end{align*}
	The idea behind $\calG'$ is that we record the action profile that was last used in the second component of each state. 
	In each transition, we ignore the action profile in the current step, and record the new action profile n the second component. 
	In the labeling function, we can then use the action profile $\prod_{\agent{i} \in \agents} a_\agent{i}$ in the second component to set the APs $\langle \agent{i}, a_\agent{i} \rangle$ for all $\agent{i} \in \agents$.
	We define $\{\sim'_o\}_{o \in \obs}$ by 
	\begin{align*}
		\sim'_o := \Big\{ \big((s, \_), (s', \_)\big) \mid s \sim_o s' \Big\}.
	\end{align*}
	That is, for any observation cannot distinguish states based on the second position. 
	In particular note that $(s, \_)$ and $(s, \_)$ are always indistinguishable, i.e., all states we expanded that we added are indistinguishable under the new observation. 
	
	It is easy to see that $\calG'$ is AR and that
	$(\calG, \{\sim_o\}_{o \in \obs}) \models_{\text{\SLI{}}} \varphiSLI$ iff $(\calG', \{\sim'_o\}_{o \in \obs}) \models_{\text{\SLI{}}} \varphiSLI$.
	Note that we did not change the formula.
	In a second step, we can ensure that $\calG'$ is also IL by simply adding sufficiently many new propositions. 
	As those new propositions are never used in $\varphiSL$ the semantics is unchanged.
\end{proof}

\paragraph{Identifying Indistinguishable States}

In the next step, we construct a formula that identifies pairs of states that are indistinguishable according to $\sim_o$.
For $o \in \obs$, we define formula $\mathit{ind}_o(\pi_1, \pi_2)$ as follows:
\begin{align*}
	\!\!\bigvee_{(s, s') \in \sim_o} \Big(\bigwedge_{a \in L(s)} a_\pi \land \!\!\!\!\! \bigwedge_{a \in \ap\setminus L(s)} \!\!\!\!\! \neg a_\pi \land \!\!\!\!\! \bigwedge_{a \in L(s')} a_{\pi'} \land \!\!\!\!\! \bigwedge_{a \in \ap\setminus L(s')} \!\!\!\!\! \neg a_{\pi'}\Big)
\end{align*}
It is easy to see that on any injectively labeled game structure $\mathit{ind}_o(\pi_1, \pi_2)$ holds if the two paths bound to $\pi_1, \pi_2$ are $\sim_o$-related in their first state.

\paragraph{Enforcing Partial Information}

Given an observation $o \in \obs$, and strategy variable $x$, we define a formula $\mathit{indStrat}_o(x)$ that holds on a strategy iff this strategy is an $o$-strategy (in all reachable situations and for all agents) as follows:
\begin{align*}
	\mathit{indStrat}_o(x) &:= \forall y_1, \ldots, y_n, y'_1, \ldots, y'_n. \span\span \\
	&\bigwedge_{\agent{i} = 1}^n \psiSLI^{\agent{i}}_o \left[\begin{matrix}
		\pi_1 : (y_1, \ldots, y_{i-1}, x, y_{i+1}, \ldots, y_n )\\
		\pi_2 : (y'_1, \ldots, y'_{i-1}, x, y'_{i+1}, \ldots, y'_n )
	\end{matrix}\right]
\end{align*}
where
\begin{align*}
	\psiSLI_o^{\agent{i}} := \Big(\ltlN \Big(\bigwedge_{a \in \moves} &\langle  \agent{i}, a\rangle_{\pi_1} \leftrightarrow \langle \agent{i}, a \rangle_{\pi_2} \Big) \Big) \ltlW \Big( \neg \mathit{ind}_o(\pi_1, \pi_2) \Big).
\end{align*}
The path formula $\psiSLI_o^{\agent{i}}$ compares two paths $\pi_1, \pi_2$ and states that as long as a prefix on those to paths is $o$-indistinguishable (i.e., $\mathit{ind}_o(\pi_1, \pi_2)$ holds in each step), the action selected by agent $\agent{i}$ is the same on both paths (using the fact that the structure records actions). 
As we do not know which agents might end up playing strategy $x$ we assert that $x$ behaves as a $o$-strategy for all agents.
For each $\agent{i} \in \agents$ we thus compare two paths where $\agent{i}$ plays $x$, but all other agents play some arbitrary strategy, and assert that $\psiSLI_o^{\agent{i}}$ holds for those two paths. 
Strategy $x$ must thus respond to two $o$-indistinguishable prefixes with the same action in all reachable situations for all agents.

\paragraph{The Translation}

Using $\mathit{indStrat}_o(x)$ as a building block, we can modify the translation of \SL{} into \HyperSL{} from \Cref{app:sl}, and instead translate the much stronger \SLI{}.

\sliToHyper*
\begin{proof}
	In a first step, we use \Cref{lem:to_move_recording_gsif} to ensure that $\calG$ is IL and AR. 
	We can then translate $\varphi$ using a similar translation to the one used in \Cref{app:sl}.
	The only cases that require changening are the translation of quantification:
	\begin{align*}
		\slIToHyper{\forall x^o. \varphiSLI}^\bX &:= \forall x\ldot \mathit{indStrat}_o(x) \rightarrow \slIToHyper{\varphiSLI}^\bX \\
		\slIToHyper{\exists x^o. \varphiSLI}^\bX &:= \exists x \ldot  \mathit{indStrat}_o(x)  \land \slIToHyper{\varphiSLI}^\bX
	\end{align*}
	Note that the resulting formula uses implications between state formulas which is not supported by the HyperSL syntax. It is, however, easy to see that we can push the boolean operations into the path formula as observed in \Cref{ex:state-conjunctions}.
	
	We claim that for any  \SLI{} MC instance $(\calG,\{\sim_o\}_{o \in \obs}, \varphiSLI)$ where $\calG$ is IL and AR, we have that $(\calG,\{\sim_o\}_{o \in \obs}) \models_{\text{\SLI{}}} \varphiSLI$ iff $\calG \models \slIToHyper{\varphiSLI}^{\{\}}$.
	
	To prove the above we need to argue that $\mathit{indStrat}_o(x)$ really expresses that $x$ is a $o$-strategy. 
	It is easy to see that for any strategy $f \in \strats{\calG}$ we have $s, [x \mapsto f] \models \mathit{indStrat}_o$ if and only if $f$ is a $o$-strategy in all reachable situations from $s$. 
	That is, $f$ does not necessarily behave as an $o$-strategy in all situations, but at least in those situations that are actually reachable under $f$. 
	As any strategy will only ever be queried on plays that are compatible with the strategy itself, this suffices to encode the \SLI{} semantics. 
\end{proof}

\section{Additional Material for Section \ref{sec:decideable}}\label{app:decMaterial}

In this section we  prove the correctness of our \HyperSLP{} model-checking algorithm (\Cref{alg:main-alg}).
Our algorithm hinges on \lstinline[style=default, language=custom-lang]|simulate| procedure (\Cref{alg:product}) and the resulting properties (\Cref{prop:sim}).
We dedicate the entire \Cref{app:simProof} to a proof of \Cref{prop:sim}, and here focus on the correctness (and complexity) of \Cref{alg:main-alg}.

\subsection{Correctness Proof of \Cref{alg:main-alg}}

As already argued in the main part of the paper, our correctness proof relies on an inductive argument that establishes that we compute $(\calG, \dot{s}, k)$-summaries for each $1 \leq k \leq m + 1$. 

\lemmaInd*
\begin{proof}
	We show the statement by induction on $1 \leq k \leq m + 1$ (from $k=m+1$ to $k = 1$).
	For the base case ($k = m + 1$), we observe that the APA $\calA_{m+1}$ (computed in line \ref{line:ltlToApa}) is a $(\calG, \dot{s}, m+1)$-summary.

	For the induction step we can assume -- by induction hypothesis -- that prior to line \ref{line:product}, $\calA_{k+1}$ is a $(\calG, \dot{s}, k+1)$-summary. 
	Recall that $\calA_{k+1}$ is an APA over $(\{\pi_1, \ldots \pi_{k}\}  \to S)$ and $\calA_k$ over $(\{\pi_1, \ldots \pi_{k-1}\}  \to S)$.	
	We claim that $\calA_k$ is a $(\calG, \dot{s}, k)$-summary.
	To show this, take any $\Pi : \{\pi_1, \ldots \pi_{k-1}\} \to S^\omega$ and we need to show that (cf.~\Cref{def:equiv} of $(\calG, \dot{s}, k)$-summary) $\mathit{zip}(\Pi) \in \calL(\calA_k)$ if and only if 
	\begin{align*}
		&\widetilde{\flat_k}\cdots \widetilde{\flat_m}\ldot \Pi\Big[\pi_j\mapsto \play_\calG\big(\dot{s}, \prod_{\agent{i} \in \agents} f_{\vec{x}_j(\agent{i})}\big)\Big]_{j = k}^m \models_\calG \psi.
	\end{align*}
	By adding parenthesis, the latter holds if and only if 
			\begin{align*}
				&\widetilde{\flat_k}. \Big(\widetilde{\flat_{k+1}}\cdots \widetilde{\flat_m}\ldot \Pi\Big[\pi_j\mapsto \play_\calG\big(\dot{s}, \prod_{\agent{i} \in \agents} f_{\vec{x}_j(\agent{i})}\big)\Big]_{j = k}^m \models_\calG \psi \Big)
			\end{align*}
	which holds if and only if 
	\begin{align*}
		&\widetilde{\flat_k}. \Big(\widetilde{\flat_{k+1}}\cdots \widetilde{\flat_m}\ldot \Pi'\Big[\pi_j \mapsto \play_\calG\big(\dot{s}, \prod_{\agent{i} \in \agents} f_{\vec{x}_j(\agent{i})}\big)\Big]_{j = k+1}^m \models_\calG \psi \Big)
	\end{align*}
	where $\Pi' = \Pi[\pi_k \mapsto \play_\calG\big(\dot{s}, \prod_{\agent{i} \in \agents} f_{\vec{x}_k(\agent{i})}\big)\big]$.
	By the assumption that $\calA_{k+1}$ is a $(\calG, \dot{s}, k+1)$-summary we can replace the inner part and get that the above is equivalent to 
	\begin{align*}
		&\widetilde{\flat_k}. \mathit{zip}(\Pi') \in \calL(\calA_{k+1}).
	\end{align*}
	After unfolding the definition of $\Pi'$, this becomes 
	\begin{align*}
		&\widetilde{\flat_k}\ldot  \mathit{zip}\Big(\Pi\big[\pi_k \mapsto \play_\calG\big(\dot{s}, \prod_{\agent{i} \in \agents} f_{\vec{x}_k(\agent{i})}\big)\big] \Big)\in \calL(\calA_{k+1}).
	\end{align*}
	Now recall that we defined $\calA_{k} = $~\lstinline[style=default, language=custom-lang]|simulate($\calG$,$\dot{s}$,$\pi_k$,$\bX_k$,$\flat_k$,$\calA_{k+1}$)|.
	By \Cref{prop:sim} we now have that the above holds iff 
	\begin{align*}
		\mathit{zip}(\Pi) \in \calL(\calA_{k}).
	\end{align*}
	as required. 
\end{proof}

\lemmaInit*
\begin{proof}
	Note that the alphabet of $\calA$ is the singleton set $(\emptyset \to S)$.
	We thus get that $\calL(\calA) \neq \emptyset$ iff $\mathit{zip}(\{\}) \in \calL(\calA)$ where $\{\}$ is the unique path assignment $\emptyset \to S^\omega$ so  $\mathit{zip}(\{\})$ is the unique word over $(\emptyset \to S)$. 
	Now by \Cref{def:equiv} we have that $\mathit{zip}(\{\}) \in \calL(\calA)$ iff
	\begin{align*}
			&\widetilde{\flat_1}\cdots \widetilde{\flat_m}\ldot \{\}\Big[\pi_j \mapsto \play_\calG\big(\dot{s}, \prod_{\agent{i} \in \agents} f_{\bX_j(\agent{i})}\big)\Big]_{j = 1}^m \models_\calG \psi
		\end{align*}
	where we add all paths to the empty path assignment $\{\}$.
	As we add to the empty path assignment, the above is thus equivalent to 
	\begin{align*}
		&\widetilde{\flat_1}\cdots \widetilde{\flat_m}\ldot \Big[\pi_j \mapsto \play_\calG\big(\dot{s}, \prod_{\agent{i} \in \agents} f_{\bX_j(\agent{i})}\big)\Big]_{j = 1}^m \models_\calG \psi
	\end{align*}
	which exactly expresses $\dot{s}, \{\} \models_\calG \varphi$ in the \HyperSL{} semantics.
\end{proof}

\subsection{Model-Checking Complexity}

\compTheo*
\begin{proof}
	Each time we invoke \lstinline[style=default, language=custom-lang]|simulate| (\Cref{alg:product}) the automaton size increases by two exponents. 
	A formula with block-rank $m$ requires $m$ applications of \lstinline[style=default, language=custom-lang]|simulate| (cf.~\Cref{alg:main-alg}), so the final automaton $\calA_1$ has size that is $2m$-times exponential in the size of $\psi$ and $\calG$.
	Automaton $\calA_1$ operates on a singleton alphabet ($\emptyset \to S$), so we can decide its emptiness in polynomial time. 
	Model checking is thus in $2m$-\EXPTIME{} in the size of $\psi$ and $\calG$.
\end{proof}

\lowerBound*
\begin{proof}
	For \HyperATLS{} it is known that checking a formula with $m$ quantifiers is $(2m-1)$-\EXPSPACE{}-hard (in the size of the formula) \cite[Thm.~7.1 and 7.2]{BeutnerF23}.
	As the translation of a \HyperATLS{} formula with $m$ qunatifiers into a \HyperSLP{} formula is linear (cf.~\Cref{lem:hyperatl-in-hypersl} and \Cref{app:hyperatl}) and yields a formula with block-rank $m$, the lower bound follows.
\end{proof}

\subsection{Beyond \HyperSLP{}}\label{app:beyond}

As we argued in \Cref{sec:beyond_spe}, any \HyperSL{} formula where the prefix cannot be grouped into blocks as in \Cref{def:spe}, MC becomes in general undecidable.

\begin{lemma}\label{lem:undecBoundry}
	Model checking for a \HyperSL{} formula of the form 
	\begin{align*}
		\exists x. \exists y. \forall z. \forall w\ldot \psi\left[\begin{aligned}
			\pi_1 &: (x, z)\\
			\pi_2 &: (y, w)
		\end{aligned}\right]
	\end{align*}
	is, in general, undecidable.
\end{lemma}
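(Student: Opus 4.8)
The plan is to reduce from the realizability problem of \emph{distributed synthesis} for the two-process architecture with disjoint inputs, which is among the classical undecidable problems (Pnueli and Rosner; see also the information-fork characterization of Finkbeiner and Schewe). The key structural observation is that the quantifier shape $\exists x. \exists y. \forall z. \forall w$, in which $x, z$ feed only $\pi_1$ and $y, w$ feed only $\pi_2$, realizes exactly the information barrier underlying that problem: the two existential strategies are fixed \emph{simultaneously}, before the universal choices, and since a strategy is a function of the history of the single path it is played on, $x$ depends only on $\pi_1$ (hence on the adversarial input supplied by $z$) and $y$ only on $\pi_2$ (hence on the input supplied by $w$). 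Thus $x$ is independent of $w$ and $y$ is independent of $z$: we obtain two processes with incomparable, disjoint information, which is precisely an information fork. Note that, in contrast to the \SLI{} encoding of \Cref{prop:sliToHyper}, no uniformity gadget is needed here — the information separation is inherent to the quantifier structure and to the fact that $\pi_1$ and $\pi_2$ are built from disjoint strategy variables.

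Concretely, I would fix the two-agent CGS $\calG$ over $\agents = \{\agent{1}, \agent{2}\}$ in which, at every step, the action chosen by agent $\agent{2}$ is recorded in the label of the successor state as an \emph{input} symbol and the action chosen by agent $\agent{1}$ as an \emph{output} symbol, with $\kappa$ arranged so that the two choices are unconstrained and independent. Reading $\calG$ along $\pi_1$ under the profile $(x, z)$ then produces an input stream (chosen by $z$) together with an output stream (the response of $x$); reading along $\pi_2$ under $(y, w)$ produces a second input/output pair. Given a distributed-synthesis instance with a joint LTL specification $\Phi$ over the inputs and outputs of the two processes, I would take $\psi$ to be $\Phi$ with the first process's propositions indexed by $\pi_1$ and the second process's by $\pi_2$; this is a legal \HyperSL{} path formula over $\pi_1, \pi_2$, and its synchronous, lock-step semantics matches the synchronous semantics of the synthesis problem. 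Because agent $\agent{2}$'s moves are unconstrained, ranging $z$ (resp.\ $w$) over all strategies yields exactly all input streams on $\pi_1$ (resp.\ $\pi_2$), so the fact that $z, w$ are adaptive strategies rather than fixed input words is immaterial.

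The correctness claim is then that $\calG, s_0 \models \exists x. \exists y. \forall z. \forall w.\, \psi[\pi_1 : (x,z),\, \pi_2 : (y,w)]$ holds iff the synthesis instance is realizable, which I would prove by translating witnessing strategies into process implementations and back. The main obstacle — and the step deserving the most care — is to make this translation faithful despite the mismatch of information models: \HyperSL{} strategies are complete-information functions $S^+ \to \moves$ of the full state history, whereas the two processes see only their own input prefix. Here one argues that the history of $\pi_1$ is in bijection with the sequence of (input, output) pairs witnessed by process $1$, and that a process's output prefix is itself determined by its input prefix through its own implementation; hence a strategy on $\pi_1$ carries exactly the information available to process $1$ and no information about $\pi_2$, and symmetrically for $\pi_2$. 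Combining this correspondence with the observation that $\forall z, \forall w$ enumerate all pairs of input streams, the formula holds precisely when the two processes can be implemented so as to satisfy $\Phi$ against every environment, which is the undecidable realizability question. Undecidability of \HyperSL{} model checking for this quantifier shape follows.
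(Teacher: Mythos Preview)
Your reduction is correct and takes a genuinely different route from the paper. The paper reduces from the realizability problem for $\forall\pi_1.\forall\pi_2$ \HyperLTL{} formulas \cite{FinkbeinerHLST18}: it wants $x$ and $y$ to denote the \emph{same} output strategy, and it enforces this by conjoining a uniformity clause $\big(\ltlN(\bigwedge_{a\in O} a_{\pi_1}\leftrightarrow a_{\pi_2})\big)\ltlW\big(\bigvee_{a\in I} a_{\pi_1}\not\leftrightarrow a_{\pi_2}\big)$ to the body, so that whenever $z$ and $w$ supply identical input prefixes the two copies must respond identically. The formula then expresses that a single strategy realizes the $\forall^2$ hyperproperty. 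You instead reduce from two-process distributed synthesis with disjoint inputs (Pnueli--Rosner / the information-fork criterion): you deliberately let $x$ and $y$ be \emph{different} processes, and the undecidability source is that each sees only its own path. Your observation that no uniformity gadget is needed because the information separation is already built into the profile assignment $[\pi_1:(x,z),\pi_2:(y,w)]$ is exactly the point of divergence. Both arguments are sound; yours appeals to an older and arguably more canonical undecidability result and keeps $\psi$ free of auxiliary conjuncts, while the paper's approach stays within the hyperlogic toolbox and reuses the same ``same-inputs $\Rightarrow$ same-outputs'' gadget that drives the \SLI{} encoding in \Cref{prop:sliToHyper}.
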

\begin{proof}
	We encode the \HyperLTL{} realizability problem of a $\forall^2$ formula, which is known to be undecidable \cite{FinkbeinerHLST18}.
	Given a \HyperLTL{} formula $\varphiH = \forall \pi_1. \forall \pi_2. \psiH$ over $I \uplus O$ we define 
	\begin{align*}
		\psi' := \psiH \land \Big( \Big(\ltlN \big(\bigwedge_{a \in O} a_{\pi_1} \leftrightarrow a_{\pi_2} \big) \Big)\ltlW \big(\bigvee_{a \in I} a_{\pi_1} \not\leftrightarrow a_{\pi_2} \big)\Big)
	\end{align*}
	That is, the two paths $\pi_1, \pi_2$ should satisfy $\psiH$ and, in addition, when given the same sequence of inputs, the output should be the same. 
	We claim that $\varphiH = \forall \pi_1. \forall \pi_2. \psiH$ is realizable if and only if 
	\begin{align*}
		\calG_{(I, O)} \models \exists x. \exists y. \forall z. \forall w\ldot \psi'\left[\begin{aligned}
			\pi_1 &: (x, z)\\
			\pi_2 &: (y, w)
		\end{aligned}\right]
	\end{align*}
	where $\calG_{(I, O)}$ is the CGS in which agent $\agent{1}$ can set the inputs $I$ and agent $\agent{2}$ can set the outputs $O$ (see, e.g., \cite{AlurHK02}). 
	The intuition is that the additional conjunct requires that the two strategies bound to $x$ and $y$ denote the \emph{same} strategy.
	The above formula thus states that there exists some strategy that controls the outputs such that all pairs of traces under that strategy satisfy $\psiH$.
	Deciding the existence of such a strategy is undecidable \cite{FinkbeinerHLST18}, so we get the desired result.
\end{proof}

\section{Proof Of Proposition \ref{prop:sim}}\label{app:simProof}

In this section, we prove \Cref{prop:sim}:

\simProp*

For any $m \in \nat$, we define $[m] := \{1, \ldots, m\}$. 
To make the proof of its correctness easier, we assume that 
\begin{align}\label{eq:strat_prefix}
	\flat = \forall 1.\exists 2.\forall 3\ldots \exists {2m}.
\end{align}
That is, we assume that the strategy variables are number in $[2m]$.
And, moreover, we assume the quantifier block alternates strictly in every step: odd strategy variables (numbers) are universally quantified and even variables are existentially quantified. 
Note that this assumption is w.l.o.g., we can always add quantification over additional strategy variables that wil never be used for the construction of  $\pi$.
Having a fixed alternation makes formal reasoning and notation easier. 

\subsubsection*{The Candidate}

Let $\calA_\mathit{det} = (Q, q_0, \delta, c)$ be the DPA constructed from $\calA$ in line \ref{line:toDPA} of \Cref{alg:product} (using \Cref{prop:toDPA}).
Following the construction in \Cref{alg:product}, we get that $\calB$ (the result of \lstinline[style=default, language=custom-lang]|simulate($\calG$,$\dot{s}$,$\pi$,$\bX$,$\flat$,$\calA$)|) satisfies
\begin{align*}
	\calB = \big(Q \times S, (q_0, \dot{s}), \delta', c'\big),
\end{align*}
where $c'(q, s) := c(q)$, and for $\bT \in S^{V}$ , $\delta'\big( (q, s) , \bT\big)$ is defined as as 
\begin{align*}
	\bigwedge_{a_{1} \in \moves } \bigvee_{a_{2} \in \moves } \cdots \bigvee_{a_{{2m}} \in \moves} \Big(\delta\big(q, \bT[\pi \mapsto s] \big), \kappa\big(s, \prod_{\agent{i} \in \agents} a_{\bX(\agent{i})}  \big)\Big),
\end{align*}
With this construction fixed, it remains to argue that it accepts the desired language. 
Expressed as a lemma:

\begin{lemma}\label{lem:taregt_for_B}
	For any path assignment $\Pi : V \to S^\omega$ we have $\mathit{zip}(\Pi) \in \calL(\calB)$  if and only if
	\begin{align}\label{eq:lemmaTarget}
		&\widetilde{\flat}\ldot  \mathit{zip}\Big(\Pi\big[\pi \mapsto \play_\calG(\dot{s}, \prod_{\agent{i} \in \agents} f_{\bX(\agent{i})}\Big)\big] \Big)\in \calL(\calA).
	\end{align}
\end{lemma}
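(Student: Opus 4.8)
The plan is to prove \Cref{lem:taregt_for_B} via an auxiliary concurrent parity game (CPG) parametrized by the fixed path assignment $\Pi$, following the strategy outlined in the proof sketch of \Cref{prop:sim}. First I would replace $\calA$ by its determinization $\calA_\mathit{det} = (Q, q_0, \delta, c)$ from line \ref{line:toDPA} (using \Cref{prop:toDPA}, so that $\calL(\calA) = \calL(\calA_\mathit{det})$); this is convenient because $\calB$ is built on top of $\calA_\mathit{det}$ and because determinism turns membership of a single word into a genuine parity condition along a \emph{unique} run. Fixing $\Pi : V \to S^\omega$ and writing $\bT_j := \mathit{zip}(\Pi)(j) \in (V \to S)$, I would define an infinite-state CPG $\mathcal{C}_\Pi$ with positions $Q \times S \times \nat$, initial position $(q_0, \dot{s}, 0)$, and one player per strategy variable of $\flat$. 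At position $(q, s, j)$ the players choose actions $a_1, \dots, a_{2m} \in \moves$ and the game moves to $\big(\delta(q, \bT_j[\pi \mapsto s]),\, \kappa(s, \prod_{\agent{i} \in \agents} a_{\bX(\agent{i})}),\, j{+}1\big)$; the color of $(q,s,j)$ is $c(q)$, and the protagonist (the coalition of existentially quantified, i.e.\ even, variables) wins a play iff the least color occurring infinitely often is even. The $S$-component thus simulates the path bound to $\pi$, while the $Q$-component runs $\calA_\mathit{det}$ on the zipping of $\Pi$ extended by that simulated path.

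Next I would establish the two endpoints of the equivalence separately. For the right-hand side, a choice of strategies $f_1, \dots, f_{2m} \in \strats{\calG}$ induces, via the profile $\prod_{\agent{i} \in \agents} f_{\bX(\agent{i})}$, exactly the path $p = \play_\calG(\dot{s}, \prod_{\agent{i} \in \agents} f_{\bX(\agent{i})})$, and the unique run of $\calA_\mathit{det}$ on $\mathit{zip}(\Pi[\pi \mapsto p])$ coincides with the $Q$-component of the play of $\mathcal{C}_\Pi$ that these strategies produce; hence the strict alternation $\widetilde{\flat}$ of whole-strategy quantifiers (WLOG by Eq.~(\ref{eq:strat_prefix})) witnesses Eq.~(\ref{eq:lemmaTarget}) iff the protagonist wins $\mathcal{C}_\Pi$ under \emph{whole-strategy} quantification in the matching order. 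For the left-hand side, I would unfold the run-DAG semantics of the APA $\calB$ (cf.\ \Cref{app:prelim}): the transition formula $\bigwedge_{a_1}\bigvee_{a_2}\cdots\bigvee_{a_{2m}}(\cdots)$ is precisely the one-round, turn-based action game in which the universal players move first and the existential players react, with the branching over universal action tuples recorded as conjunctive successors and the existential reactions as the disjunctive choices. An accepting run DAG on $\mathit{zip}(\Pi)$ then corresponds exactly to a winning \emph{step-wise} (history- and current-round-dependent) strategy for the protagonist in $\mathcal{C}_\Pi$, with the parity acceptance of every DAG-path matching the parity objective.

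The crux, and the step I expect to be the main obstacle, is bridging these two notions of winning: whole-strategy quantification (each $f_i$ reacts only to the history, yet the quantifier nesting lets a later coalition see the \emph{entire} earlier strategies) versus step-wise quantification (within each round a player may additionally react to the actions chosen earlier in that same round, but otherwise only to the finite history). For the simple $\exists^*\forall^*$ blocks of \HyperATLS{} this reduces to positional determinacy of ordinary two-player parity games, but the \emph{arbitrary} alternation permitted inside a \HyperSLP{} block forces the use of concurrent parity games, whose determinacy \cite[Thm.~4.1]{MalvoneMS16} is exactly what lets me reorder and re-Skolemize the quantifiers: it guarantees that $\mathcal{C}_\Pi$ has a well-defined winner independent of whether strategies are quantified as whole objects or round by round. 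Chaining the three equivalences, namely $\mathit{zip}(\Pi) \in \calL(\calB)$ iff the protagonist wins $\mathcal{C}_\Pi$ step-wise iff (by determinacy) the protagonist wins $\mathcal{C}_\Pi$ with whole strategies iff Eq.~(\ref{eq:lemmaTarget}) holds, then yields the lemma. Care is needed that the infinite-state arena and the step index $j$ do not obstruct the cited determinacy result, and that the coloring $c'(q,s) = c(q)$ correctly transports the DPA acceptance into the parity objective of $\mathcal{C}_\Pi$.
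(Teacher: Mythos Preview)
Your proposal is correct and follows essentially the same route as the paper: both construct the identical infinite-state CPG on $Q \times S \times \nat$ (called $\calP_\Pi$ in the paper), prove separately that (i) the existential team wins iff Eq.~(\ref{eq:lemmaTarget}) holds via a direct strategy translation between $\calG$ and the CPG, and (ii) the existential team wins iff $\mathit{zip}(\Pi) \in \calL(\calB)$ via the run-DAG semantics of $\calB$, with the bridge in (ii) supplied by the positional (Skolem) determinacy of CPGs from \cite[Thm.~4.1]{MalvoneMS16}. Your caveat about the infinite state space is well placed; the paper likewise relies on that determinacy result applying to infinite arenas with finitely many colors.
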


\subsection{Concurrent Parity Games}

The main prove idea in showing that $\calB$ accepts the language desired by \Cref{prop:sim}, goes via the determinacy of concurrent parity games. 

\begin{definition}[Concurrent Parity Game]
	A concurrent parity game (CPG) is a a tuple $\calP = (V, v_0, \moves, m, \mu, c)$ where $V$ is a (possibly infinite) set of vertices, $v_0 \in V$ is an initial vertex, $\moves$ is a finite set of actions, $m \in \nat$ gives the number of alternations (so $2m$ is the number of player), $\mu : V \times ([2m] \to \moves) \to V$ is a transition function, and $c : V \to C$ is a coloring for some finite $C \subseteq \nat$.
\end{definition}

We refer to the protagonist in CPGs as players to distinguish them from the agents in a game structure. 
Similar to the simplyfying assumption we made of $\flat$, we will later quantify universally over strategies for odd players and existentially for even players. 
A strategy in $\calP$ is a function $f : V^+ \to \moves$ and we write $\strats{\calP}$ for the set of all strategies in $\calP$. 
When given a strategy profile $\bF : [2m] \to \strats{\calP}$ assigning a strategy for each player, and a vertex $v \in V$, we define $\play_\calP(v, \bF) \in V^\omega$ as the unique play $p \in V^\omega$ such that $p(0) = v$, and for every $j \in \nat$, $p(j+1) = \mu\big(p(j), \prod_{l \in [2m]} \bF(l)(p[0,j])\big)$ (similar to the definition in CGSs, cf.~\Cref{sec:prelim}).
We say an infinite play $p \in V^\omega$ is \emph{even} if the minimal color that occurs infinity many times (as given by $c$) is even.
In this case we write $\mathit{even}(p)$.

\begin{definition}\label{def:cpgSem}
	The CPG $\calP$ is \emph{won by the existential team} if
	\begin{align*}
		\forall f_{1} \in \strats{\calP}.&\exists f_{2} \in \strats{\calP}.\forall f_{3} \in \strats{\calP} \ldots \exists f_{2m} \in \strats{\calP} \ldot \\
		&\mathit{even}\Big( \play_\calP(v_0, \prod_{l \in [2m]} f_l) \Big).
	\end{align*}
\end{definition}

That is we quantify over strategies in $\calP$ for all player $l \in [2m]$; universally for odd player and existentially for even player in alternating fashion.
The game is won (by the existential team) if the existential quantifier can ensure that the resulting play is even.

\subsection{Construction of $\calP_\Pi$}

Assume we are given a fixed strategy assignment $\Pi : V \to S^\omega$. 
We design an infinite-state CPG $\calP_\Pi$ as
\begin{align*}
	\calP_\Pi = (Q \times S \times \nat, (q_0, \dot{s}, 0), \moves, m, \mu, c')
\end{align*}
where $Q$ is the set of states in $\calA_\mathit{det}$, $c'(q, s, N) := c(q)$ and for each action profile $\bA : [2m] \to \moves$ we define $\mu\big( (q, s, N), \bA \big)$ as 
\begin{align*}
	\Big(\delta\big(q, \mathit{zip}(\Pi)(N)[\pi \mapsto s] \big), \kappa\big(s, \prod_{\agent{i} \in \agents} \bA(\bX(\agent{i})) \big), N + 1 \Big)
\end{align*}
In the following we abbreviate $V := Q \times S \times \nat$ for the vertices in $\calP_\Pi$, and define $v_0 := (q_0, \dot{s}, 0)$ as the initial vertex of $\calP_\Pi$.

Let us discuss the construction of $\calP_\Pi$.
Actions in $\calP_\Pi$ are the same as in $\calG$ (i.e., $\moves$).
For each quantified strategy variable $1, \ldots, 2m$ (cf.~\Cref{eq:strat_prefix}), we have a corresponding player in $\calP_\Pi$ (i.e., there are $m$ alternations, so the set of players is $[2m]$).
We operate on vertices $(q, s, N)$, where $q \in Q$ and $s \in S$ are similar to the construction of $\calB$.
In addition we track the current step $N$.
To update $q$ we invoke the transition function $\delta$ of $\calA_\mathit{det}$ on the current automaton state and letter $\mathit{zip}(\Pi)(N)[\pi \mapsto s]$.
Note that this corresponds to the input of $\calB$: $\calB$ reads the zipping of a strategy assignment  $\Pi$ as an input and thus the $N$th letter of the input is $\mathit{zip}(\Pi)(N)$. 
That is, we ``hardcode'' $\Pi$ into the game. 
For this, we maintain the current step via counter $N$ and ``pretend'' the input in the $N$th step was $\mathit{zip}(\Pi)(N)$.
Similar to $\calB$, we update the automaton state by passing $\mathit{zip}(\Pi)(N)[\pi \mapsto s]$ to $\calA_\mathit{det}$'s transition function.
To update the state of the simulation of $\calG$ we proceed as in $\calB$, i.e., given a function $\bA : [2m] \to \moves$ that fixes actions for all strategy variables (or, equivalently, players in $\calP_\Pi$), each agent $\agent{i} \in \agents$ will simply play the action assigned to strategy variable $\bX(\agent{i})$, i.e., $\bA(\bX(\agent{i}))$.

The idea of $\calP_\Pi$ is to serve as intermediate representation between the target language (Eq.~(\ref{eq:lemmaTarget})) and the definition of $\calB$. 
In the semantics of CPGs the quantification over strategies for the players occurs ``outside'', i.e., strategies are fixed globally (cf.~\Cref{def:cpgSem}). 
As players in $\calP_\Pi$ correspond exactly to the strategy variables used in $\flat$ ($[2m]$), this ``outer'' quantification mimics the quantification found in Eq.~(\ref{eq:lemmaTarget}).
On the other hand, the updates of automaton and system state in $\calP_\Pi$ are similar to the updates performed in $\calB$.

\subsection{The First Implication}

As a first step, we will show that $\calP_\Pi$ is won by the existential player if and only $\Pi$ satisfies Eq.~(\ref{eq:lemmaTarget}).
This step is easy: The quantification in $\calP_\Pi$ and Eq.~(\ref{eq:lemmaTarget}) is very similar (i.e., occurs ``outside'' the game).
We can, therefore, transfer strategies between $\calG$ and $\calP_\Pi$ as follows:

\begin{lemma}\label{lem:strat_translation}
	The following holds:
	\begin{itemize}[leftmargin=*]
		\item 	For any $\Delta : [2m] \to \strats{\calP_\Pi}$ there exists a $\widetilde{\Delta} : [2m] \to \strats{\calG}$ such that 
		\begin{align*}
			& \mathit{zip}\Big(\Pi\big[\pi \mapsto \play_\calG\big(\dot{s}, \prod_{\agent{i} \in \agents} \widetilde{\Delta}({\bX(\agent{i})})\big)\big] \Big) \in \calL(\calA).
		\end{align*}
		if and only if $\mathit{even}(\play_{\calP_\Pi}(v_0, \Delta))$.
		
		\item For any $\Delta : [2m] \to \strats{\calG}$ there exists a $\widetilde{\Delta} : [2m] \to \strats{\calP_\Pi}$ such that 
		\begin{align*}
			& \mathit{zip}\Big(\Pi\big[\pi \mapsto \play_\calG\big(\dot{s}, \prod_{\agent{i} \in \agents} \Delta({\bX(\agent{i})})\big)\big] \Big) \in \calL(\calA).
		\end{align*}
		if and only if $\mathit{even}(\play_{\calP_\Pi}(v_0, \widetilde{\Delta}))$.
	\end{itemize}
\end{lemma}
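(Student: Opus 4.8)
The plan is to set up a tight correspondence between the play prefixes of $\calP_\Pi$ and the path prefixes of $\calG$ starting at $\dot{s}$, and then transport strategies across it in both directions. The decisive observation is that, for the \emph{fixed} assignment $\Pi$, the $Q$- and $\nat$-components of any reachable vertex of $\calP_\Pi$ are functionally determined by the sequence of $S$-components seen so far. Concretely, I would define a map $\rho$ sending a path prefix $s_0 \cdots s_j \in S^+$ with $s_0 = \dot{s}$ to the vertex sequence $(q_0, s_0, 0)(q_1, s_1, 1) \cdots (q_j, s_j, j)$, where the counter is just the index and $q_{i+1} := \delta(q_i, \mathit{zip}(\Pi)(i)[\pi \mapsto s_i])$ with $q_0$ the initial state of $\calA_\mathit{det}$. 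Since the transition function $\mu$ of $\calP_\Pi$ increments the counter and updates the $Q$-component by exactly this rule, $\rho$ is a bijection between the $\calG$-path-prefixes from $\dot{s}$ and the reachable play prefixes of $\calP_\Pi$, with inverse the projection onto the middle ($S$-)component.

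For the first bullet, given $\Delta : [2m] \to \strats{\calP_\Pi}$, I would define $\widetilde{\Delta}(l)(\sigma) := \Delta(l)(\rho(\sigma))$ for every $l \in [2m]$ and every $\sigma \in S^+$ starting at $\dot{s}$ (and arbitrarily otherwise, as such prefixes are never queried along the relevant play). A straightforward induction on the step index then shows that $\play_{\calP_\Pi}(v_0, \Delta)$ and $p := \play_\calG(\dot{s}, \prod_{\agent{i} \in \agents} \widetilde{\Delta}(\bX(\agent{i})))$ agree under $\rho$: at step $N$ each player $l$ plays $\Delta(l)(\rho(p[0,N])) = \widetilde{\Delta}(l)(p[0,N])$, so the $S$-component of the CPG play is exactly $p$, and the action each agent $\agent{i}$ contributes is $\widetilde{\Delta}(\bX(\agent{i}))(p[0,N])$, matching the update $\kappa(s, \prod_{\agent{i} \in \agents} \bA(\bX(\agent{i})))$ in $\mu$.

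The core of the argument is then to identify the colour sequence of the CPG play with the run of $\calA_\mathit{det}$. Here I would use the letter identity $\mathit{zip}(\Pi)(N)[\pi \mapsto p(N)] = \mathit{zip}(\Pi[\pi \mapsto p])(N)$ (both send each $\pi' \in V$ to $\Pi(\pi')(N)$ and $\pi$ to $p(N)$), which shows that the $Q$-component of $\play_{\calP_\Pi}(v_0, \Delta)$ traces precisely the unique run of the deterministic automaton $\calA_\mathit{det}$ on $\mathit{zip}(\Pi[\pi \mapsto p])$. Because $c'(q,s,N) = c(q)$, the CPG play is even iff this run is even, which—using determinism of $\calA_\mathit{det}$ together with $\calL(\calA_\mathit{det}) = \calL(\calA)$ from \Cref{prop:toDPA}—holds iff $\mathit{zip}(\Pi[\pi \mapsto p]) \in \calL(\calA)$. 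This is exactly the stated equivalence.

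For the second bullet I would run the same argument in reverse: given $\Delta : [2m] \to \strats{\calG}$, set $\widetilde{\Delta}(l)(w) := \Delta(l)(\mathit{proj}_S(w))$ for play prefixes $w \in (Q \times S \times \nat)^+$, where $\mathit{proj}_S$ extracts the sequence of $S$-components; the identical play-correspondence and run-identification steps then yield the claimed equivalence. I expect the only genuine care to lie in establishing the determinism of the $Q$- and $\nat$-components (so that the information a $\calG$-strategy sees suffices to reconstruct the full CPG history and conversely), and in the letter identity above; both are routine but underpin the entire correspondence and must be stated precisely so that the strategy transfers are well defined and the colour sequences match verbatim.
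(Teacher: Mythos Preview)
Your proposal is correct and follows essentially the same approach as the paper's proof: the paper also reconstructs the $Q$- and $\nat$-components of a $\calP_\Pi$-prefix deterministically from the $S$-sequence (your map $\rho$), defines $\widetilde{f}(u) := f(\rho(u))$ for the first direction and $\widetilde{f}(w) := f(\mathit{proj}_S(w))$ for the second, and then observes that the $Q$-component of the CPG play is precisely the run of $\calA_\mathit{det}$ on $\mathit{zip}(\Pi[\pi \mapsto p])$. Your write-up is in fact a bit more explicit than the paper's (naming $\rho$, stating the letter identity, and isolating the bijection), but the underlying construction and argument are identical.
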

\begin{proof}
	We show both claims separately.
	
	\begin{itemize}[leftmargin=*]
		\item We translate strategies in $\calP_\Pi$ to strategies in $\calG$.
		Let $f : V^+ \to \moves$ be some strategy in $\calP_\Pi$.
		We define $\widetilde{f} : S^+ \to \moves$ as follows:
		Let $u\in S^+$ be given.
		Define $\bQ \in Q^+$ as follows:
		We define $\bQ(0) = q_0$ (the initial state of $\calA_\mathit{det}$).
		For $0 \leq i < |u|$ we then define $\bQ(i+1) := \delta(\bQ(i), \mathit{zip}(\Pi)(i)[\pi \mapsto u(i)])$. 	
		The sequence $\bQ$ thus gives the unique state sequence in $\calA_\mathit{det}$ when reading the first $|u|$ states of $\mathit{zip}(\Pi)$ extended with $u$.
		Now consider the finite play in $\calP_\Pi$ 
		\begin{align*}
			\tau = \big(\bQ(0), u(0), 0\big) \cdots \big(\bQ(|u|-1), u(|u|-1), |u|-1\big).
		\end{align*}
		The intuition is that $\tau$ corresponds to the unique path in $\calP_\Pi$ that, when projected onto the system state, gives $u$.
		We define $\widetilde{f}(u) := f(\tau)$. 
		
		For any strategy assignment $\Delta : [2m] \to \strats{\calP_\Pi}$ in $\calP_\Pi$ we define $\widetilde{\Delta} : [2m] \to \strats{\calG}$ as the assignment obtained by applying~$\widetilde{\cdot}$~point wise. 
		
		It is now easy to see that $\play_\calG\big(\dot{s}, \prod_{\agent{i} \in \agents} \widetilde{\Delta}({\bX(\agent{i})})\Big)) \in S^\omega$ equals $\play_{\calP_\Pi}(v_0, \Delta)$ (when projecting vertices in $Q \times S \times \nat$ on $S$). 
		By construction, the automaton component in each vertex of $\calP_\Pi$ simply simulates $\calA_\mathit{det}$ on the generated sequence of system state and thus accepts play iff the corresponding automaton sequence is accepting. 
		We thus get that the 
		\begin{align*}
			& \mathit{zip}\Big(\Pi\big[\pi \mapsto \play_\calG\big(\dot{s}, \prod_{\agent{i} \in \agents} \widetilde{\Delta}({\bX(\agent{i})})\big)\big] \Big) \in \calL(\calA) = \calL(\calA_\mathit{det})
		\end{align*}
		if and only if $\mathit{even}(\play_{\calP_\Pi}(v_0, \Delta))$, as required.
		
		\item 
		Let $f : S^+ \to \moves$ be a strategy in $\calG$. 
		We define $\widetilde{f} : V^+ \to \moves$ in $\calP_\Pi$ as follows:
		For any $u \in V^+$ let $\bS \in S^+$ be the projection on the system state in $u$.
		We define $\widetilde{f}(u) := f(\bS)$, i.e., only query $f$ on the sequences of system states.
		For any strategy assignment $\Delta : [2m] \to \strats{\calG}$ in $\calP_\Pi$ we define $\widetilde{\Delta} : [2m] \to \strats{\calP_\Pi}$ as the assignment obtained by applying~$\widetilde{\cdot}$~point wise. 
		As in the first case, it is easy to see that $\play_\calG\big(\dot{s}, \prod_{\agent{i} \in \agents} \Delta({\bX(\agent{i})})\Big)) \in S^\omega$ is equal to $\play_{\calP_\Pi}(v_0, \widetilde{\Delta})$ (when projecting vertices in $Q \times S \times \nat$ on $S$). 
		As $\calP_\Pi$ simulates $\calA_\mathit{det}$ on the sequence of system states we thus get
		\begin{align*}
			& \mathit{zip}\Big(\Pi\big[\pi \mapsto \play_\calG\big(\dot{s}, \prod_{\agent{i} \in \agents} {\Delta}({\bX(\agent{i})})\big)\big] \Big) \in \calL(\calA).
		\end{align*}
		if and only if $\mathit{even}(\play_{\calP_\Pi}(v_0, \widetilde{\Delta}))$, as required.\qedhere
	\end{itemize}
\end{proof}

\begin{proposition}\label{lem:first_direction}
	$\calP_\Pi$ is won by the existential team if and only if
	\begin{align}\label{eq:statement}
		&\widetilde{\flat}\ldot  \mathit{zip}\Big(\Pi\big[\pi \mapsto \play_\calG\big(\dot{s}, \prod_{\agent{i} \in \agents} f_{\bX(\agent{i})}\big)\big] \Big)\in \calL(\calA).
	\end{align}
\end{proposition}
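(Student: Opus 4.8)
The plan is to read both sides of the equivalence as the \emph{same} alternating quantifier game played over two representations of the same plays, transported between representations by the pointwise strategy translations of \Cref{lem:strat_translation}. Indeed, ``$\calP_\Pi$ is won by the existential team'' (\Cref{def:cpgSem}) is the statement $\forall f_1 \exists f_2 \cdots \exists f_{2m}$ ranging over $\strats{\calP_\Pi}$ with winning condition $\mathit{even}(\play_{\calP_\Pi}(v_0, \prod_{l \in [2m]} f_l))$, whereas Eq.~(\ref{eq:statement}) is the statement with the identical prefix $\widetilde{\flat} = \forall f_1 \exists f_2 \cdots \exists f_{2m}$ ranging over $\strats{\calG}$, whose winning condition is that the induced word lies in $\calL(\calA)$. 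Since the two prefixes agree quantifier-for-quantifier, it suffices to transport a winning play between the two games.

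First I would isolate the two pointwise maps provided by \Cref{lem:strat_translation}: the map $t_\calP : \strats{\calP_\Pi} \to \strats{\calG}$ of its first bullet and the map $t_\calG : \strats{\calG} \to \strats{\calP_\Pi}$ of its second. For the left-to-right direction I let the existential team play the $\calG$-game by mirroring a fixed $\calP_\Pi$-winning strategy: whenever the universal team commits a $\calG$-strategy $f_{2j-1}$, I push it through $t_\calG$ into $\strats{\calP_\Pi}$, feed it to the $\calP_\Pi$-winning strategy to obtain a response $g_{2j} \in \strats{\calP_\Pi}$, and answer in the $\calG$-game with $t_\calP(g_{2j})$. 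The right-to-left direction is symmetric, with the roles of $t_\calG$ and $t_\calP$ exchanged.

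The step I expect to be the main obstacle is that this mirroring yields a \emph{mixed} profile: the $\calP_\Pi$-profile certified even by the winning assumption is $(t_\calG(f_1), g_2, t_\calG(f_3), g_4, \dots)$, while the $\calG$-profile I must certify accepting is $(f_1, t_\calP(g_2), f_3, t_\calP(g_4), \dots)$, so neither bullet of \Cref{lem:strat_translation}, each assuming a uniformly translated profile, applies verbatim. I would close this gap by extracting from the explicit definitions of $t_\calP$ and $t_\calG$ the lock-step invariant that these two profiles generate corresponding plays: by induction on the step count $N$, the system-state component of the $\calP_\Pi$-play equals the $\calG$-play, and the $\calP_\Pi$-play prefix is exactly the vertex sequence $\tau$ associated with that $\calG$-prefix. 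The inductive step works because at a common prefix each universal player's action agrees ($t_\calG(f)$ queries $f$ on the projection of the $\calP_\Pi$-prefix, which is the $\calG$-prefix) and each existential player's action agrees ($t_\calP(g)$ queries $g$ on the vertex sequence $\tau$ associated with the $\calG$-prefix, which is the current $\calP_\Pi$-prefix), so $\kappa$ and $\mu$ advance in tandem.

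Given the lock-step invariant the two winning conditions coincide, which finishes the argument. The automaton component of $\calP_\Pi$ merely runs $\calA_\mathit{det}$ on $\mathit{zip}(\Pi)$ extended at $\pi$ by the shared system-state sequence, i.e.\ on exactly the word whose membership in $\calL(\calA) = \calL(\calA_\mathit{det})$ (using \Cref{prop:toDPA}) defines the $\calG$-winning condition, and the colorings match because $c'(q,s,N) = c(q)$. Hence $\mathit{even}$ of the $\calP_\Pi$-play is equivalent to acceptance of the $\calG$-word; this discharges the left-to-right direction, and swapping the roles of the two games yields the converse, establishing the claimed equivalence.
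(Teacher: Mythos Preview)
Your proposal is correct and follows the same approach as the paper: both use the pointwise strategy translations of \Cref{lem:strat_translation} to transport the alternating game between $\strats{\calP_\Pi}$ and $\strats{\calG}$, exploiting that the two quantifier prefixes coincide. The paper's proof is considerably terser and does not spell out the mixed-profile issue you flag; your lock-step invariant is precisely the detail that makes the argument rigorous, and it is implicit in (indeed essentially re-derives the core of) the proof of \Cref{lem:strat_translation}.
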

\begin{proof}
	In $\calP_\Pi$ we quantify over strategies in $\calP_\Pi$ for each player in $[2m]$ (cf.~\Cref{def:cpgSem}).
	Conversely, Eq.~(\ref{eq:statement}) uses the same quantitation order (in the prefix $\widetilde{\flat}$) but quantifies over strategies in $\calG$ via strategy variables $1, \ldots, 2m$ (Eq.~\ref{eq:strat_prefix}).
	Using the translation in \Cref{lem:strat_translation} we can translate any strategy assignment in $\calP_\Pi$ to an equivalent one in $\calG$, and vice versa.
	As the type (i.e., universal or existential) and order of quantification is $\calP_\Pi$ and Eq.~(\ref{eq:statement}) is the same (cf.~Eq.~\ref{eq:strat_prefix}) and we can translate strategies between $\calG$ and $\calP_\Pi$, the result follows. 
\end{proof}

\subsection{Positional Determinacy}

\newcommand{\comStrat}[1]{\mathit{com}(#1)}

The more challenging direction is to show that $\calP_\Pi$ is won by the existential team iff $\calB$ accepts $\mathit{zip}(\Pi)$.
The key challenge is that the way strategies are quantified is fundamentally different: 
In  $\calP_\Pi$ we quantify over full strategies in advance (i.e., ``outside'') and in $\calB$ we (conjunctively or disjunctively) pick actions in each step of the automaton (i.e., ``inside'').
The key ingredient we use is the \emph{positional determinacy} of CPGs. 
Intuitively, a positional strategy is one that decides on an action based solely on the current vertex of the game. 

\begin{definition}
	A \emph{positional strategy} in $\calP = (V, v_0, \moves, m, \mu, c)$ is a function $f : V \to \moves$.
	We write $\posController{\calP}$ for the set of all positional strategies in $\calP$.
\end{definition}

Positional determinacy in the context of (classical) turn-based $2$-player parity games means that players can pick an action based solely on the current vertex of the game. 
In contrast, the quantification over strategies in CPGs can have multiple alternations, the strategy for each player thus depends on the current vertex of the game and the action selected by all strategies quantified before it. 
We will represent the existentially quantified strategies using \emph{Skolem functions} (known, e.g., from first-order logic and sometimes called \emph{dependence map} \cite{MogaveroMPV14}) that get the actions selected by strategies quantified earlier as an explicit input.

\begin{definition}
	A \emph{positional $k$-Skolem strategy} CPG $\calP = (V, v_0, \moves, \allowbreak m, \mu, c)$ is a function $\zeta : V \times \moves^k\to \moves$.
	We write $\skoController{\calP, k}$ for the set of positional $k$-Skolem strategies in $\calP$.
\end{definition}

A positional $k$-Skolem strategy $\zeta$ can pick an action based on the current vertex of the game and $k$ actions that have been selected previously. 
The intuition is that the strategy for player (or strategy variable) $2k$ (which is existentially quantified) can observe the actions selected by the $k$-universally quantified strategies before it.

\begin{definition}
	Assume $\bE$ is a function that maps each $k \in [m]$ to a positional $k$-Skolem strategy in $\calP$ and let $\bO : [m] \to \posController{\calP}$ map each each $k \in [m]$ to a positional strategy in $\calP$. 
	We combine $\bE$ and $\bO$ into a mapping $\comStrat{\bE, \bO} : [2m] \to \strats{\calP}$ as follows.
	For a path $u \in V^+$, we define $\mathit{last}(u) \in V$ as the last vertex in $v$.
	For an odd index $2k-1$ we then define $\comStrat{\bE, \bO}(2k-1) : V^+ \to \moves$ as
	\begin{align*}
		\comStrat{\bE, \bO}(2k - 1)(u) &:= \bO(k)\big(\mathit{last}(u)\big)
	\end{align*}
	For an even index $2k$ we define $\comStrat{\bE, \bO}(2k) : V^+ \to \moves$ as 
	\begin{align*}
		\comStrat{\bE, \bO}(2k)(u) := \bE(k)\big(\mathit{last}(u), (&\bO(1)(\mathit{last}(u)), \\
		&\bO(3)(\mathit{last}(u)), \\
		&\ldots,\\
		 &\bO(2k-1)(\mathit{last}(u)))\big).
	\end{align*}
\end{definition}

The idea of $\comStrat{\bE, \bO}$ is to combine the Skolem strategies in $\bE$ and the positional strategies in $\bO$ into a strategy for each player.
For odd player we simply take the strategy given by $\bO$ (applying it to the last vertex in the given sequence).
For even players, we query the Skolem strategy given by $\bE$ and provide it with the actions that all universally quantified (odd) strategies before it have selected. 
We can now state that CPG are positionally determined by making use of Skolem functions:

\begin{proposition}[\cite{MalvoneMS16}]\label{prop:cpgDet}
	Let $\calP = (V, v_0, \moves, m, \mu, c)$  be a CPG. 
	We have that $\calP$ is \emph{won by the existential players} if and only if  
	\begin{align*}
		&\bigexists_{k \in [m]} \zeta_{k} \in  \skoController{\calP, k}\ldot \bigforall_{k \in [m]} f_{k} \in \posController{\calP}\ldot \\
		&\quad\quad\quad\quad \mathit{even}\Big( \play_\calP(v_0, \prod_{k \in [2m]} \comStrat{\bE, \bO}(k)) \Big)
	\end{align*}
	where $\bE (k):= \zeta_k$ and $\bO(k) :=  f_k$ for $k \in [m]$.
\end{proposition}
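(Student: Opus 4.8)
The plan is to reduce the concurrent parity game $\calP$ to an ordinary turn-based two-player parity game $\hat{\calP}$ and then invoke the classical positional determinacy of parity games. Concretely, I would \emph{sequentialize} each round of $\calP$: from a real vertex $v$ the $2m$ actions $a_1, \dots, a_{2m}$ are chosen one at a time, in the quantifier order of \Cref{def:cpgSem}, with the odd moves owned by the universal coalition and the even moves owned by the existential coalition. The vertices of $\hat{\calP}$ are thus the real vertices $V$ together with intermediate configurations $(v, a_1, \dots, a_j)$ recording the actions fixed so far in the current round; after $a_{2m}$ is chosen we move to $\mu(v, \prod_l a_l)$. I would color every real vertex by $c$ and every intermediate configuration by a fresh color strictly larger than every color in $C$, so that along any infinite play the minimal color occurring infinitely often is always contributed by a real vertex. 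Consequently a play of $\hat{\calP}$ is won by the existential coalition exactly when its projection onto $V$ is even in $\calP$.

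Next I would apply the positional determinacy of turn-based parity games (as used in \cite{MalvoneMS16}): $\hat{\calP}$ is determined and the winning coalition has a \emph{memoryless} winning strategy. The key bookkeeping step is to translate such memoryless strategies into the positional forms of \Cref{prop:cpgDet}. A memoryless existential strategy $\sigma$ chooses, at each existential configuration $(v, a_1, \dots, a_{2k-1})$, an action; I would eliminate its dependence on the \emph{existential} actions $a_2, a_4, \dots, a_{2k-2}$ by observing that, within a single round, each such action is itself the deterministic value of $\sigma$ on $v$ and the universal actions chosen before it, so every even action is ultimately a function of $v$ and $a_1, a_3, \dots$. Recomputing them internally turns $\sigma$ into a genuine $k$-Skolem map $\zeta_k \in \skoController{\calP, k}$, i.e.\ an assignment $\bE$. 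Symmetrically, a memoryless universal strategy yields positional maps in $\posController{\calP}$, i.e.\ an assignment $\bO$. A direct induction on rounds then shows that $\play_{\calP}(v_0, \prod_{k \in [2m]} \comStrat{\bE,\bO}(k))$ is the projection of the corresponding play of $\hat{\calP}$, so by the coloring the two have the same parity.

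To close the equivalence I would use that exactly one coalition wins $\hat{\calP}$. If the existential coalition wins, extract $\bE$ as above; since its memoryless strategy defeats \emph{every} universal strategy, it in particular defeats the universal behavior induced by any positional $\bO$, which gives the right-hand side of \Cref{prop:cpgDet}; and feeding, at runtime, the universal actions produced by arbitrary full strategies $f_1, f_3, \dots$ into the Skolem maps produces full existential responses witnessing that $\calP$ is won in the sense of \Cref{def:cpgSem}. If instead the universal coalition wins with a memoryless $\tau$, then for \Cref{def:cpgSem} I would read $\tau$ off as full universal strategies (legitimate, since the nested order lets the universal player react to the earlier existential strategies), defeating every existential response; and for the right-hand side, given any fixed positional $\bE$, I would collapse one round at each $v$ — the existential actions are then determined by $\bE$ together with the universal actions, so $\tau$'s responses become vertex-only maps $\bO$ — and this $\bO$ makes the play odd. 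Hence the right-hand side, winning in the sense of \Cref{def:cpgSem}, and the existential coalition winning $\hat{\calP}$ are pairwise equivalent.

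The main obstacle is precisely the matching of \emph{information sets}: the strategic quantification of \Cref{def:cpgSem} ranges over entire strategies and would, a priori, let a universal player react to the full existential strategies chosen earlier, whereas \Cref{prop:cpgDet} restricts the universal coalition to strategies depending on the current vertex alone and the existential coalition to Skolem maps reading only the universal actions. Showing that these restrictions lose no power — i.e.\ that behavioral, positional play is as strong as whole-strategy play for both coalitions — is exactly what the positional determinacy of $\hat{\calP}$ buys us, but it must be combined with the dependence-elimination arguments above. The delicate point to verify carefully is that the actions recomputed internally (the even actions dropped from $\zeta_k$, and the universal actions collapsed into $\bO$ against a fixed $\bE$) really coincide with the actions played along the unique induced play; once that is checked, the translations are faithful and the equivalence follows.
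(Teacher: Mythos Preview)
Your proposal is correct and follows the standard sequentialization-and-determinacy route. Note, however, that the paper does not actually prove this proposition: it is stated as a citation of \cite{MalvoneMS16} and the only justification given is that it ``follows directly from the construction in \cite[Thm.~4.1]{MalvoneMS16}.'' So there is no in-paper proof to compare against; you have essentially reconstructed the argument that the cited reference presumably carries out, namely unfolding each concurrent round into a chain of $2m$ turn-based moves, invoking memoryless determinacy of the resulting two-player parity game, and then collapsing the memoryless strategies back into positional Skolem maps by recursively substituting the deterministic even (resp.\ odd) choices. Your identification of the delicate point --- that the internally recomputed actions coincide with those on the induced play --- is exactly right, and your treatment of both directions (including the contrapositive via a winning universal strategy) is sound.
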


\Cref{prop:cpgDet} states that instead of following the quantifier prefix as in \Cref{def:cpgSem} we can instead quantify over Skolem functions $\zeta_1, \ldots, \zeta_{m}$ for all existentially quantified variables. 
Put informally, the proposition thus states that existentially quantified strategies only need to know the current vertex and all actions selected by universally quantified strategies quantified before it. 
A proof \Cref{prop:cpgDet} follows directly from the the construction in \cite[Thm.~4.1]{MalvoneMS16}.

\subsection{The Second Implication}

Using \Cref{prop:cpgDet} we can now prove:

\begin{proposition}\label{lem:second_direction}
	$\calP_\Pi$ is won by the existential team if and only if $\mathit{zip}(\Pi) \in \calL(\calB)$.
\end{proposition}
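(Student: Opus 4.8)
The plan is to use the positional determinacy of concurrent parity games (\Cref{prop:cpgDet}) to reformulate ``$\calP_\Pi$ is won by the existential team'' as the existence of positional Skolem strategies, and then to match this characterization step-by-step with the DAG-based acceptance of the alternating automaton $\calB$ on the \emph{fixed} input $\mathit{zip}(\Pi)$. The crucial structural observation is that, since $\calB$ reads the fixed word $\mathit{zip}(\Pi)$, its run DAG has nodes of the form $((q,s),N) \in (Q \times S) \times \nat$, which are in bijection with the vertices $(q,s,N)$ of $\calP_\Pi$; moreover both automaton and game colour a configuration by $c(q)$, so the parity/acceptance conditions coincide along corresponding sequences. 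At a node $((q,s),N)$ the transition formula of $\calB$ (reading $\mathit{zip}(\Pi)(N)$) is the strictly alternating $\bigwedge_{a_1}\bigvee_{a_2}\cdots\bigvee_{a_{2m}}$ of \Cref{alg:product}, whose successor $\big(\delta(q,\mathit{zip}(\Pi)(N)[\pi\mapsto s]),\kappa(s,\prod_{\agent{i}\in\agents} a_{\bX(\agent{i})})\big)$ agrees with the transition $\mu$ of $\calP_\Pi$ (up to the depth increment handled by the DAG level). Resolving this boolean formula (choosing disjuncts, keeping all conjuncts) is therefore the same combinatorial object as a positional Skolem choice at the vertex $(q,s,N)$.

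For the direction from winning to acceptance, I would start from \Cref{prop:cpgDet} to obtain positional Skolem strategies $\zeta_1,\dots,\zeta_m$ (with $\zeta_k \in \skoController{\calP_\Pi,k}$) that beat every tuple of positional strategies $f_1,\dots,f_m \in \posController{\calP_\Pi}$. I would build a run DAG of $\calB$ by resolving, at each node $(q,s,N)$, the existential choice $a_{2k}:=\zeta_k\big((q,s,N),(a_1,a_3,\dots,a_{2k-1})\big)$ for every tuple of universal actions $(a_1,a_3,\dots,a_{2m-1})$; the resulting successor set satisfies the transition formula precisely because Skolemizing $\bigwedge_{a_1}\bigvee_{a_2}\cdots$ yields a satisfying assignment. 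It then remains to check acceptance: any infinite path of this DAG is a sequence $v_0,v_1,\dots$ in which the universal choices made at $v_N$ can be packaged into positional strategies $f_k$ with $f_k(v_N)=a_{2k-1}^{(N)}$ --- this is well defined because the depth component $N$ makes all vertices on a single path pairwise distinct --- and with these $f_k$ the play $\play_{\calP_\Pi}(v_0,\comStrat{\bE,\bO})$ (with $\bE(k)=\zeta_k$, $\bO(k)=f_k$) is exactly the chosen path. By assumption this play is even, so every infinite path of the DAG is even and $\mathit{zip}(\Pi)\in\calL(\calB)$.

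For the converse, I would take an accepting run DAG of $\calB$ on $\mathit{zip}(\Pi)$ and read off, at each node $(q,s,N)$, the existential choice functions used to satisfy $\bigwedge_{a_1}\bigvee_{a_2}\cdots$, i.e.\ maps $a_{2k}=g_k^{(q,s,N)}(a_1,a_3,\dots,a_{2k-1})$; setting $\zeta_k(v,\cdot):=g_k^{v}(\cdot)$ defines positional Skolem strategies, well defined since each $(q,s,N)$ occurs as a single DAG node. For an arbitrary tuple of positional universal strategies $f_1,\dots,f_m$, the induced play $\play_{\calP_\Pi}(v_0,\comStrat{\bE,\bO})$ selects, at each step, the successor chosen in the DAG for the universal actions $f_k(v_N)$, hence is an infinite path of the accepting DAG and therefore even. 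As this holds for all $f_1,\dots,f_m$, \Cref{prop:cpgDet} gives that $\calP_\Pi$ is won by the existential team.

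The main obstacle is the faithful translation between the ``inside'' per-step resolution of $\calB$'s alternating transition formula and the ``outside'' global quantification over strategies in $\calP_\Pi$. This hinges on two points that deserve careful justification: that resolving the alternating formula at a node is literally a positional Skolem choice (so the dependence of existential on universal actions matches $\skoController{\calP_\Pi,k}$), and that the depth counter $N$ guarantees distinct vertices along any single play, which is what lets a DAG path be repackaged as a tuple of positional strategies without conflict. Once these correspondences are set up, transferring the parity condition is immediate because both $\calB$ and $\calP_\Pi$ derive their colour from $c(q)$.
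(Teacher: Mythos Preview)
Your proposal is correct and follows essentially the same approach as the paper: both directions go through \Cref{prop:cpgDet}, exploit the bijection between run-DAG nodes $((q,s),N)$ and game vertices $(q,s,N)$, and translate between resolving the alternating transition formula and choosing positional Skolem actions. Your explicit emphasis on the depth counter $N$ making vertices along a play pairwise distinct (so that a DAG path can be repackaged conflict-free as positional universal strategies) is exactly the point the paper uses implicitly; the only minor detail you gloss over is that in the converse direction $\zeta_k$ must be defined arbitrarily on vertices not appearing in the DAG, which is harmless since such vertices are never reached in any play compatible with the $\zeta_k$.
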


We prove both directions of \Cref{lem:second_direction} separately (in \Cref{lem:second_direction_first,lem:second_direction_second}).

\begin{lemma}\label{lem:second_direction_first}
	If $\calP_\Pi$ is won by the existential team then $\mathit{zip}(\Pi) \in \calL(\calB)$.
\end{lemma}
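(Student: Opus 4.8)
The plan is to turn the winning strategy for $\calP_\Pi$ into an accepting run DAG of $\calB$, using the Skolem form of positional determinacy (\Cref{prop:cpgDet}) as the bridge. The guiding intuition is that the conjunctions $\bigwedge_{a_{2k-1}}$ in the transition formula of $\calB$ correspond to the \emph{universal} (odd) players of $\calP_\Pi$, while the disjunctions $\bigvee_{a_{2k}}$ correspond to the \emph{existential} (even) players; a winning Skolem strategy for the existential team is precisely what is needed to resolve those disjunctions when building the run.

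First I would invoke \Cref{prop:cpgDet}: since $\calP_\Pi$ is won by the existential team, there exist positional Skolem strategies $\zeta_1, \ldots, \zeta_m$ with $\zeta_k \in \skoController{\calP_\Pi, k}$ such that, setting $\bE(k) := \zeta_k$, for \emph{every} positional profile $\bO : [m] \to \posController{\calP_\Pi}$ the play $\play_{\calP_\Pi}\big(v_0, \prod_{k \in [2m]} \comStrat{\bE, \bO}(k)\big)$ is even. I would then build a run DAG of $\calB$ on $\mathit{zip}(\Pi)$ level by level, identifying a DAG node $((q,s), N)$ at depth $N$ with the $\calP_\Pi$-vertex $(q,s,N)$ (so the counter $N$ matches the depth). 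For such a node and each tuple of universal actions $(a_1, a_3, \ldots, a_{2m-1}) \in \moves^m$, I let the Skolem strategies fix the existential actions $a_{2k} := \zeta_k\big((q,s,N), (a_1, a_3, \ldots, a_{2k-1})\big)$, yielding a full tuple and hence the successor $\big((\delta(q, \mathit{zip}(\Pi)(N)[\pi \mapsto s]),\, \kappa(s, \prod_{\agent{i} \in \agents} a_{\bX(\agent{i})})), N+1\big)$. Taking all such successors as $\mathit{sucs}((q,s),N)$, a routine induction on the structure of $\delta'((q,s), \mathit{zip}(\Pi)(N))$ shows this set satisfies the transition formula (each $\bigwedge_{a_{2k-1}}$ is covered since we range over all universal actions, and each $\bigvee_{a_{2k}}$ is witnessed by the Skolem choice), so the construction is a valid run DAG.

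It remains to show the run DAG is accepting. I would take an arbitrary infinite path $((q_0,\dot s),0), ((q_1,s_1),1), ((q_2,s_2),2), \ldots$ through it. Along this single path each step used some universal tuple at vertex $(q_N, s_N, N)$; because the counter strictly increases, these vertices are pairwise distinct, so I can define a positional profile $\bO$ for the universal players that replays exactly those actions at the visited vertices (arbitrary elsewhere). By construction, the play $\play_{\calP_\Pi}\big(v_0, \prod_{k \in [2m]} \comStrat{\bE, \bO}(k)\big)$ then follows this path, and \Cref{prop:cpgDet} certifies it is even. Since $c'(q,s) = c(q) = c'(q,s,N)$, the path is even as a branch of the DAG as well. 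As the path was arbitrary, every infinite branch is even, the DAG is accepting, and therefore $\mathit{zip}(\Pi) \in \calL(\calB)$.

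The main obstacle is reconciling the two styles of quantification: in $\calP_\Pi$ strategies are fixed globally ``outside'' the game, whereas a run DAG resolves existential choices ``inside'', one step at a time and only against the universal branches already spawned. Bridging this is exactly what the Skolem (dependence-map) form of positional determinacy supplies. The key enabling observation, which I expect to need care in writing, is that the step counter $N$ keeps the on-path vertices of $\calP_\Pi$ distinct, which is precisely what lets me recover a legitimate \emph{positional} universal strategy from a single DAG branch and thus apply the winning condition branch by branch.
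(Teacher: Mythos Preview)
Your proposal is correct and follows essentially the same approach as the paper: invoke \Cref{prop:cpgDet} to obtain positional Skolem strategies, use them to resolve the disjunctions when building the run DAG (adding successors for every universal tuple), and for acceptance recover from each DAG branch a positional universal profile whose induced play in $\calP_\Pi$ coincides with that branch. Your explicit observation that the counter $N$ makes on-path vertices pairwise distinct---and that this is what licenses defining $\bO$ positionally from a single branch---is a point the paper leaves implicit, so your write-up is actually a bit sharper there.
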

\begin{proof}
	Assume that $\calP_\Pi$ is won by the existential player.
	Using \Cref{prop:cpgDet} we thus get positional sklolem functions $\zeta_1,  \ldots, \zeta_{m}$ such that for 
	\begin{align}\label{eq:choiceOfSkolem}
		\begin{split}
			&\bigforall_{k \in [m]} f_{k} \in \posController{\calP}\ldot \mathit{even}\Big( \play_{\calP_\Pi}(v_0,  \prod_{k \in [2m]} \comStrat{\bE, \bO}(k) \Big)
		\end{split}
	\end{align}
	where $\bE(k) := \zeta_k$ and $\bO(k) := f_k$ for $k \in [m]$.
	
	We use the Skolem functions to construct an accepting run DAG $\mathbb{D}$ of $\calB$ on $\mathit{zip}(\Pi)$. 
	We construct $\mathbb{D}$ iteratively.
	We begin with a DAG that consist of the single node $((q_0, \dot{s}), 0)$, i.e., we start with the unique initial state of $\calB$ (note that by definition of run DAGs, nodes are indexed by their current depth, cf.~\Cref{app:prelim}).
	Now assume that there exists some node that we have not visited.
	Let this note be $((q, s), N)$. 
	We observe that $(q, s, N)$ is a vertex in $\calP_{\Pi}$.

	The $N$th input read by $\calB$ on $\mathit{zip}(\Pi)$ is $\mathit{zip}(\Pi)(N)$.
	By definition of $\calB$ the transition function from $(q, s)$ on the $N$th input is
	\begin{align}\label{eq:prefix}
		&\delta'\big((q, s), \mathit{zip}(\Pi)(N) \big) = \bigwedge_{a_1 \in \moves} \bigvee_{a_2 \in \moves}  \cdots \bigvee_{a_{2m} \in \moves} (q', s')
	\end{align}
	where $q' = \delta(q, \mathit{zip}(\Pi)(N)[\pi \mapsto s])$ and $s' = \kappa\big(s, \prod_{\agent{i} \in \agents}  a_{\bX(\agent{i})}  \big)$.
	
	We need to add children of the node $((q, s), N)$ that full the above formula.
	We will construct a set $Y \subseteq Q \times S$ that is a model for the above, i.e., $Y \models \delta'\big((q, s), \mathit{zip}(\Pi)(N) \big)$. 
	We will construct $Y$ by following all the action selection in the prefix of $\delta'\big((q, s), \mathit{zip}(\Pi)(N) \big)$ and construct an intermediate set of functions $Z \subseteq ([2m] \rightharpoonup \moves)$.
	We can think of $Z$ as combinations of actions $a_1, \ldots, a_{2m}$ that we select in the prefix of Eq.~(\ref{eq:prefix}).
	This set $Z$ is constructed in accordance with the Skolem functions $\zeta_1, \ldots, \zeta_m$, i.e., for every disjunctive choice we will selected the action that is picked by the respective Skolem function. 
	Initially, we set $Z = \{\{\}\}$ as the singleton set containing only the empty function $\{\} : [2m] \rightharpoonup \moves$. 
	For every $k$ from $1$ to $2m$ we do the following: 
	If $k$ is odd, so action $a_k$ is chosen conjunctively we add all possible actions. 
	That is, we update $Z := \{ \bA[k \mapsto a] \mid \bA \in Z, a \in \moves \}$.
	If $k$ is even, so $k = 2k'$ for some $k'$, we can pick one action for $k$ for each element in $Z$.  
	Given $\bA \in Z$ (so $\bA$ is a function $\{1, \ldots, k-1\} \to \moves$) we define the action $a_\bA$ as
	\begin{align*}
		a_\bA := \zeta_{k'}\big((q, s, N), (\bA(1), \bA(3), \ldots, \bA(2k'-1) )\big).
	\end{align*}
	That is, we use the $k'$th Skolem functions and query it with the current vertex $(q, s, N)$ and the action selected  by all previously chosen conjunctive action choices (at odd positions) in $\bA$.
	We then update $Z := \{ \bA[k \mapsto a_\bA ] \mid \bA \in Z\}$.
	
	After repeating this procedure, we have constructed a set $Z \subseteq ([2m] \to \moves)$ that maps all players to actions. 
	Informally, this corresponds to a possible assignment of the actions selected in the prefix of $\delta'\big((q, s), \mathit{zip}(\Pi)(N) \big)$.
	For each $\bA \in Z$ we can define a unique $s_\bA$ by following the construction in the definition of $\calB$.
	That is, we define $s_\bA = \kappa\big(s, \prod_{\agent{i} \in \agents}  \bA({\bX(\agent{i})})  \big)$.
	We then define 
	\begin{align*}
		Y := \Big\{ \big( \delta(q, \mathit{zip}(\Pi)(N)[\pi \mapsto s]), s_\bA\big) \mid \bA\in Z \Big\}.
	\end{align*}
	It is easy to see that $Y \models \delta'\big((q, s), \mathit{zip}(\Pi)(N) \big)$: In the selection of actions (when constructing $Z$) we have consider all possible actions for each conjunctive choice and picked a particular action for each disjunctive choice. 
	
	In our run DAG we now add a node $((q', s'),N+1)$ for each $(q', s') \in Y$, and add an edge from $((q, s), N)$ to $((q', s'),N+1)$.
	
	Let $\mathbb{D}$ be the infinite DAG obtained by following this construction. 
	It is easy to see that $\mathbb{D}$ is a valid run of $\calB$ on $\mathit{zip}(\Pi)$ (as we have argued above all children that we add to any given node satisfy the transition formula).  
	
	It remains to argue that $\mathbb{D}$ is accepting. 
	For this we consider an arbitrarily infinite path $\tau$ in $\mathbb{D}$, s.t., 
	\begin{align*}
		\tau = ((q_0, s_0), 0) ((q_1, s_1), 1) ((q_2, s_2), 2) \cdots
	\end{align*}
	We define an analogous path in $\calP_\Pi$ as follows (by simply regrouping parenthesis):
	\begin{align*}
		\tau' = (q_0, s_0, 0) (q_1, s_1, 1) (q_2, s_2, 2) \cdots
	\end{align*}
	In construction $\mathbb{D}$, we always mimicked the choice for each distinctively chosen action by using the Skolem functions $\zeta_1, \ldots, \zeta_{m}$.
	We thus get that $\tau'$ is a play in $\calP_{\Pi}$ under these Skolem functions:
	That is, there exist $f_1, f_2, \ldots, f_m \in \posController{\calP_\Pi}$ such that 
	\begin{align*}
		\play_{\calP_\Pi}(v_0, \prod_{k \in [2m]} \comStrat{\bE,\bO}(k)) = \tau'
	\end{align*}
	where $\bE(k) := \zeta_k$ and $\bO(k) :=  f_k$ for $k \in [m]$.
	By the choice of $\zeta_1, \ldots, \zeta_{m}$ (cf.~Eq.~(\ref{eq:choiceOfSkolem})) we thus get that $\mathit{even}(\tau')$.
	
	Now the sequence of colors traversed in $\tau'$ is the same as the sequence of colors in the path $\tau$ in the run DAG. 
	The minimal color that appears infinitely often in $\tau$ is thus also even and so the path is accepting. 
	As this holds for all paths, $\mathbb{D}$ is accepting. 
	
	We have constructed an accepting run DAG of $\calB$ on $\mathit{zip}(\Pi)$, so $\mathit{zip}(\Pi) \in \calL(\calB)$ as required. 
\end{proof}

\begin{lemma}\label{lem:second_direction_second}
	If $\mathit{zip}(\Pi) \in \calL(\calB)$ then $\calP_{\Pi}$ is won by the existential team.
\end{lemma}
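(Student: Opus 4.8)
The plan is to run the argument of \Cref{lem:second_direction_first} in reverse: from an accepting run DAG of $\calB$ we extract positional Skolem strategies and then invoke the positional determinacy of concurrent parity games (\Cref{prop:cpgDet}) to conclude that $\calP_\Pi$ is won by the existential team. First I would fix an accepting run DAG $\mathbb{D} = (V_\mathbb{D}, E_\mathbb{D})$ of $\calB$ on $\mathit{zip}(\Pi)$, whose nodes are pairs $((q,s),N) \in (Q \times S) \times \nat$. The crucial observation is that each such node corresponds to a vertex $(q,s,N)$ of $\calP_\Pi$, and that the colors agree: $c'(q,s) = c(q)$ in $\calB$ equals $c'(q,s,N) = c(q)$ in $\calP_\Pi$. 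At each node $((q,s),N)$ of $\mathbb{D}$, the state-components of its children form a set $Y \subseteq Q \times S$ satisfying the transition formula $\delta'((q,s),\mathit{zip}(\Pi)(N)) = \bigwedge_{a_1 \in \moves}\bigvee_{a_2 \in \moves}\cdots\bigvee_{a_{2m} \in \moves}(q',s')$. Because this formula alternates conjunctions over odd (universal) actions and disjunctions over even (existential) actions, a satisfying $Y$ witnesses exactly a Skolem-style choice function: for every tuple of universal actions $(a_1,a_3,\ldots,a_{2k-1})$ there is an even action $a_{2k}$ depending on them such that every resulting leaf lands in $Y$. I would read off from this witness, for each node $((q,s),N)$ and each $k \in [m]$, the value $\zeta_k((q,s,N),(a_1,a_3,\ldots,a_{2k-1}))$, and set $\zeta_k$ arbitrarily on vertices not appearing in $\mathbb{D}$. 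This yields positional Skolem strategies $\zeta_1,\ldots,\zeta_m$ with $\zeta_k \in \skoController{\calP_\Pi,k}$.

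Next I would verify that these Skolem strategies win, i.e., that for all positional universal strategies $f_1,\ldots,f_m \in \posController{\calP_\Pi}$ the play $\play_{\calP_\Pi}(v_0,\prod_{k\in[2m]}\comStrat{\bE,\bO}(k))$ (with $\bE(k) := \zeta_k$ and $\bO(k) := f_k$) is even. The key lemma is an invariant proved by induction on the depth $N$: the prefix of this play up to step $N$ reaches a vertex $(q_N,s_N,N)$ whose associated node $((q_N,s_N),N)$ lies in $\mathbb{D}$. The base case holds since $v_0 = (q_0,\dot{s},0)$ and $((q_0,\dot{s}),0)$ is the root of $\mathbb{D}$. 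For the step, the odd players play $a_{2k-1} = f_k((q_N,s_N,N))$ and the even players play $a_{2k} = \zeta_k((q_N,s_N,N),(a_1,a_3,\ldots,a_{2k-1}))$; since $\zeta_k$ was extracted from the witnessing choice function at node $((q_N,s_N),N)$, this combination of actions leads to a leaf of the transition formula that lies in the child set $Y$, so the successor $(q_{N+1},s_{N+1},N+1)$ has its node in $\mathbb{D}$ with an edge from $((q_N,s_N),N)$. Thus the whole play projects onto an infinite path $\tau$ of $\mathbb{D}$.

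Finally, since $\mathbb{D}$ is accepting, every infinite path in it satisfies the parity condition, so the minimal color occurring infinitely often along $\tau$ is even; as the colors of $\calB$ and $\calP_\Pi$ coincide, this gives $\mathit{even}(\play_{\calP_\Pi}(v_0,\prod_{k\in[2m]}\comStrat{\bE,\bO}(k)))$. As the universal strategies $f_1,\ldots,f_m$ were arbitrary, \Cref{prop:cpgDet} yields that $\calP_\Pi$ is won by the existential team, completing the proof. The hard part will be making the extraction of the Skolem functions from the nested conjunction/disjunction of $\delta'$ fully precise: one must peel the quantifier alternation layer by layer and check that the dependency of each even action on the preceding odd actions matches exactly the signature $V \times \moves^k \to \moves$ of a positional $k$-Skolem strategy, together with the bookkeeping in the inductive invariant ensuring that the extracted choices keep the play inside $\mathbb{D}$ regardless of the universal strategies.
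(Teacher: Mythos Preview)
Your proposal is correct and follows essentially the same approach as the paper: fix an accepting run DAG $\mathbb{D}$ of $\calB$ on $\mathit{zip}(\Pi)$, extract from each node $((q,s),N)$ the local Skolem choices $\alpha^x_k$ by peeling the alternating $\bigwedge/\bigvee$ prefix of $\delta'((q,s),\mathit{zip}(\Pi)(N))$ layer by layer, assemble these into positional $k$-Skolem strategies $\zeta_k$ for $\calP_\Pi$ (arbitrary off $\mathbb{D}$), then show by induction on depth that any play against these $\zeta_k$ stays inside $\mathbb{D}$ and hence is even, and conclude via \Cref{prop:cpgDet}. Your identification of the ``hard part'' --- making the layer-by-layer extraction of the $\zeta_k$ precise so that each even choice depends only on the preceding odd choices --- is exactly the detail the paper spells out with its iterative construction of the $\alpha^x_1,\ldots,\alpha^x_m$.
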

\begin{proof}
	Assume that $\mathit{zip}(\Pi) \in \calL(\calB)$, and let $\mathbb{D}$ be an accepting run DAG of $\calB$.
	We will use this DAG to construct Skolem functions $\zeta_1, \ldots, \zeta_{m}$ such that 
	\begin{align}\label{eq:SkolemTarget}
		\begin{split}
			&\bigforall_{k \in [m]} \!\! f_{k} \in \posController{\calP}\ldot \mathit{even}\Big( \play_{\calP_\Pi}(v_0, \prod_{k \in [2m]} \comStrat{\bE, \bO}(k)) \Big)
		\end{split}
	\end{align}
	where $\bE(k) :=\zeta_k$ and $\bO(k) := f_k$ for $k \in [m]$.
	By \Cref{prop:cpgDet} this would imply that $\calP_{\Pi}$ is won by the existential team.

	For any node $x = ((q, s),N)$ in the run DAG $\mathbb{D}$, the children of $((q, s),N)$ satisfy 
	\begin{align}\label{eq:prefix2}
		&\delta'\big((q, s), \mathit{zip}(\Pi)(N) \big) = \bigwedge_{a_1 \in \moves} \bigvee_{a_2 \in \moves}  \cdots \bigvee_{a_{2m} \in \moves} (q', s')
	\end{align}
	where $q' = \delta(q, \mathit{zip}(\Pi)(N)[\pi \mapsto s])$ and $s' = \kappa\big(s, \prod_{\agent{i} \in \agents}  a_{\bX(\agent{i})}  \big)$.
	
	We will construct functions $\alpha^{x}_1, \ldots, \alpha^{x}_m$ where $\alpha^{x}_k : \moves^k \to \moves$ that will pick a choice for each disjunction based on the previous choices for each conjunction. 
	We first define $\alpha^{x}_1 : \moves \to \moves$:
	For any $a^1 \in \moves$ we define $\alpha^{x}_1(a^1)$ by \emph{fixing} the first conjunctively chosen action $a_1$ in Eq.~(\ref{eq:prefix2}) to be $a^1$.
	As $a_1$ was chosen conjunctivley, the children of $x = ((q, s),N)$ still satisfy Eq.~(\ref{eq:prefix2}) with $a_1$ fixed to $a^1$. 
	Now define $a^2 \in \moves$ as some valid choice for the (disjunctively chosen) $a_2$. 
	That is, we define $a^2$ such that the children of $x = ((q, s),N)$ in $\mathbb{D}$ still form a model of Eq.~(\ref{eq:prefix2}) with actions $a_1 := a^1, a_2 := a^2$ fixed.
	We set $\alpha^{x}_1(a^1) := a^2$.
	After having defined $\alpha^{x}_1 $ we can define $\alpha^{x}_2(a^1, a^3)$: We fix the first conjunctively chosen action be $a^1$, the second disjunctive action be $\alpha^{x}_1(a^1)$ and the second conjunctive action be $a^3$, and the define $\alpha^{x}_2(a^1, a^3)$ as some action that we can select for the second disjunctive choice such that the children of $x = ((q, s),N)$ in $\mathbb{D}$ satisfy the subformula with those actions fixed. 
	The construction of the remaining $\alpha^{x}_1, \ldots, \alpha^{x}_m$ is analogous. 
	Intuitively, each $\alpha^{x}_k$ serves as a Skolem function for the $k$th disjunctive action by fixing an action based solely on the earlier conjunctive actions. 
	Together they guarantee that by following the selected action by each Skolem function for each disjunctive choice, we always reach some child of $x = ((q, s),N)$ in $\mathbb{D}$.
	
	We can now define the desired positional Skolem functions $\zeta_1, \ldots, \zeta_{m}$ for $\calP_\Pi$.
	We define $\zeta_k$ as follows: Given any vertex $v = (q, s, N)$ in $\calP_\Pi$ and actions $a_1, \ldots, a_k \in \moves$ (corresponding to the actions selected by all universally quantified strategies before $2k$) we check if $x := ((q, s), N)$ is a node in $\mathbb{D}$.
	If there does not exists such a node we can return an arbitrary action (we will later argue that this situation will never be reached in any play $\calP_{\Pi}$).
	Otherwise $x$ is a node in $\mathbb{D}$ and we get the Skolem functions $\alpha^{x}_1, \ldots, \alpha^{x}_m$ we have constructed earlier. 
	We define $\zeta_k(v, (a^1, \ldots, a^k)) := \alpha^{x}_k(a^1, \ldots, a^k)$, i.e., we select the action that $\alpha_k^{x}$ picks for the $k$th disjunction when using the actions $a^1, \ldots, a^k$ for the previous $k$ conjunctions. 
	
	It remains to argue that the Skolem functions $\zeta_1, \ldots, \zeta_{m}$ fulfill Eq.~(\ref{eq:SkolemTarget}).
	Let $f_1, \ldots. f_{m} \in \posController{\calP_{\Pi}}$ and consider the resulting play 
	\begin{align*}
		\tau &= \play_{\calP_\Pi}\big(v_0, \prod_{k \in [2m]}  \comStrat{\bE, \bO}(k)\big) \\
		&= (q_0, s_0, 0)  (q_1, s_1, 1) (q_2, s_2, 2)  \cdots 
	\end{align*}
	where $\bE(k) := \zeta_k$ and $\bO (k):= f_k$ for $k \in [m]$.
	We consider the equivalent path in the run DAG (by regrouping parenthesis):
	\begin{align*}
		\tau' = ((q_0, s_0), 0)  ((q_1, s_1), 1)  ((q_2, s_2), 2)  \cdots .
	\end{align*} 
	By construction of $\zeta_1, \ldots, \zeta_m$ we always pick the actions that are disjunctively chosen in accordance with $\mathbb{D}$. 
	It is therfore easy to see that $\tau'$ is an infinite path in $\mathbb{D}$.
	By the assumption that  $\mathbb{D}$ is accepting the minimal color that appears infinitely often is even. 
	As the automaton state sequence agrees with that of $\tau$, we thus get that $\mathit{even}(\tau)$ holds. 
	So $\zeta_1, \ldots, \zeta_m$ satisfy Eq.~(\ref{eq:SkolemTarget}). 
	By  \Cref{prop:cpgDet} this implies that $\calP_{\Pi}$ is won by the existential team, as required. 
\end{proof}

By \Cref{lem:first_direction}, we get that $\Pi$ satisfies Eq.~(\ref{eq:lemmaTarget}) iff $\calP_\Pi$ is won by the existential team (\Cref{lem:first_direction}).
By \Cref{lem:second_direction}, $\calP_\Pi$ is won by the existential team iff $\mathit{zip}(\Pi) \in \calL(\calB)$.
Consequently, $\Pi$ satisfies Eq.~(\ref{eq:lemmaTarget}) iff $\mathit{zip}(\Pi) \in \calL(\calB)$, proving \Cref{lem:taregt_for_B} and thus of \Cref{prop:sim}.

\section{Details on the Section \ref{sec:imp}}\label{app:exp}

In this section, we provide additional details on the formulas checked in \Cref{sec:imp}.

\subsection{Details on Section \ref{sec:sub:atls}}

We check the following \HyperSLP{} formula: 
\begin{align*}
	&\exists x\ldot \forall x_1, \ldots, x_n\ldot \left(\bigwedge_{i=1}^n \ltlG \big( \langle \mathit{wt}, i\rangle_\pi \to \ltlF \neg \langle \mathit{wt}, i\rangle_\pi \big)\right)\\
	&\quad\quad\quad\quad\quad\quad\quad\quad\left[\pi: (\agent{sched} \mapsto x, \agent{y_1} \mapsto x_1, \ldots,  \agent{y_n} \mapsto x_n)	\right]
\end{align*}
This formula states that the scheduling agent $\agent{\mathit{sched}}$ has a strategy such that none of the working agents  $\agent{y_1}, \ldots, \agent{y_n}$ starves, i.e., whenever agent $\agent{y_i}$ waits for a grant (modeled by proposition $\langle \mathit{wt}, i\rangle$), it will eventually not wait any more.
Note that this formula is equivalent to the SL[1G] specification used by \citet{CermakLM15}.
We check it for various values of $n \in \nat$ and give the results in \Cref{tab:results-sl}.

\subsection{Details on Section \ref{sec:sub:hyper}}

In \Cref{sec:sub:hyper}, we check random formulas from a range of different families. 
We assume we are given a CGS $\calG$ over atomic proposition $\ap$ and agents $\agents$ (These CGS are obtained automatically from the ISPL models from \cite{LomuscioQR09}).

\paragraph{Security (Sec).}

We select some agent $\agent{i} \in \agents$ and some AP $g \in \ap$ modeling a goal, an AP $h \in \ap$ modelling a high-security input, and an AP $o \in \ap$ modeling an observable output. 
We then construct the following \HyperSLP{} formula:
\begin{align*}
	&\exists x\ldot \forall y_1, \ldots, y_{i-1}, y_{i+1}, \ldots, y_n\ldot \exists z_1, \ldots, z_n\ldot \\
	&\quad\quad(\ltlF g_{\pi} \land \ltlG (o_{\pi} \leftrightarrow o_{\pi'}) \land \ltlF (h_{\pi} \not\leftrightarrow h_{\pi'}))\\
	&\quad\quad\quad\quad\left[\begin{aligned}
		&\pi: (y_1, \ldots, y_{i-1}, x, y_{i+1}, \ldots, y_n)\\
		&\pi':(z_1, \ldots, z_n)
	\end{aligned}	\right]
\end{align*} 
This formula states that $\agent{i}$ has a strategy to eventually reach $g$.
Moreover, it may not leak all information about $h$ via $o$. 
We model this using the idea of non-inference \cite{McLean94}.
The idea is that the behavior in $o$ should not leak $h$, so there must be ``plausible deniability''. 
That is, the same observation via $o$ is also possible for some different input sequence via $h$. 

Concretely, $\agent{i}$ should be able to reach $g$ on path $\pi$ no matter what the other agents play.
In addition, there must exists some path $\pi'$ (which we state by quantifying the strategies of all agents existentially), that has the same observations $\ltlG (o_{\pi} \leftrightarrow o_{\pi'})$ but a different high-security input ($ \ltlF (h_{\pi} \not\leftrightarrow h_{\pi'})$).

\paragraph{Good-Enough Synthesis (GE)}

In many situations, asking for a strategy that wins in all situations is too restrictive. 
Instead, it often suffices to look for strategies that are \emph{good-enough} (GE), i.e., strategies that win on every possible input sequence for which there exists a winning output sequence \cite{AlmagorK20,AminofGR21}. 
We can express this formally using \HyperSLP{}. 
Concretely, assume that $\varphi = \quant_1 x_1, \ldots, \quant_m x_m\ldot \psi[\pi : \bX]$ is any \HyperSL{} formula over a single path variable (or, equivalently, a \SL[1G] formula). 
We want to express that that $\varphi$ only needs to holds on traces with input $i \in \ap$, on which some path with the same input actually satisfies $\psi$. 

We can express that $\varphi$ is a GE-strategy as follows:
\begin{align*}
	&\quant_1 x_1, \ldots, \quant_m x_m\ldot \forall y_1, \ldots, y_m\ldot  \\
	&\quad\quad\quad(\ltlG (i_{\pi} \leftrightarrow i_{\pi'}) \land \psi[\pi'/\pi]) \to \psi
\end{align*} 
where we write $\psi[\pi'/\pi]$ for the formula with all occurrences of $\pi$ replaced with $\pi'$.

In the formula, we quantify over strategies $x_1, \ldots, x_m$ as in $\varphi$ and use these strategies to construct path $\pi$. 
Afterwards, we universally quantify over any path $\pi'$ in the system by picking strategies $y_1, \ldots, y_n$ for all agents. 
We then state that $\psi$ only needs to hold on $\pi$ provided $\pi'$ has the same input and satisfies $\psi$.
Phrased differently, $\pi$ only needs to win, provided some path with the same inputs can ensure $\psi$. 
Note that, depending on the prefix in $\varphi$, this is not expressible in weaker hyperlogics such as \HyperATLS{} and \HyperATLSS{}.

\paragraph{Random (Rnd).}

For the random category, we use a random LTL formula (sampled using \texttt{spot} \cite{Duret-LutzRCRAS22}) and add a prefix of quantifiers to yield a \HyperSLP{} formula.

\fi

\end{document}